 \newif \ifmycolour \mycolourfalse
 \newcommand{\Comment}[1]{}
	\def \green{}
	\def \ltermcol{}
	\def \typecol{}
	\def \blue{}
	\def \black{}
	\def \Purple{}
	\def \lterm{}
\theoremstyle{plain}
\newenvironment{theorem}{\begin{thm}}{\end{thm}}
\newenvironment{corollary}{\begin{cor}}{\end{cor}}
\newenvironment{lemma}{\begin{lem}}{\end{lem}}
\newenvironment{proposition}{\begin{prop}}{\end{prop}}
\newenvironment{example}{\begin{exa}}{\end{exa}}
\newenvironment{definition}{\begin{defi}}{\end{defi}}
 \def \resetqsymbol#1#2{\newqsymbol@{#1}{#2}}
 \def \ifnextchar{\@ifnextchar}
 \newenvironment{AuxilCases}[1]{\left \{ 
\begin{array}{#1} }{ \end{array} \right . }
 \renewenvironment{cases}{\ifnextchar\bgroup
 {\begin{AuxilCases}}{\begin{AuxilCases}{@{\,}lllll}}}{\end{AuxilCases}}
 \def \mrelsize#1#2{\hbox{\relsize{#1}{#2}}}
 \def \LC	{\mbox{the $\semcolour `l$-calculus}}
 \def \betamusubscript{
\kern-.25\point\raise -2.5\point\hbox{\scriptsize $`b\kern-.4\point`m$}}
 \def \same	{\mathrel{\equiv}}
 \def\LeftTop{\llceil}
 \def\RightBot{\rrfloor}
 \def \Sem#1{\mathord{\LeftTop{#1}\RightBot}}
 \def \longharpoon#1{\setbox66=\hbox{\tiny $\rightharpoonup$}%
\ifdim#1>\wd66\rule[1.5pt]{#1}{.4pt}\kern-.94\wd66\fi\box66}
\newlength\arglngth
\def\VectOne#1{
        \setbox155=\hbox{$#1$}%
\ifdim\wd155<4pt \setbox155=\hbox{$\hspace*{2pt}#1\hspace*{2pt}$}\fi
\arglngth=\wd155 \kern1pt
\ifdim\ht155>4pt \raise\ht155\hbox{\longharpoon{\arglngth}} %
\else   \raise4pt\hbox{\longharpoon{\arglngth}}\fi
\kern-\wd155 \kern-.5pt\box155}
\def\VectTwo[#1]#2{\leavevmode
        \setbox155=\hbox{$#2$}%
\ifdim\wd155<4pt
        \setbox155=\hbox{$\hspace*{2pt}#2\hspace*{2pt}$}\fi
\arglngth=\wd155
\{\ifdim\ht155>4pt
        \raise\ht155 \hbox{\longharpoon{\arglngth}} %
\else   \raise4pt\hbox{\longharpoon{\arglngth}}\fi \kern-\wd155 
\box155\}_{#1}}
\def\Vect{\ifnextchar[{\VectTwo}{\VectOne}}
 \def \nsubscr {\mbox{\relsize{-1}{\sc n}}}
 \def \vsubscr {\mbox{\relsize{-1}{\sc v}}}
 \def \ContS{\copy164}
 \def \LCont{{\lceil}\sk}
 \def \RCont{\sk{\rfloor}}
 \def \EmptyCont{\LCont\RCont}
 \def \basicCont#1{\LCont{#1}\RCont}
 \def \ContNarg[#1]{\ltermcol {\ContS_{\nsubscr}\tqsk{\basicCont{#1}}}}
 \def \ContN{\ifnextchar[
 	{\ContNarg}{\ltermcol {\ContS_{\nsubscr}}}}
\def \ContVarg[#1]{\ltermcol {\ContS_{\vsubscr}\tqsk{\basicCont{#1}}}}
 \def \ContV{\ifnextchar[
 	{\ContVarg}{\ltermcol {\ContS_{\vsubscr}}}}
 \def \ContNoSubscBr[#1]
\ContS\tqsk{\basicCont{#1}}}
 \def \ContNoSubsc{\ifnextchar[
	{\ContNoSubscBr}{\ContS}}
 \def \ContSubscBr_#1[#2]{\ContS_{#1}\psk[{#2}]}
 \def \ContSubsc_#1{\ifnextchar[
	{\ContSubscBr_{#1}}{\ContS_{#1}}}
 \def \Cont{\ifnextchar_{\ContSubsc}{\ContNoSubsc}} 
 \def \refpr(#1){(#1)}
 \def \sk {\hspace*{1\point}\ifnextchar(
	{\refpr}{}}
 \def \dsk {\hspace*{2\point}}
 \def \psk {\hspace{.4\point}}
 \def \qsk {\hspace*{.25\point}}
 \def \hsk {\hspace{.5\point}}
 \def \tqsk {\hspace*{.75\point}}
 \def \nhsk {\hspace{-.5\point}}
 \def \skpr {\hspace*{1.5\point}}
 \def \dquad {\quad \quad}
 \def \tquad {\quad \quad \quad}
 \def \qquad {\quad \quad \quad \quad}
 \def \lambdapoint{\qsk{.}\qsk}
 \def \Sc#1{{\mrelsize{1}{\textsc{\Purple #1}}}}
 \def \cal {\mathcal}
 \def \ftsc#1{\raise-1pt\hbox{\scriptsize \sc #1}}
 \def \IL	{{\blue \Sc{il}}}
 \def \CL	{{\blue \Sc{cl}}}
 \def \Set#1{\setbox73=\hbox{$#1$}\ifdim\wd73
 	>30\point\{\, \box73 \,\}\else \{\qsk \box73 \qsk\}\fi}
 \def \GUnd_#1{\Gamma\kern-1.5\point_{#1}}
 \def \GCom,{\Gamma\kern-1\point,}
 \def \GaccUnd_#1{\Gamma'_{#1}}
 \def \Gacc'{\ifnextchar_{\GaccUnd}{\ifnextchar,{\Gamma'\kern-2.5\point}{\Gamma'}}}
 \def \myGamma{\ifnextchar_%
 	{\GUnd}{\ifnextchar,%
		{\GCom}{\ifnextchar'%
			{\Gacc}{\Gamma}}}}
\def \Abstr #1.{`l #1 \lambdapoint }
\def \NAbstr #1.{`n #1 \lambdapoint }
\def \ApplSBrBr [#1] (#2){ [\qsk{#1}\qsk] \skpr ({#2}) }
\def \ApplSBrNoBr [#1] { [\qsk{#1}\qsk] \skp }
\def \ApplSBr [#1] {\ifnextchar(
	{\ApplSBrBr [{#1}] }{\ApplSBrNoBr [{#1}] }}
\def \ApplBrBr (#1) (#2){ ({#1}) \skpr ({#2}) }
\def \ApplBrNoBr (#1) { ({#1}) \skp }
\def \ApplBr (#1) {\ifnextchar(
	{\ApplBrBr ({#1}) }{\ApplBrNoBr ({#1}) }}
\def \ApplNoBrBr #1 (#2){ {#1} \skp ( {#2} ) }
\def \ApplNoBrNoBr#1{ {#1} \hsk }
\def \ApplNoBr #1{\ifnextchar(
	{\ApplNoBrBr {#1} }{\ApplNoBrNoBr {#1}}}
\def \Appl {\ifnextchar[
	{\ApplSBr}{\ifnextchar(
		{\ApplBr}{\ApplNoBr}}}
 \def \PrednoB#1{
`(#1`)}
 \def \PredsqB[#1]{\PrednoB{#1}}
 \def \PredroB(#1){\PrednoB{#1}}
 \def \Pred{\ifnextchar[
 	{\PredsqB}{\ifnextchar(
	 	{\PredroB}{\PrednoB}}}
 \def \ExistsRb #1 (#2){\exists\,#1 \, \Pred[{#2}]}
 \def \ExistsSb #1 [#2]{\exists\,#1 \, \Pred[{#2}]}
 \def \Exists #1 {\ifnextchar(
 	{\ExistsRb #1 }{\ifnextchar[
		{\ExistsSb #1 }{\exists\,#1 ~ }}}
 \def \ForallRb #1 (#2){\forall\,#1 \, \Pred[{#2}]}
 \def \ForallSb #1 [#2]{\forall\,#1 \, \Pred[{#2}]}
 \def \Forall #1 {\ifnextchar(
 	{\ForallRb #1 }{\ifnextchar[
		{\ForallSb #1 }{\forall\,#1 ~ }}}
 \def \arrow	{\mathop{\rightarrow}}
 \def \arr	{\mathord{\rightarrow}}
 \def \Arrow	{\mathbin{\Rightarrow}}
 \newdimen \itemcorrection \itemcorrection .9\parindent 
 \def \Turn {\mathrel{\vdash}}
 \def \TurnNI {\mathrel{\vdash_{\kern-2\point\ftsc{ni}}}}
 \def \TurnLK {\mathrel{\blue \vdash_{\kern-2\point\ftsc{lk}}}}
 \def \Turnnlm {\mathrel{\vdash_{\kern-2\point`n`l`m}}}
 \def \derivThree #1 |- #2 | #3 {{\colourfortype #1} \Turn {\colourfortype #2} 
\Mid {\colourfortype #3} }
 \def \derivTwo #1 |- #2 {\setbox51=\hbox{$#2$}%
\ifdim\wd51>1pt 
	{\colourfortype #1 } \Turn {\colourfortype #2 }
\else
	{\colourfortype #1 } \Turn {} 
\fi}
 \def \deriv #1 |- #2 {\ifnextchar|{\derivThree #1 |- {#2} }{\derivTwo #1 |- {#2} }}
 \def \derlog {\deriv}
 \def \ele {\mathbin{\in}}
 \def \notele {\mathbin{\not\in}}
 \def \ImplyoneBr(#1){\mathrel{\hspace{1pt}\Rightarrow\,(#1)\hspace{1pt}}}
 \def \Implyone#1{\mathrel{\hspace{1pt}\Rightarrow\,(#1)\hspace{1pt}}}
 \def \Imply{\ifnextchar{\bgroup}{\Implyone}{\ifnextchar(
 	{\ImplyoneBr}{\mathrel{\hspace{1pt}\Rightarrow\hspace{1pt}}}}}
 \def \Implies{\Imply}
 \def \pow {\wp\,}
 \def \Prod {\mathbin{\times}}
 \newcommand{\indexot}[2]{\def \n{n} \def\m{m} 1\seq {#1} \seq {#2}}
 \def \iotn {\indexot{i}{\n}}
 \def \jotn {\indexot{j}{\n}}
 \def \jotm {\indexot{j}{\m}}
 \def \Union {\mathrel{\cup}}
 \def \union {\mathbin{\cup}}
 \def \Ax {{\blue \textsl{Ax}}}
 \def \Or {\mathbin{\vee}}
 \def \OrI	{{\vee}\textsl{I}}
 \def \OrE	{{\vee}\textsl{E}}
 \def \And {\mathbin{\wedge}}
 \def \arrI	{{\blue \arr \textsl{I}\sk}}
 \def \arrE	{{\blue \arr \textsl{E}\psk}}
 \def \PbC{{\blue \textsl{PbC}}}
 \def \AndI	{{\wedge}\textsl{I}}
 \def \AndE	{{\wedge}\textsl{E}}
 \def \AndL	{{\wedge}\textsl{L}}
 \def \AndR	{{\wedge}\textsl{R}}
 \def \Cut	{\textsl{cut}}
 \def \negE	{{\blue \neg \textsl{E}\psk}}
 \def \negI	{{\blue \neg \textsl{I}\psk}}
 \def \Weak	{\textsl{\blue Wk}}
 \def \Thin{\textsl{\blue Th}}
 \def \Pass	{{\blue \textsl{Pass}\psk}}
 \def \Act	{{\blue \textsl{Act}\psk}}
 \def \LEM{{\blue \textsl{LEM}}}
 \def \ByDef {\mathrel{\black \sk \copy149\sk}}
 \def \CBV{\textsc{\blue cbv}}
 \def \CBN{\textsc{\blue cbn}}
 \def \FV#1{\textit {fv}\qsk({\ltermcolour #1})}
 \def \fv(#1){\FV{#1}}
 \def \fn(#1){\textit {\semcolour fn}(\lterm {#1} )}
 \def \FreshVariable {\textit {fresh}}
 \def \idS {\textit{Id}_S}
 \def \subs {(`v\mapsto C)\hspace{1mm}}
 \def \after {\hsk{`o}\hsk}
 \def \Ssymb{\textit{S}}
 \def \Supar_#1 (#2) {\Ssymb_{#1}\hsk (#2)}
 \def \Subr_#1 #2{\Ssymb_{#1}\skpr {#2}}
 \def \Su_#1{\ifnextchar\bgroup{\Subr_{#1} }
	{\ifnextchar(
		{\Supar_{#1} }{\Ssymb_{#1}}
}}
 \def \Sapar (#1){\Ssymb'\hsk (#1)}
 \def \Sabr #1{\Ssymb'\skpr #1}
 \def \Sa'{\ifnextchar\bgroup {\Sabr}
	{\ifnextchar(
		{\Sapar}{\Ssymb'}
}}
 \def \Spar (#1){\Ssymb \, ({#1})}
 \def \Sbr #1{\Ssymb \, {#1}}
 \def \typesubst{\ifnextchar_{\Su}
 	{\ifnextchar'{\Sa}%
 	{\ifnextchar\bgroup{\Sbr}
	{\ifnextchar(
		{\Spar}{\Ssymb}
}}}}
 \def \seq	{\mathbin{\leq}}
 \def \pointyPair<#1,#2>{`<{#1}\,,\,{#2}`>}
 \def \normalPair#1{({#1})}
 \def \Pair{\ifnextchar<{\pointyPair}{\normalPair}}
 \def \UCont{\textit {unifyC}}
 \def \UCThree#1#2#3{\UCont #1~#2~#3}
 \def \unifyContextsargs#1#2{\ifnextchar\bgroup{\UCThree{#1}{#2}}{\setbox31=\hbox{$#1$}\setbox32=\hbox{$#2$}
\ifdim\wd31<12\point\ifdim\wd32<12\point {\UCont}\ #1\ #2
	\else {\UCont}\ #1\ (#2) \fi
\else \ifdim\wd32<12\point {\UCont\ (#1)\ #2}
\else {\UCont}\ (#1)\ (#2) \fi \fi }}
 \def \unifyContexts{\ifnextchar\bgroup{\unifyContextsargs}{\UCont}}
 \def \unifyw{\textit {unify}}
 \def \unifytwo#1#2{\textit {unify}\ #1\ #2 }
 \def \unify{\ifnextchar{\bgroup}{\unifytwo}{\ifnextchar(
	{\unifytwo}{\textit {unify}}}}
 \def \lmu	{{\semcolour `l`m}}
 \def \Cmd	{\textsl {\rm C}} 
 \def \lmn	{{\semcolour `l`m`n}}
 \def \nlm	{`n`l`m}
 \def \betasubscript{_{\kern-1\point`b}}
 \def \reduces {\reduc}
  \def \reducarg(#1){\mathrel{\semcolour {\rightarrow}\psk(#1)}}
 \def \reducnor {\mathrel{\semcolour {\rightarrow}}}
 \def \reduc {\ifnextchar(
 	{\reducarg}{\reducnor}}
 \newcommand{\rtcredlmn}[1][*]{\mathrel{\semcolour {\rightarrow}^{#1}_{\Lsubscr}}}
 \def \redbeta{\arr \betasubscript}
 \newdimen\correction
  \def \paramsupscript#1{\setbox32=\hbox{\scriptsize $#1$}%
\hbox{\ifdim\ht32>4\point
	\raise4.5\point\copy32
\else	\raise4.5\point\copy32
\fi\kern-\wd32}}
\newcommand{\rtcredbeta}[1][\ast]{%
\setbox56=\hbox{$\betasubscript$}%
\setbox57=\hbox{{\scriptsize $#1$}}%
\ifdim \wd56>\wd57 \correction0pt \else \correction \wd57 \advance \correction -\wd56 \fi
 \mathrel{{\semcolour \arr\paramsupscript{#1}\betasubscript}\kern\correction}}
 \def\Vsubscr{\mbox{\relsize{-2}{\sc v}}}
 \def\Nsubscr{\mbox{\relsize{-2}{\sc n}}}
 \def \redname {\mathrel{\semcolour \arr_{\kern-.5pt\Nsubscr}}}
 \def \rtcredname {\mathrel{\semcolour \arr^{\ast}_{\kern-.5pt\Nsubscr}}}
 \def \redvalue {\mathrel{\semcolour \arr_{\kern-.5pt\Vsubscr}}}
 \def \rtcredvalue {\mathrel{\semcolour \arr^{\ast}_{\kern-.5pt\Vsubscr}}}
 \def \musub#1.#2 {{#1}{`.}{#2}}
 \def \testmusub#1{\ifnextchar.{\musub#1}{#1}}
 \def \For{{/}}
 \def \tsubstSq [#1/#2]{\tqsk\{\nhsk\testmusub#1 \For #2\nhsk\}}
 \def \tsubstP #1/#2 {\tqsk\{\nhsk\testmusub#1 \For #2\nhsk\}}
 \def \tsubst {\ifnextchar[
 	{\tsubstSq}{\tsubstP}}
 \def \ltsubstSq [#1.#2/#3]{\{ \nhsk #1{`.}#2 \For #3\nhsk \} \tqsk }
  \def \ltestmusub#1{\ifnextchar.{\musub#1}%
{{}#1\cdot}}
 \def \ltsubstP #1/#2 { \{ \nhsk \ltestmusub#1 \For #2\nhsk \} \tqsk }
 \def \ltsubst {\ifnextchar[
 	{\ltsubstSq}{\ltsubstP}}
 \def \ntsubst [#1/#2]{\tqsk\{#1 \!{/}\! #2\}}
 \def \redbmu {\mathrel{\semcolour {\rightarrow}\betamusubscript}}
 \def \comparr#1#2{%
\setbox33=\hbox{\scriptsize ${\semcolour #1}$}%
\setbox34=\hbox{${\semcolour #2}$}%
{\semcolour \rightarrow}\ifdim \wd33>\wd34 
	\copy34\kern-\wd34
	\hbox{\ifdim\ht33>4\point
		\raise4.5\point\copy33
	\else	\raise4.5\point\copy33
	\fi\kern-\wd32}
\else
	\hbox{\ifdim\ht33>4\point
		\raise4.5\point\copy33
	\else	\raise4.5\point\copy33
	\fi\kern-\wd33\copy34}
\fi}
 \newcommand{\rtcredbmu}[1][\ast]{\mathrel{\comparr{#1}\betamusubscript}}
 \def \Mid{\mathrel{\mid}}
 \def \TurnF {\mathrel{\vdash_{\kern-2\point\ftsc{f}}}}
 \def \Turnlmu {\mathrel{\vdash_{\kern-1\point`l\kern-1\point`m}}}
 \def \derlmuFour #1 |- #2 : #3 | #4 {%
	\setbox71=\hbox{$#1$}\setbox74=\hbox{$#4$}
	#1 \Turnlmu {#2} \colon {#3} \Mid 
	\ifdim\wd74<1pt {~} \else {#4} \hsk \fi 
	}
 \def \derlmuThree #1 |- #2 | #3 {#2 \colon #1 \Turnlmu #3 }
 \def \derlmuTwo #1 |- #2 {\ifnextchar:%
		{\derlmuFour {#1} |- {#2} }%
		{\derlmuThree {#1} |- {#2} }%
	}
 \def \derlmu {\ifnextchar|{\derlmuTwo ~ }{\derlmuTwo }}
 \def \dernlm #1 |- #2 : #3 {{#1} \Turnnlm {#2} : {#3} }
 \def \derlmnThree #1 |- #2 : #3 {{ \coltype #1 \Turnlmn {#2} : {#3} }}
 \def \derlmnTwo #1 |- #2 {%
 	\ifnextchar:%
		{ \derlmnThree {#1} |- {#2} }%
		{{\coltype {#1} \Turnlmn {#2} }}
}
 \def \derlmn {\ifnextchar|{\derlmnTwo {} }{\derlmnTwo }}
 \def \derLK #1 |- #2 { #1 \TurnLK #2 }
 \def \derNI #1 |- #2 { #1 \TurnNI #2 }
 \def \derF #1 |- #2 | #3 {{\colourfortype #1 } \TurnF {\colourfortype #2 } \Mid {\colourfortype #3 }\hsk}
 \def \Nu #1.{\mbox{\relsize{.5}{$`n$}} #1 \lambdapoint }
 \def \MuwNameNoBr #1.#2 {`m\qsk#1\lambdapoint[#2]\hsk}
 \def \MuwName #1.[#2]{`m\qsk#1\lambdapoint[#2]\hsk}
 \def \MuwCell #1.<#2|#3>{`m\qsk#1\lambdapoint \hsk \cell<#2|#3>}
 \def \Mu #1.{\ifnextchar[
	{\MuwName {#1}.}{\ifnextchar`%
		{\MuwNameNoBr {#1}.}{\ifnextchar<
			{\MuwCell {#1}.}{`m\hsk {#1}\lambdapoint }}}}
 \def \tp{\textrm{\sf tp}}
 \def \Lmu	{\semcolour {\Lambda`m}}
 \def \paramarrayqed[#1]{\par \vspace*{-21\point} \vspace*{#1} \hbox{ } \hfill \qed\leavevmode}
 \def \arrayqed{\ifnextchar[{\paramarrayqed}{\paramarrayqed[0\point]}}
 \def \QED{\qed}
        \def \Def{Definition}
        \def \Exm{Example}
        \def \Lmm{Lemma}
        \def \Prop{Proposition}
        \def \Sect{Section}
	\def \citeyear{\cite}
	\def \citeN{\cite}
\def \psline #1(#2)(#3){}
\def \psset #1{}
 \def \Erase {\theta}
 \def \Rename {`r}
 \def \Except	{\mathord{\setminus}}
	\def \coltype{}
	\def \colourfortype{}
	\def \ltermcolour{}
	\def \semcolour{}
	\def \mysf#1{\mbox{\relsize{-.5}{\sf #1}}}
	\def \WHERE {\hspace*{3mm} \textrm{where} }
	\def \lmn{{\cal L}}
	\def \Lsubscr {\mbox{\relsize{-3}{$\cal L$}}}
	\def \Turnlmn{\mathrel{{\Turn}\kern-2.5\point_{\Lsubscr}}}
	\def \redlmn {\mathrel{\semcolour {\rightarrow}\kern-2\point_{\Lsubscr}}}
	\def \eqredlmn {\mathrel{\semcolour {\rightarrow}\kern-1.5\point_{\Lsubscr}^=}}
	\def \ParConglmn {\mathrel{\semcolour {\approx}\kern-1\point_{\Lsubscr}}}
	\def \LTypes {{{\cal T}_{\kern-1\point{\Lsubscr}}}}
	\def\OL#1{\overline{#1}}
	\def\UL#1{\underline{#1}}
	\def \And {\mathbin{\wedge}}
	\def \Name{\textsl{N}}
	\def \Redlmn {\mathrel{\semcolour {\Rightarrow}\kern-1.5\point_{\Lsubscr}}}
	\def \lmndeR {\mathrel{\semcolour {\Leftarrow}\kern-1.5\point_{\Lsubscr}}}
	\def \SEarrow{\mathop{\kern-1.5\point\raise 7\point\hbox{\rotatebox{-60}{$\Arrow$}}}}
	\def \SWarrow{\mathop{\kern-1\point\raise 10\point\hbox{\rotatebox{-120}{$\Arrow$}}}}
	\def \oldConv#1#2#3{#1 \SEarrow #2 \SWarrow #3}
	\def \newConv #1 => #2 <= #3 {#1 \SEarrow #2 \SWarrow #3}
	\def \Conv{\ifnextchar\bgroup{\oldConv}{\newConv}}
	\def \rtcRedlmn {\mathrel{\semcolour {\Rightarrow}\kern-1.5\point_{\Lsubscr}^*}}
	\def\observe{\item}
\begin{document}

 \title{Adding Negation to Lambda Mu} 

 \author[S. van Bakel]{Steffen van Bakel\lmcsorcid{0000-0003-2077-011X}}

 \address{Department of Computing, Imperial College London, 180 Queen's Gate, London SW7 2BZ, UK}

 \email{s.vanbakel@imperial.ac.uk}

 \begin{abstract}
We present $\lmn$, an extension of Parigot's $\lmu$-calculus by adding negation as a type constructor, together with syntactic constructs that represent negation introduction and elimination.

We will define a notion of reduction that extends $\lmu$'s reduction system with two new reduction rules, and show that the system satisfies subject reduction.
Using Aczel's generalisation of Tait and Martin-L\"of's notion of parallel reduction, we show that this extended reduction is confluent.

Although the notion of type assignment has its limitations with respect to representation of proofs in natural deduction with implication and negation, $\TurnNI$, we will show that all propositions that can be shown in $\TurnNI$ have a witness in $\lmn$.

Using Girard's approach of reducibility candidates, we show that all typeable terms are strongly normalisable, and conclude the paper by showing that type assignment for $\lmn$ enjoys the principal typing property.

 \end{abstract}

 \keywords{classical logic, lambda calculus, negation, confluence, termination} 

 \maketitle

 \let\citeyear\cite
 \def \mycite{\cite}

 \section*{Introduction} 
Intuitionistic Logic (\IL)~\cite{Brouwer'07,Brouwer'08,Brouwer'75} plays an important role in Computer Science, given its strong relation with types in functional programming and the $`l$-calculus~\mycite{Curry'34,Barendregt'84} through the Curry-Howard isomorphism~\cite{Howard'80}, \emph{i.e.}~through the fact that typeable functions in a functional programming language correspond to proofs in \IL, and provable properties to inhabitable types.
Its importance is most prominent in the context of proof assistants, of which many are rooted in \IL.
Proof assistants or theorem provers can also be seen as programming languages for which the type system corresponds to a formal logic and ensure proof correctness by capitalising on the Curry-Howard correspondence through their type system.
Under this correspondence, checking that a term has a type is operationally equivalent to checking a proof of a proposition \cite{Wadler'15}.

There are currently many different proof assistants in use, that come in different shapes and forms, each with their own characteristic: Coq \cite{CoqManual} has a particular focus on the theorem proving aspect where proofs can be written with intuitive tactics, whereas Agda \cite{Norell'07} and Idris \cite{Brady'13} are more deeply connected to functional programming languages like Haskell.

The more widely used proof assistants that are based on the Curry-Howard isomorphism are all founded on \emph{intuitionistic type theory} \cite{Martin-Lof'84}. 
However, the use of {\IL} inescapably limits these languages to the fact that they are unable to prove a simple notion, which use is widespread in normal, everyday mathematics: each proposition is either true or false.
This is known as the law of the excluded middle (\Sc{lem}), and is the distinguishing feature of Classical Logic~(\CL)~\cite{Gentzen'35,Gentzen'69}.
There are theorem provers that use classical logic, like Trybulec's Mizar \cite{Trybulec-B'85}, but that is not founded in the Curry-Howard isomorphism; for an overview of proof assistants and their background, see \cite{Geuvers'09}.

It can be argued that {\IL} very rightly rejects this notion and there are many that do exactly that: they stress the value of \IL, where a proof of `$A$ or $B$' must be \emph{constructive}, \emph{i.e.}~constructed from a proof of either $A$ or from a proof of $B$, so stating `$A$ or not $A$', without justifying either first, is unacceptable. 
Likewise, a proof for the statement $ \Exists x\ele C [Q(x)] $ is only acceptable if first $Q(c)$ is shown, for some object $c \ele C$ (\emph{i.e.}~a \emph{witness} for $Q$ has been produced).
Therefore $ \neg \Forall x\ele C [P(x)] \Implies \Exists x\ele C [\neg P(x)] $ cannot be shown in \IL, since knowing that there has to be an element in $C$ for which $P$ does not hold is not the same as knowing which element that is.
Accepting {\IL} as the basis for mathematical reasoning, which for many is the only right thing to do for philosophical reasons, severely limits the collection of provable results, and is therefore not a popular choice amongst mathematicians.
Some theorem provers, perhaps begrudgingly, allow for the addition of the axiom `$A$ or not $A$', witnessed through a term constant; although it allows for provability of mathematical statements, this approach does not lend computational context to proofs, as theorem provers for {\IL} do, and does not really adhere to the Curry-Howard isomorphism. 

In fact, the popularity of \IL, constructive logic and constructive mathematics in computer science can be explained through its strong ties with computability through the Curry-Howard correspondence and the relation between \IL, the $`l$-calculus and functional programming.
That situation changed when Griffin~\citeyear{Griffin'90} observed that the $C$-operator of Felleisen's $`lC$-calculus~\citeyear{Felleisen-Hieb'92}, similar to the \mysf{call/cc} function in \mysf{Scheme}, can be typed with $\neg\neg A\arrow A$ (or rather $((A\arrow `B)\arrow `B)\arrow A$), \emph{double negation elimination}, another property that only holds in \CL, thus highlighting the first link between {\CL} and sequential control in computer science. 
This soon led to the definition of $\lmu$ by Parigot~\citeyear{Parigot'92,Parigot-Brno'93}, a calculus that represents minimal classical logic~\mycite{AHS'07}, followed by an impressive body of work in the area of {\CL} and computer science, with many contributions from various authors.

\def \Candid{\mysf{Candid}}
\newcommand{\ECCk}{\ensuremath{\textrm{ECC}_{\textrm{\scriptsize K}}}}
Looking to investigate the possibility and suitability of developing theorem provers for {\CL} based on $\lmu$, in \cite{Davies-etal'21} the case was made that in terms of implementability, expressiveness, and elegance, proof assistants based on {\CL} have much to contribute.
It presented \Candid, a theorem prover based on $\lmu$, but enriched with dependent types, as an extension of \ECCk\ \cite{Miquey'20} adding co-products and dependent algebraic data types.
It treated a system of classical natural deduction that uses the logical connectors \textit{implication}, \textit{negation}, \textit{conjunction}, and \textit{disjunction}.

\def \nef {\Sc{nef}}
As seen in that paper, the link between first order classical logic and computation is tricky. 
Theorem provers are based in dependent-types systems, but \cite{Herbelin'05} showed that by naively combining dependent types and $\lmu$'s control operators, all types have only one inhabitant. 
Fortunately, \cite{Herbelin'12} shows a way to restrict how dependent types and control operators interact, which regains a logically consistent type theory. 
An important notion to address this problem is the use of \emph{negative elimination free} (\nef) terms that cannot contain a negation elimination. 
Since in $\lmu$ negation elimination gets represented through application, as well as through \emph{naming} (see Example~\ref{dne in lmu}), this restriction is quite drastic.
Although it is unavoidable for a \nef\ term to not contain sub-terms of the form $`M`a.[`b]N$ as the subterm $[`b]N$ corresponds with an application of $(\neg E)$, it also cannot contain an application $`@ M N $, as this could correspond with $(\neg E)$ when $M$ has type $A \arr `B$ and $N$ has type $A$. 
Introducing separate syntax for negation, as done here, strongly expands the set of \nef-terms to those really not dealing with negation, and will strengthen the implementation of \Candid.

Another reason to deal with negation explicitly is the fact that $\lmu$ does not really represent \CL, in that tautologies are not necessarily represented by closed terms.
This is in part due to the fact the system only has implicit negation and `proof by contradiction' $(\PbC)$ to express dealing with conflict, so negation $ \neg A $ is expressed through $ A \arr `B $ (where $`B$ is not a type in the original presentation), negation introduction through abstraction and negation elimination through application.
For example, in Example~\ref{dne in lmu} we will show Parigot's proof for double negation elimination in $\lmu$; the witness $`Ly.`M`a . [`g] `@ y (`Lx. `M`d.[`a]x) $ contains a free name $`g$ of type $`B$.
It is needed because the subterm $ `@ y (`Lx. `M`d.[`a]x) $ has type $`B$, and the only way to deal with that in $\lmu$ is applying the rule for $(\PbC)$, which forces the prefix $`m`a.[`g]$ to the term.
We will see that, dealing explicitly with negation, this problem disappears.

One of the strengths of $\lmu$ is that Call-by-Name reduction (\CBN) is confluent, as shown by Py \cite{Py-PhD'98}.
Summers \cite{Summers-PhD'08} defines $\nlm$, a variant of $\lmu$ that deals with negation as well, but represents also Call-by-Value reduction (\CBV), rendering reduction non-confluent.
The choice we make here is to aim for confluence, so reduction in the calculus $\lmn$ we present here does not model {\CBV} reduction, as is the case for Parigot's $\lmu$. 
This paper presents $\lmn$ and shows all the necessary properties for it, like soundness, confluence, expressiveness, termination, and principal typing.

 \subsection*{Overview}
This paper introduces the calculus $\lmn$, which expands on $\lmu$ by adding negation.
We will start in Section~\ref{NDCL} with an overview of two of the common representations of \CL, where we will focus on natural deduction and proof contraction, and why double negation elimination poses a particular problem for the latter.
We will define $\TurnNI$, a restriction to natural deduction for {\CL} that uses negation and implication, and plays a central role in this paper.
We will revisit Parigot's $\lmu$ also through its underlying logic, and explain how it deals with negation, implicitly through assumptions stored in the co-context, and explicitly through $`. \arr `B$.
We in particular highlight that $\lmu$ is not fully equipped to deal with the latter kind of negation, as witnesses to tautologies not necessarily are closed terms.
We also revisit Summer's $\nlm$-calculus that fully represents $\TurnNI$, together with its non-confluent notion of reduction.

In Section~\ref{lmn} we define the calculus $\lmn$ as an extension of $\lmu$ by adding syntax and inference rules that express negation; it can also be seen as a restriction of $\nlm$.
This calculus comes with four elementary notions of reduction, and we will show soundness results for all of them.
This is followed in Section~\ref{confluence} by the proof that reduction is confluent, and in Section~\ref{representation} by the proof that, although a restriction of $\nlm$, $\lmn$ can still inhabit all provable judgements of $\TurnNI$.
Then in Section~\ref{strong normalisation}, we will show that reduction is strongly normalisable, and conclude in Section~\ref{pp} by showing that type assignment enjoys the principal typing property.

 \section{Natural Deduction for Classical Logic} \label{NDCL}

Natural Deduction for \CL, defined by Gentzen in \cite{Gentzen'35} is a way of describing the structure of formal proofs in mathematics that follow the intuitive, human, lines of reasoning as much as possible. 
It is defined through \emph{inference rules} that are generically of the shape 
 \[
\Inf	[\textit{Rule}]
	{ \textit{Premisses}
	}{ \textit{Conclusion} }
 \]
and describes a step allowed in this formal system, where, assuming that all the statements in the premisses hold, then after applying this step named $(\textit{Rule})$ we can accept that the conclusion holds as well.
A statement, also called a judgement, is of the shape $ \derlog `G |- A $, where $A$ is a formula and $`G$ is a \emph{context}, a collection of formulas that form the assumptions needed for $A$ to hold, and expresses that `if all formulas in the collection $`G$ hold, then so does $A$'.
A number of these can together form the premisses; there is only one judgement in the conclusion.\footnote{An different notation can be found in the literature, where inference rules express the relation between the inferred formulas, without stating the context, and the assumptions are the formulas occurring in the leaves of the derivation tree.
Assumptions can be cancelled through steps like `implication introduction', and are then placed between square brackets or struck through.
Since the latter is a non-local operation on the inference tree that is not easily defined or treated formally, here we prefer the `sequent' notation: it neatly collects in the derived statement the assumptions on which it depends in the context $`G$.}

Proofs are constructed by applying rules to each other, in the sense that the conclusion of one rule can be a premise of another.
The premises on the initial rules (that are not the conclusion of other rules) are called the assumptions of the proof; the (single) conclusion occurs at the bottom. 
Judgements that are considered to be proven are those that appear at the bottom of the derivation tree.

The inference rules of natural deduction systems almost all come in two varieties for each logical operator: introduction and elimination rules, each for any particular logical connective. 
For example, for the logical operators $\And$ (\textit{conjunction}) and $\Or$ (\textit{disjunction}), these rules look like:
 \[ \begin{array}{r@{\tquad}l} 
\Inf	[\AndI]
	{ \derlog `G |- A \quad \derlog `G |- B 
	}{ \derlog `G |- A\And B }
 &
\Inf	{ \derlog `G |- A\And B 
	}{ \derlog `G |- A }
 \quad 
\Inf	[\AndE]
	{ \derlog `G |- A\And B 
	}{ \derlog `G |- B }
 \\ [4mm]
\Inf	{ \derlog `G |- A 
	}{ \derlog `G |- A\Or B }
 \quad 
\Inf	[\OrI]
	{ \derlog `G |- B 
	}{ \derlog `G |- A\Or B }
 & 
\Inf	[\OrE]
	{ \derlog `G |- A\Or B \quad \derlog `G,A |- C \quad \derlog `G,B |- C 
	}{ \derlog `G |- C }
 \end{array} \]

To deal with conclusions that need no premisses since they hold by themselves, an \emph{axiom} rule is added; these form the assumptions of the proof and occur in the `leaves' of the proof tree.
 \[
\Inf	[\Ax]{ \derlog `G,A |- A }
 \]

In his paper, Gentzen also presents the Sequent Calculus, which differs from Natural Deduction in that it derives sequences of the shape 
 \[ A_1,\ldots,A_n \vdash B_1,\ldots,B_m \]
with the intended meaning `if all of the properties $A_1$, \ldots, $A_n$ hold, then at least one of the $ B_1$, \ldots, $B_m$ does as well.' 
For each connector, there is a \emph{left} and \emph{right} introduction rule, as in 
 \[ \def\TurnLK{\Turn} \begin{array}{c@{\quad}c@{\tquad}c}
\Inf	
	{ \derLK `G,A |- `D 
	}{ \derLK `G,A\And B |- `D }
& 
\Inf	[\AndL] 
	{ \derLK `G,B |- `D 
	}{ \derLK `G,A\And B |- `D }
& 
\Inf	[\AndR] 
	{ \derLK `G |- A,`D 
	\quad
	\derLK `G |- B,`D 
	}{ \derLK `G |- A\And B,`D }
 \end{array} \]
There are no elimination rules for connectors, just a generic $(\Cut)$-rule:
 \[ \def\TurnLK{\Turn} 
\Inf	[\Cut]
	{ \derLK `G |- C,`D 
	 \quad
	 \derLK `G,C |- `D 
	}{ \derLK `G |- `D }
 \]
where $C$ of course can be $A\And B$.

For the Sequent Calculus, Gentzen defines a notion of (proof) contraction that removes occurrences of $(\Cut)$, and shows that this is (weakly) normalising: for every proof that shows $ \derLK `G |- `D $, there exists a $(\Cut)$-free proof that shows the same result.
The proof follows a left-most, innermost reduction strategy; it is not shown that cut-elimination is strongly normalising.
He does not show a normalisation result in \cite{Gentzen'35} for Natural Deduction, which would eliminate all introduction-elimination pairs, but there is evidence that he did solve that later \cite{Plato'08}.\footnote{One particular difficulty with defining proof contractions on either the Sequent Calculus or Natural Deduction is that this notion is not \emph{confluent}, in that proof contraction not always leads to the same result.}
Prawitz \cite{Prawitz'65} presented an extensive study of proof contraction for Natural Deduction.

The main issue is that in the Sequent Calculus, all logical connectors come with a left and a right introduction rule, whereas in Natural Deduction, not all proof-constructions follow the introduction-elimination pattern of the inference rules. 
For those that do, proof contraction consists of the removal from a proof of an introduction step followed immediately by an elimination step for the same logical connector; for `$\And$' that looks like:

 \[ \begin{array}{ccc@{\qquad}cccc}
\Inf	[\AndE]
	{\Inf	[\AndI]
		{ \InfBox{D_1}{ \derlog `G |- A }
		 \quad 
		 \InfBox{D_2}{ \derlog `G |- B }
		}{ \derlog `G |- A\And B }
	}{ \derlog `G |- A }
&
\Implies
&
\InfBox{D_1}{ \derlog `G |- A }
&
\Inf	[\AndE]
	{\Inf	[\AndI]
		{ \InfBox{D_1}{ \derlog `G |- A }
		 \quad 
		 \InfBox{D_2}{ \derlog `G |- B }
		}{ \derlog `G |- A\And B }
	}{ \derlog `G |- B }
&
\Implies
&
\InfBox{D_2}{ \derlog `G |- B }
 \end{array} \]
or, for implication:
 \[ \begin{array}{c@{\qquad}cc@{\quad}c}
 \begin{array}{c}
\Inf	[\arrI]
	{ \derlog `G,A |- B
	}{ \derlog `G |- A\arrow B }
 \\ [7mm]
\Inf	[\arrE]
	{ \derlog `G |- A\arrow B 
	 \quad
	 \derlog `G |- A
	}{ \derlog `G |- B }
 \end{array}
&
\Inf	[\arrE]
	{ \Inf	[\arrI]
		{\Inf	{\Inf	[\Ax]{\derlog `G, A |- A }
			}{\InfBox{D_1}<46>{ \derlog `G,A |- B } }
		}{ \derlog `G |- A\arrow B } 
	 ~
	 \InfBox{D_2}{ \derlog `G |- A }
	}{ \derlog `G |- B }
&
\Implies 
&
\Inf	{ \InfBox{D_2}{ \derlog `G |- A } 
	}{ \InfBox{D_1'}<28>{ \derlog `G |- B } }
 \end{array} \]
Notice that, in the rule $(\arrI)$, the formula $A$ ceases to be an assumption, and that, in the composed proof on the right, $A$ is no longer an assumption needed to reach the conclusion, since it has been shown to hold by $D_2$.
As a result $`D_1$ and $`D_1'$ are not identical, since dealing with different contexts; however, the have the same structure in terms of rules applied.

This is not possible for all logical connectors: the way negation is dealt with is, for example, not straightforward.
Negation comes of course with introduction and elimination rules:
 \[ \begin{array}{c@{\qquad}cc@{\quad}c}
 \begin{array}{c}
\Inf	[\negI]
	{ \derlog `G,A |- `B
	}{ \derlog `G |- \neg A }
 \\ [7mm]
\Inf	[\negE]
	{ \derlog `G |- \neg A
	\quad
	\derlog `G |- A
	}{ \derlog `G |- `B }
 \end{array}
&
\Inf	[\negE]
	{\Inf	[\negI]
		{\Inf	{\Inf	[\Ax]{\derlog `G, A |- A }
			}{\InfBox{D_1}<46>{ \derlog `G,A |- `B } }
		}{ \derlog `G |- \neg A }
	 ~
	 \InfBox{D_2}{ \derlog `G |- A }
	}{ \derlog `G |- `B }
&
\Implies
&
\Inf	{ \InfBox{D_2}{ \derlog `G |- A } 
	}{ \InfBox{D_1}<28>{ \derlog `G |- `B } }
 \end{array} \]
but, in Classical Logic, negation plays a more intricate role, in that the \emph{law of excluded middle} `$ A \Or \neg A $ is true for all $A$' holds (or something similar, like `there is no distinction between the formulas $ \neg\neg A $ and $A$').
This is a property that cannot be shown, but has to forced onto the system, and can cause havoc for proof contraction.

There are many different rules that express this to a different degree, like:
 \[ \begin{array}{@{}c@{\dquad}c@{\dquad}c@{\dquad}c@{\dquad}c}
\Inf	
	{ \derlog `G,\neg A |- `B 
	}{ \derlog `G |- A } 
&
\Inf	
	{ \derlog `G |- \neg\neg A 
	}{ \derlog `G |- A } 
&
\Inf	
	{ \derlog {} |- A\Or \neg A } 
&
\Inf	
	{ \derlog {} |- ((A\arr B)\arr A)\arr A } 
&
\Inf	
	{ \derlog {} |- (\neg A\arr A )\arrow A }
 \end{array} \]
(called `\emph{proof by contradiction}', `\emph{double negation elimination}', `\emph{law of excluded middle}', `\emph{Peirce's law}', and `\emph{reductio ad absurdum}', respectively.)
These rules have different expressive power, and adding one rather than another can change the set of derivable properties (see \cite{Ariola-Herbelin'03}).

\subsection{Classical Natural Deduction with Implication and Negation}
The variant of Classical Natural Deduction we will consider in this paper uses the logical connectors $\neg$ (\textit{negation}) and $\arr$ (\textit{implication}).

 \begin{definition}[Natural deduction with negation and implication] \label{CLimplneg}
The formulas we use for our system of natural deduction with negation and implication are: 
 \[ \begin{array}{rcl}
A,B &::=& `v \mid A\arrow B \mid \neg A 
 \end{array} \]
where `$\arr$' associates to the right and `$\neg$' binds stronger than `$\arr$', and $`v$ is basic formula\footnote{We choose to use $`v$ as a primitive formula because we aim to use these as types, using the Curry-Howard correspondence, by `inhabiting' proofs with term information.}, of which there are infinitely countable many.
A context $`G$ is a set of formulas, where $`G,A = `G \union \Set{A} $ and the inference rules are:
 \[ \begin{array}{rl@{\dquad}rl@{\dquad}rl@{\dquad}rl}
(\Ax) : &
\Inf	{ \derlog `G,A |- A }
&
(\arrI) : &
\Inf	{ \derlog `G,A |- B
	}{ \derlog `G |- A\arrow B }
&
(\arrE) : &
\Inf	{ \derlog `G |- A\arrow B 
	\quad
	\derlog `G |- A
	}{ \derlog `G |- B }
 \\ [5mm]
(\PbC) : &
\Inf	{ \derlog `G,\neg A |- `B 
	}{ \derlog `G |- A }
&
(\negI) : &
\Inf	{ \derlog `G,A |- `B
	}{ \derlog `G |- \neg A }
&
(\negE) : &
\Inf	{ \derlog `G |- \neg A
	\quad
	\derlog `G |- A
	}{ \derlog `G |- `B }
 \end{array} \] 
We write $ \derNI `G |- A $ for judgements derivable in this system, and $\TurnNI$ as name for the system.
 \end{definition}
Notice that $`B$ is not a formula in $\TurnNI$, but is a place-holder, used only to represent conflict; it could be omitted from the system, by deriving $ \derlog `G |- {} $ as the conclusion of $(\negE)$. 
Also, the weakening rule 
 \[ \begin{array}{rl}
(\Weak): & 
\Inf	[`G' \subseteq `G]
	{ \derlog `G |- A 
	}{ \derlog `G' |- A }
 \end{array} \]
is admissible.

To compare $\TurnNI$ with the logic with focus $\TurnF$ we will see below, 
we show Peirce's law in $\TurnNI$ (where $ `G = (A\arrow B)\arrow A,\neg A $):
 \[ \def \TurnNI{\Turn} \begin{array}{c}
\Inf	[\arrI]
	{\Inf	[\PbC]
		{\Inf	[\arrE]
			{\Inf	[\Ax]{\derNI `G |- \neg A }
			 ~
			 \Inf	[\arrE]
				{\Inf	[ \Ax]{\derNI `G |- (A\arrow B)\arrow A }
\kern-16mm
				\Inf	[\arrI]
					{\Inf	[\PbC]
						{\Inf	[\negE]
							{\Inf	[ \Ax]{\derNI `G,A,\neg B |- \neg A }
							 \Inf	[ \Ax]{\derNI `G,A,\neg B |- A }
							}{ \derNI `G,A,\neg B |- `B }
						}{ \derNI `G,A |- B }
					}{ \derNI `G |- A\arrow B }
				}{ \derNI `G |- A }
			}{ \derNI (A\arrow B)\arrow A,\neg A |- `B }
		}{ \derNI (A\arrow B)\arrow A |- A }
	}{ \derNI {} |- ((A\arrow B)\arrow A)\arrow A }
 \end{array} \]
(see also \Exm~\ref{Peirce}).

As suggested above, in the presence of $(\PbC)$ defining proof contraction is not straightforward.
Assume we have used $(\PbC)$ to show $ \derlog `G |- A\arr B $, and also have a proof for $ \derlog `G |- A $; then applying $(\arrE)$ constructs a proof that appears to be contractable.
It is, however, not directly clear how to define that.
 
 \begin{example} \label{problem case}
Take the following proof in $\TurnNI$:
 \[ 
\Inf	[\arrE]
	{\Inf	[\PbC]
		{\Inf	[\negE]
			{\InfBox{D_1}{ \derlog `G,\neg(A\arr B) |- \neg\neg(A\arr B) } 
			 \quad
			 \Inf	[\Ax]{ \derlog `G,\neg(A\arr B) |- \neg(A\arr B) }
			}{ \derlog `G,\neg(A\arr B) |- `B }
		}{ \derlog `G |- A\arr B } 
~
	 \InfBox{D_2}{ \derlog `G |- A }
	}{ \derlog `G |- B } 
\] 
It is \emph{a priori} not clear how to contract this proof. 
We would like to use the sub-derivations to be the building stones for the proof for $ \derlog `G |- B $ without the $(\PbC)$-$(\arrE)$ pair, but there is no sub-derivation above the step $(\PbC)$ that has $A$ as an assumption (so does not contain $\derlog `G,A |- A $ as the result of rule $(\Ax)$), or that derives $ \derlog `G |- A\arr B $.
 \end{example}

There are many ways around this problem presented in the literature, but at this point we just want to highlight the problem.
There are, of course, circumstances in which we can remove the $(\PbC)$-$(\arrE)$ pair in a proof in $\TurnNI$.

 \begin{example} \label{remove PbC-arrE pair TurnNI}
Assume we have the following proof (where $`G' \subseteq `G$)\footnote{Notice that the context can only get smaller when applying inference rule, except when using $(\Weak)$.}:
 \[ \def \TurnNI{\Turn} \def \ContS{\copy69}
\Inf	[\arrE]
	{\Inf	[\PbC]
		{\Inf	[\negE]
			{\Inf	[\Ax]{\derNI `G',\neg(A\arr B),\neg C |- \neg C }
\kern -34mm
			 \Inf	{\Inf	[\PbC]
					{\Inf	[\negE]
						{\Inf	[\Ax]{\derNI `G,\neg(A\arr B),\neg C,\neg D |- \neg(A\arr B) }
						 ~
						 \Inf	[\Weak]
							{\InfBox{D_1}{ \derNI `G',\neg C |- A\arr B }
							}{ \derNI `G,\neg(A\arr B),\neg C,\neg D |- A\arr B }
						}{  \derNI `G,\neg(A\arr B),\neg C,\neg D |- `B }
					}{ \derNI `G,\neg(A\arr B),\neg C |- D }
				}{\InfBox{D_2}<108>{\derNI `G',\neg(A\arr B),\neg C |- C } }
			}{\derNI `G',\neg(A\arr B),\neg C |- `B }
		}{\derNI `G',\neg C |- A\arr B }
\kern-35mm
	 \InfBox{D_3}{ \derNI `G',\neg C |- A }
	}{ \derNI `G',\neg C |- B }
 \] 
then we can bring sub-proof $D_3$ to the right of sub-proof $D_1$, apply $(\arrE)$, and construct the proof
 \[ \def \TurnNI{\Turn} 
\Inf	[\PbC]
	{\Inf	[\negE]
		{\Inf	[\Ax]{\derNI `G',\neg B,\neg C |- \neg C }
\kern-21mm
		 \Inf	{\Inf	[\PbC]
			 	{\Inf	[\negE]	
					{\Inf	[\Ax]{\derNI `G,\neg B,\neg C,\neg D |- \neg B }
					 ~
					 \Inf	[\Weak]
						{\Inf	[\arrE]
							{\InfBox{D_1}{ \derNI `G',\neg C |- A\arr B }
						 \quad
						 	\InfBox{D_3}{ \derNI `G',\neg C |- A }
							}{  \derNI `G',\neg C |- B }
						}{  \derNI `G,\neg B,\neg C,\neg D |- B }
					}{ \derNI `G,\neg B,\neg C,\neg D |- `B }
				}{ \derNI `G,\neg B,\neg C |- D }
			}{\InfBox{D_2'}<76>{\derNI `G',\neg B,\neg C |- C } }
		}{\derNI `G',\neg B,\neg C |- `B }
	}{\derNI `G',\neg C |- B }
 \]
whereby removing the $(\PbC)$-$(\arrE)$ pair; the derivation $D_2'$ is in structure equal to $D_2$, in the sense that the same rules get applied in the same order.
Since we have removed the $\arrow$-type, we can argue that the complexity of the proof has decreased.
We will see below (\Exm~\ref{remove PbC-arrE pair TurnF} and \ref{logic struct sub lmu}) that this kind of proof contraction gets successfully modelled in $\lmu$. 
 \end{example}
 
We will see in Section~\ref{nlm} a term calculus that directly represents proofs in $\TurnNI$, and presents a notion of reduction that represents the above proof contraction, by presenting a different kind of term substitution.

To better be able to reason about the structure of proofs and the technicalities of proof contraction, we need to represent the structure of proofs via term information from an appropriate calculus, and inhabit the inference rules with terms, such that proof contractions will come to correspond to term reduction.
This employs the Curry-Howard principle, which expresses a correspondence between terms and their types on one side, and proofs for propositions on the other.
We will see below that associating a term calculus to an inference system unlocks the subtle differences between the variants of Classical Logic we consider here.

The natural way to inhabit $\TurnNI$ is using Summer's $\nlm$ \cite{Summers-PhD'08}; we will first present Parigot's calculus $\lmu$ \cite{Parigot'93}, as this historically came first, and gives a very elegant solution to the proof-contraction problem of Example~\ref{problem case}.

 \section{The foundation of \texorpdfstring{$\lmu$}{λµ}}

Parigot's $\lmu $-calculus is a proof-term syntax for classical logic, expressed in Natural Deduction, defined as an extension of the Curry type assignment system for {\LC}.
With $\lmu $ Parigot created a multi-conclusion typing system which corresponds to a classical logic with \emph{focus}; there derivable statements have the shape {\def\TurnF{\Turn}$ \derF `G |- A | `D $}, where $A$ is the main conclusion of the statement, expressed as the \emph{active} conclusion, $`G$ is the set of assumptions and $`D$ is the set of alternative conclusions, or have the shape {\def\TurnF{\Turn}$ \derF `G |- `B | `D $} if there is no formula under focus.

 \subsection{A classical logic with focus}

Before discussing $\lmu$, in order to better compare it with the other calculi we discuss in this paper, we first revise its underlying logic, which corresponds to the following system.

 \begin{definition}[A classical logic with focus]
The formulas for this system are: 
 \[ \begin{array}{rcl}
A,B &::=& `v \mid A\arrow B 
 \end{array} \]
Contexts $`G$ and co-contexts $`D$ are sets of formulas, and the inference rules are defined through: 
{ \def\TurnF {\Turn}
 \[ \begin{array}{rl@{\dquad}rl@{\dquad}rl}
(\Ax) : &
\Inf	{ \derF `G,A |- A | `D }
&
(\Act) : &
\Inf	{ \derF `G |- `B | A,`D
	}{ \derF `G |- A | `D }
&
(\Pass) : &
 \Inf	{ \derF `G |- A | A,`D
	}{ \derF `G |- `B | A,`D }
 \end{array} \] 
 \[ \begin{array}{rl@{\dquad}rl}
(\arrI) : &
\Inf	{ \derF `G,A |- B | `D
	}{ \derF `G |- A\arrow B | `D }
&
(\arrE) : &
\Inf	{ \derF `G |- A\arrow B | `D
	 \quad
	 \derF `G |- A | `D
	}{ \derF `G |- B | `D }
 \end{array} \] }
 
 \observe
We write $ \derF `G |- M : A | `D $ for judgements derivable in this system.
 \end{definition}
Notice that, as above, $`B$ is not a formula in $\TurnF$, but only represents conflict; also, the formulas in $`D$ are interpreted as \emph{negated}.
The rule $(\Act)$ (\emph{activation}) moves the focus onto the formula $A$, whereas the rule $(\Pass)$ (\emph{passivation}) removes the focus of a proof, since it is in contradiction with the co-context.
It will be clear that, given that $`B$ is not a formula, the rule $(\Pass)$ can only be followed by the rule $(\Act)$, which together change the focus of the proof.

 \begin{example} \label{Peirce} 
Notice that negation is not part of the type language, so does not occur in $`G$ nor in $`D$.
It is therefore not possible to show $(\Sc{dne})$ in this logic (not without first extending the syntax of formulas, see Example~\ref {dne in lmu}); however, it is possible to show Peirce's law:
 \[ \begin{array}{c}
\Inf	[\arrI]
	{\Inf	[\Act]
		{\Inf	[\Pass]
			{\Inf	[\arrE]
				{\Inf	[ \Ax]{\derF (A\arrow B)\arrow A |- (A\arrow B)\arrow A | A }
       \quad
				\Inf	[\arrI]
					{\Inf	[\Act]
						{\Inf	[\Pass]
							{\Inf	[ \Ax]{\derF (A\arrow B)\arrow A,A |- A | A,B }
							}{ \derF (A\arrow B)\arrow A,A |- `B | A,B }
						}{ \derF (A\arrow B)\arrow A,A |- B | A }
					}{ \derF (A\arrow B)\arrow A |- A\arrow B | A }
				}{ \derF (A\arrow B)\arrow A |- A | A }
			}{ \derF (A\arrow B)\arrow A |- `B | A }
		}{ \derF (A\arrow B)\arrow A |- A | {} }
	}{ \derF {} |- ((A\arrow B)\arrow A)\arrow A | {} }
 \end{array} \]
 \end{example}


The intention of this system is to express classical logic, and for this it encapsulates the rule $(\PbC)$.
Since the formulas in $`D$ are seen as negated, any statement $ \derF `G |- A | `D $ can be seen as $ \derNI `G,\neg `D |- A $ (where $\neg `D$ lists the negated versions of all types in $`D$).
With that view, the rules $(\Act)$ and $(\Pass)$ corresponds to allowing the following variants of rules $(\PbC)$ and $(\negE)$
 \[ \begin{array}{c@{\qquad}c}
\Inf	[\PbC]
	{ \derlog `G,\neg `D,\neg A |- `B 
	}{ \derlog `G,\neg `D |- A }
&
\Inf	[\negE]
	{\Inf	[\Ax]
		{ \derlog `G,\neg `D,\neg A |- \neg A }
	 \quad 
	 \derlog `G,\neg `D,\neg A |- A 
	}{ \derlog `G,\neg `D,\neg A |- `B }
 \end{array} \]
but in a version of Natural Deduction where formulas have at most a negation at the front.
Note that it therefore avoides the problem of Example~\ref{problem case} by not allowing the rule $(\PbC)$ to be applied to assumptions on the right in $(\negE)$: $A$ cannot be a negated type, so the premises in the right-hand proof cannot occur swapped.

 \begin{example} \label{remove PbC-arrE pair TurnF}
Using the above observation, following from \Exm~\ref{remove PbC-arrE pair TurnNI}, we can create the proofs in $\TurnF$, and contract the left proof into the right one.
 \[ \def \TurnF{\Turn} \def \ContS{\copy69}
\Inf	[\arrE]
	{\Inf	[\Act]
		{\Inf	[\Pass]
			{\Inf	{\Inf	[\Act]
					{\Inf	[\Pass]
						{\Inf	[\Weak]
							{\InfBox{D_1}{ \derF `G' |- A\arr B | C,`D }
							}{ \derF `G |- A\arr B | A\arr B,C,D,`D }
						}{ \derF `G |- `B | A\arr B,C,D,`D }
					}{\derF `G |- D | A\arr B,C,`D }
				}{\InfBox{D_2}<104>{\derF `G' |- C | A\arr B,C,`D } }
			}{\derF `G' |- `B | A\arr B,C,`D }
		}{\derF `G' |- A\arr B | C,`D }
\kern-2mm
	 \InfBox{D_3}{ \derF `G' |- A | C,`D }
	}{ \derF `G' |- B | C,`D }
\quad
\Inf	[\Act]
	{\Inf	[\Pass]
		{\Inf	{\Inf	[\Act]
				{\Inf	[\Pass]
					{\Inf	[\Weak]
						{\Inf	[\arrE]
							{\InfBox{D_1}{ \derF `G' |- A\arr B | C,`D }
							 \dquad
							 \InfBox{D_3}{ \derF `G' |- A | C,`D }
							}{ \derF `G' |- B | C,`D }
						}{\derF `G |- B | B,C,D,`D }
					}{\derF `G |- `B | B,C,D,`D }
				}{\derF `G |- D | B,C,`D }
			}{\InfBox{D_2'}<80>{\derF `G |- C | B,C,`D } }
		}{\derF `G |- `B | B,C,`D }
	}{\derF `G |- B | C,`D }
 \] 
 \end{example}

We will see in \Exm~\ref{logic struct sub lmu} that this forms the basis of structural reduction in $\lmu$.

 \setbox73=\hbox{$\lmu$}
 \subsection{The \texorpdfstring{$\lmu$}{λµ}-calculus} \label{lmu}

We now present the variant of $\lmu $ we consider in this paper, as defined by Parigot in~\mycite{Parigot-Brno'93} and that gives a Curry-Howard interpretation to the inference rules of $\TurnF$.

 \begin{definition}[Syntax of $\lmu$] \label{lm-terms}
Let $x$ range over the infinite, countable set of \emph{term-variables}, and $`a,`b$ range over the infinite, countable set of \emph{names}.
The $\lmu $-\emph{terms} we consider are defined by the grammar:
 \[ \begin{array}{rcl@{\quad}l}
M,N &::=& V \mid `@ M N \mid `M `a. [`b]M \\
V &::=& x \mid `Lx.M & (\emph{values})
 \end{array} \]

Recognising both $`l$ and $`m$ as binders, the notion of free and bound names and variables of $M$, $\fv(M)$ and $\fn(M)$, respectively, is defined as usual, and we accept Barendregt's convention to keep free and bound names and variables distinct, using (silent) $`a$-conversion whenever necessary. 

We write $x \ele M$ ($`a \ele M$) if $x$ ($`a$) occurs in $M$, either free of bound, and call a term \emph{closed} if it has no free names or variables.
We will call the pseudo-terms of the shape $[`a]M$ \emph{commands}, and write $\Cmd$, and treat them as terms for reasons of brevity, whenever convenient.
 \end{definition}

\noindent
We will use these notations for other calculi as well in this paper.

As with Implicative Intuitionistic Logic, the reduction rules for the terms that represent the proofs correspond to proof contractions, but in $\TurnF$.
The reduction rules for {\LC} are the \emph{logical} reductions, \emph{i.e.}~deal with the removal of a introduction-elimination pair for implication and in addition to these, Parigot expresses also the \emph{structural} rules that change the focus of a proof, where elimination essentially deals with negation and takes place for a type constructor that appears in one of the alternative conclusions (the Greek variable is the name given to a subterm).
Parigot therefore needs to express that the focus of the derivation (proof) changes (see the rules in \Def~\ref{tas lmu}), and this is achieved by extending the syntax with two new constructs $[`a]M$ and $`M`a.M$\footnote{Notice that these constructs are \emph{pseudo} terms and that they always occur together in terms.} that act as witness to \emph{passivation} and \emph{activation} of $\TurnF$, which together move the focus of the derivation, and together are called a \emph{context switch}.

Parigot defines a notion of reduction on these terms, expressed via implicit substitution, and as usual, $M\tsubst[N/x]$ stands for the (instantaneous) substitution of all occurrences of $x$ in $M$ by $N$.
Two kinds of structural substitution are defined: the first is the standard one, where $M \tsubst[N.`g/`a]$ stands for the term obtained from $M$ in which every command of the form $[`a]P$ is replaced by $[`g] P N$ (here $`g $ is a fresh name).
This yields a reduction that is Call by Name (\CBN) in nature, and shown by Py \cite{Py-PhD'98} to be confluent.

The second will be of use for Call-by-Value (\CBV) reduction, where $\ltsubst[N.`g/`a] M$ stands for the term obtained from $M$ in which every $[`a]P$ is replaced by $[`g] N P$.
Although {\CBV} is not considered for the calculus $\lmn$ we define in \Sect~\ref{lmn}, we add its definition here, since it does form part of Summer's calculus that we discuss in \Sect~\ref{nlm}, and forms an intermediate stage between $\lmu$ and $\lmn$.  

They are formally defined by:

 \begin{definition}[Structural substitution] \label {struct sub lmu}
 \emph{Right-structural substitution}, $M \tsubst[ N.`g/`a ]$, and \emph{left-structural substitution}, $\ltsubst[ N.`g/`a ] M $, are defined inductively over (pseudo) terms.
The important cases are:
 \[ \begin{array}{rcl}
[`a]M \tsubst[ N.`g/`a ] & \ByDef & [`g](M \tsubst[ N.`g/`a ] N)
 \\ {}
[`b]M \tsubst[ N.`g/`a ] & \ByDef & [`b] ( M \tsubst[ N.`g/`a ] ) ~ (`b \not= `a)
 \end{array} 
\quad
 \begin{array}{lcl}
\ltsubst[ N.`g/`a ] [`a]M & \ByDef & [`g] N ( \ltsubst[ N.`g/`a ] M )
 \\
\ltsubst[ N.`g/`a ] [`b]M & \ByDef & [`b] \ltsubst[ N.`g/`a ] M ~ (`b \not= `a)
 \end{array} \]
 \end{definition}
\citeN{Parigot'92} only defines the first variant of these notions of structural substitutions (so does not use the prefix `right'); the two notions are defined together, but rather informally, using a notion of contexts in \mycite{Ong-Stewart'97}.

We have the following notions of reduction on $\lmu$.
For the fourth, call by value, different variants exists in the literature; we adopt the one from \cite{Ong-Stewart'97}.

 \begin{definition}[$\lmu $ reduction] \label{mu reduction definition}
 \begin{enumerate}
 \item
The reduction rules of $\lmu$ are:
 \[ \begin{array}{rrcl@{\hspace{6mm}}l}
\textit{logical } (`b): & `@ (`Lx.M ) N & \reduc & M \tsubst[N/x] \\
\textit{structural } (`m): & `@ (`M `a . \Cmd ) N & \reduc & `M `g . \Cmd \tsubst[N.`g/`a] & (`g\textit{ fresh}) \\
 \textit{erasing } (\Erase) : & `M `a . [`a] M & \reduc & M & (`a\notele M) \\
\textit{renaming } (\Rename) : & 
`M`a.[`b]{`M`g.[`d] M } & \reduc &
\multicolumn{2}{@{}l}{
 \begin{cases}
        `M`a . [`b] M\tsubst[`b/`g] & (`d = `g)\\
        `M`a . [`d] M\tsubst[`b/`g] & (`d \not= `g)
 \end{cases}}
 \end{array} \]

 \item
Evaluation contexts are defined as terms with a single hole $\EmptyCont$ by:
 \[ \begin{array}{rcl}
\Cont
	&::=& 
\EmptyCont \mid `@ {\Cont} M \mid `@ M {\Cont} \mid `Lx.{\Cont} \mid `M `a . [`b] \Cont 
 \end{array} \]
We write $ \Cont[M] $ for the term obtained by replacing the hole with the term $M$.

(Free, unconstrained) reduction $\redbmu$ on $\lmu$-terms is defined through
$ \Cont[M] \redname \Cont[N]$ if $M \reduces N$ using either the $`b$, $`m$, $\Erase$, or $\Rename$-reductions rule.

 \item \label{cbn def}
\emph{Call by Name (\CBN) evaluation contexts} are defined as:
 \[ \begin{array}{rcl}
\ContN
	&::=& 
\EmptyCont \mid `@ {\ContN} M \mid `M `a.[`b]\ContN
 \end{array} \]
 \emph{{\CBN}} reduction $\redname$ is defined through:
$ \ContN[M] \redname \ContN[N]$ if $M \reduces N$ using either the $`b$, $`m$, $\Erase$, or $\Rename$-reduction rule.

 \item
\emph{Call by Value (\CBV) evaluation contexts} are defined through:
 \[ \begin{array}{rcl}
\ContV 
	&::=& 
\EmptyCont \mid `@ {\ContV} M \mid `@ V {\ContV} \mid `M `a.[`b]\ContV
 \end{array} \]
 \emph{{\CBV}} reduction $\redvalue$ is defined through:
$ \ContV[M] \redvalue \ContV[N]$ if $M \reduc N$ using either $`m$, $\Erase$, $\Rename$, or:
 \[ \begin{array}[t]{rrcl@{\quad}l}
(`b_{\vsubscr}) : &
`@ ( `Lx.M ) V &\redvalue& M\tsubst[V/x] 
 \\
(`m_{\vsubscr}) : &
`@ V (`M `a.\Cmd) &\redvalue& `M `g . \ltsubst[V.`g/`a] \Cmd
	& ( `g\textit{ fresh})
 \end{array} \]

 \end{enumerate}
 \end{definition}

Remark that, for rule $(`m_{\vsubscr})$, $`M `a .[`b]N$ is not a value. 
Also, unlike for the $`l$-calculus, {\CBV} reduction is not a sub-reduction system of $\redbmu$: the rule $(`m_{\vsubscr})$ (and left-structural substitution) are not part of $\redbmu$.
Both {\CBN} and {\CBV} constitute \emph{reduction strategies} in that they pick exactly one $`b`m$-redex to contract; notice that a term might be in either {\CBN} or \CBV-normal form (\emph{i.e.}~reduction has stopped), but need not be that for $\redbmu$.

Type assignment for $\lmu$ is defined below through inhabiting the inference rules of $\TurnF$ with syntax; there is a \emph{main}, or \emph{active}, conclusion, labelled by a term, and the \emph{alternative} conclusions are labelled by names $`a$, $`b$, \textit{etc}.
Judgements in $\lmu$ are of the shape $\derlmu `G |- M : A | `D $, where $`D$ consists of pairs of Greek characters (the \emph{names}) and types; the left-hand context $`G$, as for the $`l$-calculus, contains pairs of Roman characters and types, and represents the types of the free term variables of $M$.

\begin{samepage} 
 \begin{definition}[Typing rules for $\lmu$] \label{tas lmu} \label{lmu rules}%
 \begin{enumerate}

 \item
Let $`v$ range over a countable (infinite) set of type-variables.
The set of types is defined by the grammar:
 \[ \begin{array}{rcl}
A,B &::=& `v \mid A\arrow B
 \end{array} \]

 \item
A \emph{context} (of term variables) $`G$ is a partial mapping from term variables to types, denoted as a finite set of \emph{statements} $x`:A$, such that the \emph{subjects} of the statements ($x$) are distinct.
We write $`G_1,`G_2$ for the \emph{compatible} union of $`G_1$ and $`G_2$ (if $x`:A_1 \ele `G_1$ and $x`:A_2 \ele `G_2$, then $A_1 = A_2$), and write $`G, x`:A$ for $`G, \Set{x`:A}$, $x \notele `G$ if there exists no $A$ such that $x`:A \ele `G$, and $`G\Except x$ for $`G\Except \Set{x`:A}$.

 \item
A \emph{context of names} $`D$ (or \emph{co-context}) is a partial mapping from \emph{names} to types, denoted as a finite set of \emph{statements} $`a`:A$, such that the \emph{subjects} of the statements ($`a$) are distinct.
Notions $`D_1,`D_2$, as well as $`D, `a`:A$ and $`a \notele `D$, $`D \Except `a $ are defined as for $`G$.

 \item
The type assignment rules for $\lmu$, adapted to our notation, are: 

{\def\Turnlmu{\Turn}
 \[ \kern-2mm
 \begin{array}{rl@{\quad}rl@{\quad}rl}
(\Ax) : &
\Inf	{ \derlmu `G,x`:A |- x : A | `D }
&
(`m) : &
\Inf	[`a\notele`D]
	{ \derlmu `G |- M : B | `a`:A,`b`:B,`D
	}{ \derlmu `G |- {`m `a.[`b]M} : A | `b`:B,`D }
 \quad
\Inf	[`a\notele`D]
	{ \derlmu `G |- M : A | `a`:A,`D
	}{ \derlmu `G |- {`m `a.[`a]M} : A | `D }
 \end{array} \]
 \[ \begin{array}{rl@{\quad}rl}
(\arrI) : &
\Inf	[x\notele`G]
	{ \derlmu `G,x`:A |- M : B | `D
	}{ \derlmu `G |- {`lx.M} : A\arrow B | `D }
&
(\arrE) : &
\Inf	{ \derlmu `G |- M : A\arrow B | `D
	 \quad
	 \derlmu `G |- N : A | `D
	}{ \derlmu `G |- { M N } : B | `D }
 \end{array} \] }

We will write $\derlmu `G |- M : A | `D $ for statements derivable in this system.

 \item
We extend Barendregt's convention on free and bound variables and names to judgements (for all the notions of type assignment we define here), so in $\derlmu `G,x`:A |- M : B | `a`:C,`D $, both $x$ and $`a$ cannot appear bound in $M$.

 \end{enumerate}
 \end{definition}
 \end{samepage}
We can think of $[`a]M$ as storing the type of $M$ amongst the alternative conclusions by giving it the name $`a $.
Notice that $`B$ is not used at all in $\Turnlmu$.

Notice that, if we erase all term information from the inference rules, we get the rules from $\TurnF$, but for the variants of $(`m)$; these we can infer, however, so they are admissible.
 \[ 
\Inf	[\Act]
	{\Inf	[\Pass]
		{ \InfBox{\derlog `G |- B | A,B,`D }
		}{ \derlog `G |- `B | A,B,`D }
	}{ \derlog `G |- A | B,`D }
\qquad
\Inf	[\Act]
	{\Inf	[\Pass]
		{ \InfBox{\derlog `G |- A | A,`D }
		}{ \derlog `G |- `B | A,`D }
	}{ \derlog `G |- A | `D }
 \] 

The following result is standard and of use in the proofs below.

 \begin{lemma} [Weakening and thinning for $\Turnlmu$] \label{lmn thinning lemma} \label{lmn weakening lemma}
The following rules for \emph{weakening} and \emph{thinning} are admissible for $\Turnlmu$:
 \[ \def \Turnlmu {\Turn} \begin{array}{rl@{\dquad}rl}
(\Weak) : &
\Inf	[`G\subseteq `G',`D\subseteq `D']
	{\derlmu `G |- M : A | `D
	}{\derlmu `G' |- M : A | `D' }
& 
(\Thin) : &
\Inf	[\ast]
	{\derlmu `G |- M : A | `D
	}{\derlmu `G' |- M : A | `D' }
 \end{array} \]
 \begin{description}
 \item [$(\ast)$] $ `G' = \Set{x`:B \ele `G \mid x \ele \fv(M)}$, $`D' = \Set{`a`:B \ele `D \mid `a \ele \fn(M)} $.
 \end{description}
 \end{lemma}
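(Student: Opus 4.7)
The plan is to prove both claims by straightforward structural induction on the derivation $\derlmu `G |- M : A | `D $, establishing weakening first and then reusing it inside the proof of thinning.

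For weakening, the axiom case $(\Ax)$ is immediate, since $x`:A \ele `G$ entails $x`:A \ele `G'$ whenever $`G \subseteq `G'$, and the same holds for $`D \subseteq `D'$. The rule $(\arrE)$ follows by applying the induction hypothesis to both premises with the enlarged contexts. In the binding cases $(\arrI)$ and the two variants of $(`m)$, Barendregt's convention guarantees that the bound term variable $x$ (respectively bound name $`a$) is fresh with respect to $`G'$ and $`D'$, so the induction hypothesis applies to the premise over the extended contexts and the side conditions $x \notele `G'$, $`a \notele `D'$ remain valid when re-applying the rule.

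For thinning, the induction proceeds in the same shape, restricting at each step to exactly the free variables and names of the current subterm. The cases $(\Ax)$ and $(\arrE)$ are routine, using $\fv(MN) = \fv(M) \union \fv(N)$ and similarly for names. The interesting case is $(\arrI)$: if $x \ele \fv(M)$, the induction hypothesis on the premise delivers a derivation whose restricted context already contains $x`:A$, and $(\arrI)$ is re-applied directly; if $x \notele \fv(M)$, the induction hypothesis yields a derivation over a context that omits $x$, so we first invoke weakening to reinstate $x`:A$ and then re-apply $(\arrI)$. The case $(`m)$ is dealt with analogously for the bound name $`a$, distinguishing whether $`a$ occurs free in the body of $`M`a.[`b]M$.

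The only mild obstacle is precisely this dependency of thinning on weakening in the binding cases; establishing weakening first and then invoking it as a lemma inside the induction for thinning resolves it cleanly. Everything else is routine context bookkeeping, rendered transparent by Barendregt's convention on bound variables and names.
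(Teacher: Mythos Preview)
Your proposal is correct and follows exactly the standard approach the paper has in mind; the paper's own proof is just the word ``Standard.'' Your explicit treatment of the binding cases, including the dependency of thinning on weakening when the bound variable or name does not occur free in the body, spells out precisely what that one word is gesturing at.
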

 \begin{proof}
Standard.\qed
 \end{proof}

The following soundness result holds.

 \begin{thmC}[\cite{Bakel-PPDP'19}] \label{soundness lmn}
If $M \redbmu N$, and $ \derlmu `G |- M : A | `D $, then $ \derlmu `G |- N : A | `D $.
 \end{thmC} 
This result is in that paper also shown for {\CBV} and \CBN-reduction.

 \begin{example} \label {logic struct sub lmu}
We can illustrate $`m$-reduction by the derivations for the reduction step
 \[ \begin{array}{rcl}
`@ (`M `a . [`b] \Cont[{`M `d . [`a] M }]) N & \redbmu & `M `g . [`b] \Cont[{`M `d . [`g] `@ M N }] 
 \end{array} \]
 \[ 
 \def \Turnlmu{\Turn} \def \ContS{\copy69}
\Inf	[\arrE]
	{\Inf	[`m]
		{\Inf	{\Inf	[`m]
				{\Inf	[\Weak]
					{\InfBox{D_1}{ \derlmu `G' |- M : A\arr B | `b`:C,`D }
					}{ \derlmu `G |- M : A\arr B | `a`:A\arr B,`d`:D,`b`:C,`D }
				}{\derlmu `G |- {`M `d . [`a] M } : D | `a`:A\arr B,`b`:C,`D }
			}{\InfBox{D_2}<186>{\derlmu `G |- \Cont[{`M `d . [`a] M }] : C | `a`:A\arr B,`b`:C,`D } }
		}{\derlmu `G |- {`M `a . [`b] \Cont[{`M `d . [`a] M }]} : A\arr B | `b`:C,`D }
~	 \InfBox{D_3}{ \derlmu `G |- N : A | `b`:C,`D }
	}{ \derlmu `G |- {`@ (`M `a . [`b] \Cont[{`M `d . [`a] M }]) N } : B | `b`:C,`D }
 \] \[
 \def \Turnlmu{\Turn} \def \ContS{\copy69}
\Inf	[`m]
	{\Inf	{\Inf	[`m]
			{\Inf	[\arrE]
				{\Inf	[\Weak]
					{\InfBox{D_1}{ \derlmu `G' |- M : A\arr B | `b`:C,`D }
					}{ \derlmu `G |- M : A\arr B | `d`:B,`g`:D,`b`:C,`D }
				 ~
				 \Inf	[\Weak]
				 	{\InfBox{D_3}{ \derlmu `G |- N : A | `b`:C,`D }
				 	}{\derlmu `G |- N : A | `d`:B,`g`:D,`b`:C,`D }
				}{\derlmu `G |- {`@ M N } : B | `d`:B,`g`:D,`b`:C,`D }
			}{\derlmu `G |- {`M `d . [`g] `@ M N } : D | `d`:B,`b`:C,`D }
		}{\InfBox{D_2'}<168>{\derlmu `G |- \Cont[{`M `d . [`g] `@ M N }] : C | `d`:B,`b`:C,`D } }
	}{\derlmu `G |- {`M `g . [`b] \Cont[{`M `d . [`g] `@ M N }]} : B | `b`:C,`D }
 \]
Notice that these are the `inhabited' version of the proofs in \Exm~\ref{remove PbC-arrE pair TurnF}; remember that a $(\Pass)$-$(\Act)$ pair collapses into $(`m)$.
The derivation $D_2'$ is in structure equal to $D_2$, since that is decided by the syntactic structure of the context $\Cont[`.]$ but contains $`M `d . [`g] `@ M N $ rather than $`M `d . [`a] M $.
 \end{example}

The intuition behind the structural rule is given by de Groote~\mycite{deGroote'94}: ``\emph{in a $\lmu$-term $`m `a.M$ of type $A\arrow B$, only the subterms named by $`a$ are \emph{really} of type $A\arrow B$ (\ldots); hence, when such a $`m$-abstraction is applied to an argument, this argument must be passed over to the sub-terms named by $`a$.}''
Remark that this is accurate, but hides the fact that the naming construction $[`a] M $ is actually a (hidden) instance of rule $(\negE)$, so `naming' is actually a kind of application.
The proof of Peirce's Law (\Exm~\ref{Peirce}) can be inhabited in $\lmu$ with $`lx.`m`a.[`a](x(`ly.`m`b.[`a]y))$.
In \cite{Parigot'92}, Parigot shows that `double negation elimination' can be represented in $\lmu$; as suggested above, $`B$ is added as a pseudo-type to express negation $\neg A$ through $A\arr `B$, as well as contradiction.

 \begin{example}[Double negation elimination in $\lmu$]  \label{dne in lmu}
Double negation elimination is shown in $\TurnNI$ by the proof on the left;
we can also show this in $\TurnF$, as in the proof on the right, but since $\TurnF$ has no rules for negation, we need to add $`B$ to express it, so write $C\arrow `B$ for $\neg C$.
 \[ \begin{array}{c@{\dquad}c}
\def \TurnNI {\Turn}
\Inf	[\arrI]
	{\Inf	[\PbC]
		{\Inf	[\negE]
			{\Inf	[\Ax]{ \derNI \neg\neg C,\neg C |- \neg\neg C }
			 \Inf	[\Ax]{ \derNI \neg\neg C,\neg C |- \neg C }
			}{ \derNI \neg\neg C,\neg C |- `B }
		}{ \derNI \neg\neg C |- C }
	}{ \derNI {} |- \neg\neg C\Implies C }
&
\def \TurnF {\Turn}
\Inf	[\arrI]
	{\Inf	[\Act]
		{\Inf	[\arrE]
			{
\multiput(30,0)(0,50){7}{.}
\raise3\RuleH \hbox to 10mm {\kern-20mm
			\Inf	[\Ax]
				{ \derF (C\arrow `B)\arrow `B |- (C\arrow `B)\arrow `B | C }
}
			 \Inf	[\arrI]
				{\Inf	[\Pass]
					{\Inf	[\Ax]{ \derF (C\arrow `B)\arrow `B,C |- C | C }
					}{ \derF (C\arrow `B)\arrow `B,C |- `B | C }
				}{ \derF (C\arrow `B)\arrow `B |- C\arrow `B | C }
			}{ \derF (C\arrow `B)\arrow `B |- `B | C }
		}{ \derF (C\arrow `B)\arrow `B |- C | {} }
	}{ \derF {} |- ((C\arrow `B)\arrow `B)\arrow C | `B }
 \end{array} \]
Notice that the rule $(\Pass)$ is not directly followed by $(\Act)$, while they always come together in $\lmu$, and that the assumption $ \derNI \neg\neg C,\neg C |- \neg C $ gets replaced by the proof for $ \derF (C\arrow `B)\arrow `B |- C\arrow `B | C $.
Moreover, $(\negE)$ is represented through $(\arrI)$ and $(\arrE)$.

Parigot shows that double negation elimination can be represented in $\lmu$~\cite{Parigot'92}.
 \[ \def\Turnlmu{\Turn}
\Inf	[\arrI]
	{\Inf	[`m]
		{\Inf	[\arrE]
			{\Inf	[\Ax]
				{ \derlmu y `: (C\arrow `B)\arrow `B |- y : (C\arrow `B)\arrow `B | {} }
			 \quad
			 \Inf	[\arrI]
				{\Inf	[`m]
					{\Inf	[\Ax]
					{ \derlmu x`:C |- x : C | `d`:`B,`a`:C,`g`:`B }
					}{ \derlmu x`:C |- { `M`d.[`a]x } : `B | `a`:C,`g`:`B }
				}{ \derlmu {} |- { `Lx. `M`d.[`a]x } : C\arrow `B | `a`:C,`g`:`B }
			}{ \derlmu y`:(C\arrow `B)\arrow `B |- { `@ y (`Lx. `M`d.[`a]x) } : `B | `a`:C,`g`:`B }
		}{ \derlmu y`:(C\arrow `B)\arrow `B |- { `M`a . [`g] `@ y (`Lx. `M`d.[`a]x) } : C | `g`:`B }
	}{ \derlmu {} |- { `Ly.`M`a . [`g] `@ y (`Lx. `M`d.[`a]x) } : ((C\arrow `B)\arrow `B)\arrow C | `g`:`B }
 \]
This corresponds to the proof in $\TurnF$ above, but for the fact that extra calls to $(\Pass)$ and $(\Act)$ are added inside the calls to $(`m)$, as well as additional names of type $`B$; notice that because of these extra rules this term is not closed as it has a free name $`g$.
The proof transformation we hinted at above translates to the following (where $`D' = `a`:C,`g`:`B,`D $):
 \[ \begin{array}{c@{\kern-5mm}c}
 \begin{array}[t]{c}
 \def\TurnNI{\Turn}
\Inf	[\PbC]
	{\Inf	{\Inf	[\negE]
			{\InfBox{ \derNI `G,\neg C |- \neg\neg C }
			 \quad
			 \Inf	[\Ax]{ \derNI `G,\neg C |- \neg C }
			}{\derNI `G,\neg C |- `B } 
		}{\InfBox<54>{ \derNI `G',\neg C |- `B } }
	}{\derNI `G' |- C } 
 \end{array}
& 
 \begin{array}[t]{c}
~ \\ [-8mm]
\def\Turnlmu{\Turn}
\Inf	[`m]
	{\Inf	{\Inf	[\arrE]
			{\InfBox{ \derlmu `G |- M : (C\arr`B)\arr `B | `D' }
			 \quad
			 \Inf	[\arrI]
				{\Inf	[`m]
					{\Inf	[\Ax]{ \derlmu `G,y`:C |- y : C | `d`:`B,`D' }
					}{ \derlmu `G,y`:C |- {`M`d . [`a] y } : `B | `D' }
				}{ \derlmu `G |- { `Ly. {`M`d . [`a] y }} : C\arr`B | `D' }
			}{ \derlmu `G |- {`@ M ( `Ly. {`M`d . [`a] y }) } : `B | `D' } 
		}{\InfBox<158>{ \derlmu `G' |- {\Cont[{`@ M ( `Ly. {`M`d . [`a] y }) }]} : `B | `D' } }
	}{\derlmu `G' |- { `M`a . [`g] \Cont[{`@ M ( `Ly. {`M`d . [`a] y }) }] } : C | `g`:`B,`D }
 \end{array}
 \end{array} \]
so Parigot essentially replaces here an instance of the $(\Ax)$ rule for $ \derNI `G,\neg C |- \neg C $ by a derivation for $ \derlmu `G |- { `Ly. {`M`d . [`a] y }} : C\arr`B | `a`:C,`D $.\footnote{Summers \cite{Summers-PhD'08} uses $ `Nz.`@ [y] (z N) $.}
It is this what allows for the successful encoding of $\TurnNI$ in $\lmu$.

We will see this kind of transformation play an important role in \Sect~\ref{representation}.
 \end{example}

It is important to point out that the use of $`g$ in the previous example creates an anomaly.
Although $ ((C\arrow `B)\arrow `B)\arrow C $ is a logical tautology, the $\lmu$-term that is its witness is \emph{not} a closed term so the proof has an uncanceled assumption.
Moreover, terms can have type $`B$ without being typed with the equivalent of rule $(\negI)$, but using $(\arrE)$.

Several attempts have been made to rectify this.
Parigot not only adds $`B$ to the language of types (in a side remark), but also allows for statements like $`g`:`B$ to be used without adding them explicitly to the co-context, so does not consider them `real' assumptions.
Ariola and Herbelin \cite{Ariola-Herbelin'03} define an extension of $\lmu$, adding a special syntax construct $[\tp]M$, where {\tp} acts as a `continuation constant' and represent the outermost context of the term.
In their system, the witness to $ ((C\arrow `B)\arrow `B)\arrow C $ is the term $ `Ly.`M`a . [\tp] `@ y (`Lx. `M`d.[`a]x) $, a closed term.

Another solution would be to detach, syntactically, passivation from activation, so to no longer insist that they strictly follow each other.
That is the approach in de Groote and Saurin's $\Lmu$-calculus \cite{deGroote'94,Saurin'08}; there the witness would be $ `Ly.`M `a . {`@ y (`Lx. [`a]x)} $ which directly inhabits the proof in $\TurnF$ above. 
That variant of $\lmu$ better expresses the logic of $\TurnF$, but one problem with $\Lmu$ is that is not clear if (denotational) semantics can be defined for it, which is possible for $\lmu$ \cite{Streicher-Reus'98,Bakel-Barbanera-Liguoro-LMCS'18}.
This is directly related to the fact that a $`m$-abstraction can now be applied to a term of type $`B$ that is an application, rather than a term typed (implicitly) using rule $(\negE)$.

 \subsection{The \texorpdfstring{$`n`l`m$}{νλµ}-calculus} \label{nlm}
In \cite{Summers-PhD'08}, Summers makes a strong case for inhabiting the rules of $\TurnNI$ directly and in full, and defines the calculus $\nlm$ by adding the rules for negation and their syntactic representation to a generalisation of $\lmu$.
He thereby extends the syntax with the construct $`@ [M] N $ which is used to represent negation elimination, not just when $M$ is a name, but also when the negated statement on the left is the result of a proof, and allows $(`m)$ to be applied to assumption used on the right in $(\arrE)$.
He also removes the distinction between names and variables, and brings all assumptions together in one context.

 \begin{definition}[Syntax of $\nlm$] \label{nlm-terms} 
The $\nlm$-\emph{terms} we consider are defined over variables (Roman characters) by the grammar:
 \[ \begin{array}{rcl@{\quad}l}
M,N &::=& x \mid `Lx.M \mid `@ M N \mid `N x.M \mid `@ [M] N \mid `M x . M 
 \end{array} \]
 \end{definition}
Type assignment (see Definition~\ref{tas nlm} below) will naturally allow $`m$-binding to terms of the shape $`@ [P] Q $, but since $`B$ is a type, variables and applications can have type $`B$, allowing the term $`m `a . {`@ y (`Lx. [`a]x)} $ to be typeable; since a term like $`Ly.P$ cannot be assigned the type $`B$, a term like $ `M x . {`L y . P} $ will not be typeable.

The reduction rules for $\nlm$ in \cite{Summers-PhD'08} are largely defined, as can be expected, through term substitution as far as the constructors $`l$ and $`n$ are concerned, but contracting a $`m$ redex now becomes more involved than in $\lmu$, for the reasons we discussed in Example~\ref{problem case}.

 \begin{definition}[Reduction in $\nlm$ \cite{Summers-PhD'08}] \label{nlm red}
 \begin{enumerate}
 \item 
The auxiliary notion of substitution \\ $\tsubst[z.N/x]$\footnote{\cite{Summers-PhD'08} uses a slightly different notation.} is defined inductively over the structure of terms, using the base cases
 \[ \begin{array}{rcl@{\quad}l}
x \tsubst[z.N/x] &=& `Nz.N \\
y \tsubst[z.N/x] &=& y & (y \not= x) \\
(`@ [x] M) \tsubst[z.N/x] &=& N \tsubst[{ M \tsubst[z.N/x] }/z]
 \end{array} \] 

 \item 
The reduction rules of $\nlm$ are:\vspace*{-2mm}
 \[ \begin{array}{rrcl}
(`l') : & `@ (`Lx.M ) N & \reduc & `My. {`@ [`Nx.{`@ [y] M }] N } \\
(`n) : & `@ [`Nx.M ] N & \reduc & M \tsubst[N/x] \\
(`m\arr_1) : & `@ (`Mx.M ) N & \reduc & `My.M\tsubst[z.{`@ [y] (zN) }/x] \\
(`m\arr_2) : & `@ N (`Mx.M ) & \reduc & `My.M\tsubst[z.{`@ [y] (Nz) }/x] \\
 \end{array} \quad \begin{array}{rrcl}
(`m\neg_1) : & `@ [`Mx.M ] N & \reduc & M\tsubst[z.{`@ [z] N }/x] \\
(`m\neg_2) : & `@ [N] `Mx.M & \reduc & M\tsubst[z.{`@ [N] z }/x] \\
(`m`n) : & `Ny.`Mx.M & \reduc & `Ny.M\tsubst[z.z/x] \\
(`m`m) : & `My.`Mx.M & \reduc & `My.M\tsubst[z.z/x] \quad \\
(`m`h) : & `Mx.[x] M & \reduc & M \hfill (x\notele M)
 \end{array} \]

Evaluation contexts are defined by:
 \[ \begin{array}{rcl}
 \Cont
	&::=& 
\EmptyCont \mid `Lx.{\Cont} \mid `@ {\Cont} M \mid `@ M {\Cont} \mid `Nx.\Cont\mid `@ [\Cont] M \mid `@ [M] {\Cont} \mid `M x . \Cont 
 \end{array} \]
(Free, unconstrained) reduction $\redbmu$ on $\lmn $-terms is defined through
$ \Cont[M] \redname \Cont[N]$ if $M \reduces N$ using either of the nine rules above.

 \end{enumerate}
 \end{definition}

It is clear that these reduction rules contain the \CBV-rules as well 
in $(`m\arr_2)$ and $(`m\neg_2)$. 
Thereby reduction is not confluent; we have a critical pair in the rules $(`m\arr_1)$ and $(`m\arr_2)$ and the term $ `@ (`Mx.M ) (`My.N ) $ is reducible using both, and these reduction steps will (normally) result in different outcomes.

 \begin{definition}[Type assignment for $\nlm $] \label{tas nlm} \label{nlm rules}%
 \begin{enumerate}

 \item
The set of types is defined by the grammar:
 \[ \begin{array}{rcl}
A,B &::=& `B \mid `v \mid A\arrow B \mid \neg A
 \end{array} \]
A \emph{context} (of term variables) $`G$ is defined as before.

 \item
The type assignment rules for $\nlm $ are:
 \[ \def\Turnnlm{\Turn}
 \begin{array}{rl@{\dquad}rl@{\dquad}rl}
(\Ax) : &
\Inf	{ \dernlm `G,x`:A |- x : A }
&
(\arrI) : &
\Inf	
	{ \dernlm `G,x`:A |- M : B 
	}{ \dernlm `G |- `lx.M : A\arrow B }
&
(\arrE) : &
\Inf	{ \dernlm `G |- M : A\arrow B 
	\quad
	\dernlm `G |- N : A 
	}{ \dernlm `G |- MN : B }
 \\ [5mm]
(`m): &
\Inf	{ \dernlm `G,x`:\neg A |- M : `B 
	}{ \dernlm `G |- `mx.M : A }
&
(\negI) : &
\Inf	{ \dernlm `G,x`:A |- M : `B
	}{ \dernlm `G |- `nx.M : \neg A }
&
(\negE) : &
\Inf	{ \dernlm `G |- M : \neg A 
	\quad
	\dernlm `G |- N : A 
	}{ \dernlm `G |- [M]N : `B }
 \end{array} \] 
We will write $\dernlm `G |- M : A $ for statements derivable in this system.

 \end{enumerate}
 \end{definition}
Notice that $`B$ \emph{is} a type in $\Turnnlm$.
Because the inference rules of $\TurnNI$ can be obtained from those of $\Turnnlm$ by removing all term information, it is immediately clear that all proofs in $\TurnNI$ have a term representation in $\nlm$. 

 \begin{example}
In this calculus, the witness for double negation elimination becomes:
 \[ \def\Turnnlm{\Turn}
\Inf	[\arrI]
	{\Inf	[`m]
		{\Inf	[\negE]
			{\Inf	[\Ax]
				{ \dernlm y`:\neg \neg C,x`:\neg C |- y : \neg \neg C }
			 \Inf	[\Ax]
				{ \dernlm y`:\neg \neg C,x`:\neg C |- x : \neg C }
			}{ \dernlm y`:\neg \neg C,x`:\neg C |- [y]x : `B }
		}{ \dernlm y`:\neg \neg C |- `Mx.[y]x : C }
	}{ \dernlm {} |- { `Ly . `Mx. [y]x } : (\neg \neg C)\arrow C }
 \]
 \end{example}

The presence of reduction rules $`m`n$ and $`m`m$ in Definition~\ref{nlm red} is remarkable, since they do not correspond to proof contractions in a proof system that uses $`B$ only to represent conflict.
Both rule $(`m)$ and $(\negI)$ are only applicable to a statement of the shape $ \dernlm `G |- M : `B $; the rule $(`m)$ above them implies an assumption of the shape $\neg`B$, which is allowed since in \cite{Summers-PhD'08}, $`B$ is a type, so the following are valid derivations.
 \[ \def\Turnnlm{\Turn}
\Inf	[\negI]
	{\Inf	[`m]
		{\InfBox{\dernlm `G,y`:A,x`:\neg `B |- M : `B }
		}{\dernlm `G,y`:A |- `Mx.M : `B }
	}{\dernlm `G |- `Ny.`Mx.M : \neg A }
\tquad
\Inf	[`m]
	{\Inf	[`m]
		{\InfBox{\dernlm `G,y`:\neg A,x`:\neg `B |- M : `B }
		}{\dernlm `G,y`:\neg A |- `Mx.M : `B }
	}{\dernlm `G |- `My.{`Mx.M} : A }
 \]
Moreover, treating $`B$ as a type gives that negation is represented in two different ways in $\nlm$.
In all, after the choices made by Summers, the calculus is rather too permissive and below, we will choose to not treat $`B$ as a type.

 \begin{example}
We can inhabit the proof for $ (A\arrow B)\arrow \neg B\arrow \neg A $ in both $\Turnnlm$ and $\Turnlmu$. 
Let $`G = x`:A\arr B, y`:\neg B,z`:A$, and $`G' = x`:A\arr B, y`:B\arr`B,z`:A$, then we can construct, respectively:
 \[ 
 \begin{array}{c}
 \def\Turnlmn{\Turn}
\Inf	[\arrI]
	{\Inf	[\arrI]
		{\Inf	[\negI]
			{\Inf	[\negE]
				{\Inf	[\Ax]{ \derlmn `G |- y : \neg B }
				 \Inf	[\arrE]
					{\Inf	[\Ax]{\derlmn `G |- x : A\arrow B }
					 \Inf	[\Ax]{\derlmn `G |- z : A }
					}{\derlmn `G |- xz : B }
				}{\derlmn `G |- {`@ [y] (xz) } : `B }
			}{ \derlmn `G{\setminus} z |- {`Nz.`@ [y] (xz) } : \neg A }
		}{ \derlmn x`:A\arrow B |- {`Ly.`Nz.`@ [y] (xz) } : \neg B\arrow \neg A }
	}{ \derlmn {} |- {`Lxy.`Nz.`@ [y] (xz) } : (A\arrow B)\arrow \neg B\arrow \neg A }	
 \end{array} 
 \] \[
 \begin{array}{c}
 \def\Turnlmu{\Turn}
\Inf	[\arrI]
	{\Inf	[\arrI]
		{\Inf	[\arrI]
			{\Inf	[\arrE]
				{\Inf	
					{ \derlmu `G' |- y : B\arr`B | {} }
				 \Inf	[\arrE]
					{\Inf 
						{\derlmu `G' |- x : A\arrow B | {} }
					 \Inf	
						{\derlmu `G' |- z : A | {} }
					}{\derlmu `G' |- xz : B | {} }
				}{\derlmu `G' |- y(xz) : `B | {} }
			}{ \derlmu `G'{\setminus} z |- `Lz.y(xz) : A\arr`B | {} }
		}{ \derlmu x`:A\arrow B |- `Lyz.y(xz) : (B\arr`B)\arrow A\arr`B | {} }
	}{ \derlmu {} |- `Lxyz.y(xz) : (A\arrow B)\arrow (B\arr`B)\arrow A\arr`B | {} }
 \end{array} 
 \]
Since this property holds in intuitionistic logic, it is no surprise that the co-context is not needed in $\Turnlmu$.

For $\TurnNI$, we can also show the counterpart, $ (\neg B\arrow \neg A)\arrow A\arrow B $. Let $`G = x`:\neg B\arrow \neg A,$ $y`:A,$ $z`:\neg B$.
 \[ \def\Turnlmn{\Turn}
\Inf	[\arrI]
	{\Inf	[\arrI]
		{\Inf	[`m]
			{\Inf	[\negE]
				{\Inf	[\arrE]
					{\Inf	[\Ax]{\derlmn `G |- x : \neg B\arrow \neg A }
					 \Inf	[\Ax]{\derlmn `G |- z : \neg B }
				 	}{\derlmn `G |- xz : \neg A }
				 \Inf	[\Ax]{\derlmn `G |- y : A }
				}{\derlmn `G |- {`@ [xz] y } : `B }
			}{\derlmn x`:\neg B\arrow \neg A,y`:A |- {`M z . {`@ [xz] y }} : B }
		}{\derlmn x`:\neg B\arrow \neg A |- {`Ly.`Mz.{`@ [xz] y }} : A\arrow B }
	}{\derlmn {} |- {`Lxy.`M z .{ `@ [xz] y }} : (\neg B\arrow \neg A)\arrow A\arrow B }
 \]
We can do a similar thing in $\Turnlmu$ (with $`G = x`:(B\arr`B)\arrow A\arr`B,y`:A$) :
 \[ \def\Turnlmu{\Turn}
\Inf	[\arrI]
	{\Inf	[\arrI]
		{\Inf	[`m]
			{\Inf	[\arrE]
				{\Inf	[\arrE]
					{
\multiput(30,0)(0,50){7}{.}
\raise3\RuleH \hbox to 10mm {\kern-20mm
					 \Inf	[\Ax]{\derlmu `G |- x : (B\arr`B)\arrow A\arr`B | {} }
}
					 \Inf	[\arrI]
						{\Inf	[`m]
							{\Inf	[\Ax]
							{ \derlmu `G,z`:B |- z : B | `d`:`B,`a`:B,`g`:`B }
							}{ \derlmu `G,z`:B |- { `M`d.[`a]z } : `B | `a`:B,`g`:`B }
						}{ \derlmu `G |- { `Lz. `M`d.[`a]z } : B\arrow `B | `a`:B,`g`:`B }
				 	}{\derlmu `G |- x({ `Lz. `M`d.[`a]z }) : A\arr`B | `a`:B,`g`:`B }
				 \Inf	[\Ax]{\derlmu `G |- y : A | `a`:B,`g`:`B }
				}{\derlmu `G |- { `@ {`@ x ( `Lz. `M`d.[`a]z ) } y } : `B | `a`:B,`g`:`B }
			}{\derlmu `G |- {`M `a . [`g] { `@ {`@ x ( `Lz. `M`d.[`a]z ) } y } } : B | `g`:`B }
		}{\derlmu `G{\setminus} y |- {`Ly.`M `a . [`g] { `@ {`@ x ( `Lz. `M`d.[`a]z ) } y } } : A\arrow B | `g`:`B }
	}{\derlmu {} |- {`Lxy.`M `a . [`g] { `@ {`@ x ( `Lz. `M`d.[`a]z ) } y } } : ((B\arr`B)\arrow A\arr`B)\arrow A\arrow B | `g`:`B }
 \]
Notice that the same kind of transformation has been applied to replace the negated assumption $ \derlmn `G |- z : \neg B $ on the right, and that again the witness for the property is not a closed term in $\lmu$ ($`g$ is free).
 \end{example}

 \section{The $\lmn$-calculus} \label{lmn}

We now present the calculus $\lmn $ we introduce in this paper, which can be seen as a variant of $\lmu$ that gives a Curry-Howard interpretation to the logical system below, which corresponds to $\TurnF$, extended with negation by treating it as a first-class citizen.
Our aim is to fully represent proofs in $\TurnNI$ in a natural way, but defining a calculus with a notion of reduction that is confluent.

We call this calculus $\lmn$ in honour of Mendelson's formal axiomatic theory L for the propositional calculus \cite{Mendelson'64}.
Adapted to our notation, L is defined through:

\begin{samepage}
 \begin{definition} [L cf.~\cite{Mendelson'64}] \label{theory L}
 \begin{enumerate}
 \item (Well formed) formulas are defined through the grammar:
 \[ \begin{array}{rcl}
A,B & ::=& p \mid \neg A \mid A\arr B
 \end{array} \]
where $p$ ranges over \emph{statement letters}.
 \item
If $A$, $B$ and $C$ are formulas of L, then the following are axioms of L:
 \begin{description}
 \item[A1] $ A\arr B\arr A $.
 \item[A2] $ (A\arr B\arr C)\arr (A\arr B )\arr A\arr C $.
 \item[A3] $ (\neg B\arr \neg A )\arr ( \neg B\arr A )\arr B $.
 \end{description}

 \item The only rule of inference of L is \emph{modus ponens} $(\arrE)$: $B$ is a direct consequence of $A$ and $A\arr B$. 

 \end{enumerate}
 \end{definition}
 \end{samepage}
 
The first two rules form the axiom-schemes for intuitionistic implicational logic; the third rule renders the system classical.
For example, using these three rules it is possible to show $ \neg\neg C\arr C $ (for details, see \cite{Mendelson'64}, Lemma 1.11(a)).

The attentive reader will recognise the types of the combinators $K$ and $S$ of Curry's Combinatory Logic \cite{Curry'34,Curry-Feys'58} in the first two axioms; this is the origin of the Curry-Howard isomorphism \cite{Curry'34}.
Of course here we follow Church's approach, by defining an extended $`l$-calculus.

We will base $\lmn$ on a variant of the system $\TurnF$ defined below; notice that, because we use negation explicitly, as in $\nlm$ we no longer have to separate the negated formulas from the non-negated ones.
 \[ \begin{array}{rl@{\dquad}rl@{\dquad}rl}
(\Ax) : &
\Inf	{ \derlog `G,A |- A }
&
(\arrI) : &
\Inf	{ \derlog `G,A |- B
	}{ \derlog `G |- A\arrow B }
&
(\arrE) : &
\Inf	{ \derlog `G |- A\arrow B
	 \quad
	 \derlog `G |- A
	}{ \derlog `G |- B }
 \end{array} \] 
 \[ \begin{array}{rl@{\dquad}rl@{\dquad}rl@{\dquad}rl}
(\negI) : &
\Inf	{ \derlog `G,A |- M 
	}{ \derlog `G |- \neg A }
&
(\negE) : &
\Inf	{ \derlog `G |- \neg A 
	\quad
	\derlog `G |- A 
	}{ \derlog `G |- `B }
&
(\Act) : &
\Inf	{ \derlog `G,\neg A |- `B
	}{ \derlog `G |- A }
&
(\Pass) : &
 \Inf	{ \derlog `G,\neg A |- A
	}{ \derlog `G,\neg A |- `B }
 \end{array} \] 
Notice that $A$ in rules $(\Act)$ and $(\Pass)$ can be a negated formula.
The rule $(\Pass)$ could be omitted, since, as before, we can derive:
 \[
\Inf	[\negE]
	{\Inf	[\Ax]
		{ \derlog `G,\neg A |- \neg A }
	 \quad
	 \InfBox{ \derlog `G,\neg A |- A }
	}{ \derlog `G,\neg A |- `B }
 \]
We keep the rule, however, since we want to preserve the fact that in rule $(\Act)$ we only cancel a negated assumption that was used on the left in $(\negE)$; notice that that characteristic is not expressed in the logic, but will be once we represent the structure of proofs through syntax.

 \begin{definition}[Syntax of $\lmn $] \label{lmn-terms}
The set of $\lmn $-\emph{terms} we consider is defined over variables and names by the grammar:
 \[ \begin{array}{rcl@{\quad}l}
M,N &::=& x \mid `lx.M \mid M N \mid `n x.M \mid [M] N \mid `m`a . M \mid [`a] N 
 \end{array} \]
 

 
 \end{definition}
Notice that $`a$ is not a term.
Since $`B$ is not a type, type assignment (see Definition~\ref{tas lmn} below) will only allow $`m$-binding to terms of the shape $[`a] Q $ or $`@ [P] Q $, so staying close to $\lmu$.

We will use $\lmn$ for the set of terms defined above, as well as for the system based on that, including the reduction and type assignment rules.
In $\lmn $, reduction of terms is expressed via three types of implicit substitution.
As usual, $M\tsubst[N/x]$ stands for the (instantaneous) substitution of all occurrences of $x$ in $M$ by $N$.
The definition of structural substitution for $\lmn$ is defined as for $\lmu$ (Definition~\ref{struct sub lmu}), but with small modifications.

 \begin{definition}[Structural substitution in $\lmn$] \label{struct sub nlm}
\emph{Structural substitution}, $M \tsubst[N.`g/`a]$ and \emph{insertion} $ M\tsubst[N/`a] $ are defined inductively over terms.
We give the main cases:
 \[ \begin{array}{rcl}
([`a]M) \tsubst[N.`g/`a] & \ByDef & [`g](M \tsubst[N.`g/`a] N) \\ {}
([`b]M) \tsubst[N.`g/`a] & \ByDef & [`b] ( M \tsubst[N.`g/`a] ) ~ (`b \not= `a) \\
 \end{array} 
 \quad
 \begin{array}{rcl}
([`a]M) \tsubst N/`a & \ByDef & `@ [M] N \\ {}
([`b]M) \tsubst N/`a & \ByDef & [`b] ( M\tsubst N/`a ) ~ (`b \not= `a) \\
 \end{array} \]
 \end{definition}

We have the following notion of reduction on $\lmn $.

 \begin{definition}[$\lmn $ reduction] \label{lmn reduction}
 \begin{enumerate}
 \item
The reduction rules of $\lmn $ are:
 \[ \begin{array}{rrcl}
(`b) : & `@ (`Lx.M ) N & \reduc & M \tsubst[N/x] \\
(`n) : & `@ [`Nx.M ] N & \reduc & M \tsubst[N/x] \\
(`m) : & `@ (`M`a . M ) N & \reduc & `M`g . M \tsubst[N.`g/`a] \quad (`g\textit{ fresh}) \\
 \end{array} \quad \begin{array}{rrcl}
(`d) : & `@ [`m`a.M ] N & \reduc & M \tsubst[N/`a] \\
(\Erase) : & `M`a . [`a] M & \reduc & M \quad (`a\notele M) \\
(\Rename) : & `@ [`b] `M`g.M & \reduc & M \tsubst[`b/`g]
 \end{array} \]

Evaluation contexts are defined by:
 \[ \begin{array}{rcl}
 \Cont
	&::=& 
\EmptyCont \mid `lx.{\Cont} \mid {\Cont} M \mid M {\Cont} \mid `n x . \Cont \mid [\Cont] M \mid [M] {\Cont} \mid `m `a . \Cont \mid [`a] {\Cont}
 \end{array} \]

Reduction $\redlmn$ on $\lmn $-terms is defined through
$ \Cont[M] \redlmn \Cont[N]$ if $M \reduces N$ using either the $`b$, $`n$, $`m$, $`d$, $\Erase$, or $\Rename$-reduction rule.
As usual, we will use $\eqredlmn$ for the reflexive closure, and $\rtcredlmn$ for the reflexive, transitive closure of $\redlmn$.

 \end{enumerate}
 \end{definition}

Since syntax and reduction rules for $\lmn$ are direct extensions of those for $\lmu$, we can show easily that reduction in $\lmn$ is a conservative extension of reduction in $\lmu$.

 \begin{theorem}
If $M$ and $N$ are $\lmu$ terms such that $M \rtcredbmu N$, then $M \rtcredlmn N$.
 \end{theorem}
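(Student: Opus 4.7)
The plan is to prove the theorem by induction on the length of the reduction chain $M \rtcredbmu N$, reducing it to a one-step simulation claim: if $P \redbmu Q$ in a single $\lmu$-step between $\lmu$-terms, then $P \rtcredlmn Q$. The base case is trivial and the inductive step closes by transitivity of $\rtcredlmn$.

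Two structural observations underpin the simulation. First, every $\lmu$-term embeds into $\lmn$: the constructs $x$, $`lx.M$ and $`@ M N$ are shared, while the $\lmu$-abstraction $`M`a.[`b]P$ is read in $\lmn$ as the composition $`m`a.([`b]P)$ of the $\lmn$-constructs $`m`a.N$ and $[`a]N$. Likewise every $\lmu$ evaluation context embeds into an $\lmn$ evaluation context: the $\lmu$ context-former $`M`a.[`b]\Cont$ decomposes in $\lmn$ as the nesting $`m`a.([`b]\Cont)$. Moreover, the two versions of the structural substitution $\tsubst[N.`g/`a]$ given in \Def~\ref{struct sub lmu} and \Def~\ref{struct sub nlm} coincide on $\lmu$-terms, since they act non-trivially only on commands $[`a]P$ and treat such commands identically; this is a routine induction on term structure.

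For the one-step simulation I would case-split on the rule fired. For $(`b)$, $(`m)$, and $(\Erase)$ the rule is present verbatim in $\lmn$ with the same substitution behaviour (using the substitution-coincidence just noted for $(`m)$), so one $\lmn$-step inside the embedded context discharges the case. The case $(\Rename)$ is the main, if modest, obstacle, because the two calculi slice the rule at different granularities: the $\lmu$-rule rewrites the whole term $`M`a.[`b]({`M`g.[`d]P})$, whereas the $\lmn$-rule $[`b]({`m`g.M}) \reduc M\tsubst[`b/`g]$ rewrites only the inner command. Firing the $\lmn$-rule at the position $[`b]({`m`g.[`d]P})$ inside the $\lmn$ context $`m`a.\EmptyCont$ yields $`m`a.([`d]P)\tsubst[`b/`g]$, and unfolding the name-renaming substitution on the command $[`d]P$ recovers both subcases $`d = `g$ and $`d \neq `g$ of the $\lmu$ $\Rename$ rule exactly. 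With this case settled, the full theorem follows by induction on reduction length.
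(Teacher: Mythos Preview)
Your proposal is correct and is a careful fleshing-out of what the paper dismisses in one word (``Straightforward''). Your identification of the $(\Rename)$ case as the only one requiring a moment's thought is apt: the $\lmu$ rule operates on the larger pattern $`M`a.[`b]{`M`g.[`d]M}$, whereas the $\lmn$ rule fires on the inner command $[`b]`M`g.M$; your observation that one $\lmn$ step inside the context $`m`a.\EmptyCont$ reproduces both subcases of the $\lmu$ rule is exactly the content the paper suppresses.
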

 \begin{proof}
Straightforward. \QED
 \end{proof}
\noindent
A similar result cannot be shown for $\nlm$, nor for $\CBV$ reduction in $\lmu$.

Type assignment for $\lmn$ is defined through:

 \begin{definition}[Type assignment for $\lmn $] \label{tas lmn} \label{lmn rules}%
 \begin{enumerate}

 \item
The set of types $\LTypes$ is defined by the grammar:
 \[ \begin{array}{rcl}
A,B &::=& `v \mid A\arrow B \mid \neg A
 \end{array} \]
where `$\arr$' associates to the right and `$\neg$' binds stronger than `$\arr$'.
If $A = \neg B$, we call $A$ a \emph{negated} type, and if $A = \neg B$, but $B \not= \neg C$, we call $A$ a \emph{single negated} type.
If $A = \neg \neg B$, we call $A$ a \emph{double negated type}, where $B$ could be a negated type as well.

 \item
A \emph{context} $`G$ is defined as a partial mapping from term variables to types (which can be negated) and names to negated types, denoted as a finite set of \emph{statements} $x`:A$ and $`a`:\neg B$, such that the \emph{subjects} of the statements are distinct.

We define $\OL{`G}$ through:
 \[ \begin{array}{rcl}
\OL{`{}} &=& `{} \\
\OL{`G,x`:A} &=& \OL{`G},A \\ 
\OL{`G,`a`:\neg A} &=& \OL{`G},\neg A 
 \end{array} \] 

 \item
The type assignment rules for $\lmn $ are:
 \[ \begin{array}{c}
 \def\Turnlmn{\Turn} 
 \begin{array}{rl@{\quad}rl@{\quad}rl}
(\Ax) `: &
\Inf	{ \derlmn `G,x`:A |- x : A }
&
(`m) :  &
\Inf	{ \derlmn `G,`a`:\neg A |- M : `B 
	}{ \derlmn `G |- {`m `a . M } : A }
&
(\Name) `:  &
\Inf	{ \derlmn `G |- N : A 
	}{ \derlmn `G,`a`:\neg A |- [`a]N : `B }
 \end{array} 
 \\ [5mm]
 \def\Turnlmn{\Turn} \begin{array}{rl@{\quad}rl}
(\arrI) `: &
\Inf	
	{ \derlmn `G,x`:A |- M : B 
	}{ \derlmn `G |- {`lx.M} : A\arrow B }
&
(\arrE) `: &
\Inf	{ \derlmn `G |- M : A\arrow B 
	\quad
	\derlmn `G |- N : A 
	}{ \derlmn `G |- MN : B }
 \end{array} 
 \\ [5mm]
 \def\Turnlmn{\Turn} \begin{array}{rl@{\quad}rl}
(\negI) `:  &
\Inf	{ \derlmn `G,x`:A |- M : `B 
	}{ \derlmn `G |- {`n x.M } : \neg A }
&
(\negE) `:  &
\Inf	{ \derlmn `G |- M : \neg A 
	\quad
	\derlmn `G |- N : A 
	}{ \derlmn `G |- [M]N : `B }
 \end{array} 
 \end{array} \]

\noindent
We will write $\derlmn `G |- M : A $ for statements derivable in this system.

 \end{enumerate}
 \end{definition}
Notice that $`B$ is not a type.
Also, in rule $(\Name)$, $`a`:\neg A$ is added to the context; this allows for that statement to already occur there.
In all the rules where a variable or name is bound, by our variable convention it does not occur in the context in the conclusion.
The notation $\OL{`G}$ will be used in Corollary~\ref{from NI to lmn}.

 \begin{example}
In this calculus, $ `Ly . `M`a. {`@ [y] ( `Nx.[`a] x ) } $ is the witness for double negation elimination:
 \[ \def\Turnlmn{\Turn}
\Inf	[\arrI]
	{\Inf	[`m]
		{\Inf	[\negE]
			{\Inf	[\Ax]
				{ \derlmn y`:\neg \neg C,`a`:\neg C |- y : \neg \neg C }
			 \Inf	[\negI]
				{\Inf	[\Name]
					{\Inf	[\Ax]{ \derlmn y`:\neg \neg C,x`:C |- x : C }
					}{ \derlmn y`:\neg \neg C,x`:C,`a`:\neg C |- [`a]x : `B }
				}{ \derlmn y`:\neg \neg C,`a`:\neg C |- `Nx.[`a]x : \neg C }
			}{ \derlmn y`:\neg \neg C,`a`:\neg C |- { `@ [y] ( `Nx.[`a]x ) } : `B }
		}{ \derlmn y`:\neg \neg C |- `M`a.{ `@ [y] ( `Nx.[`a]x ) } : C }
	}{ \derlmn {} |- { `Ly . `M`a. {`@ [y] ( `Nx.[`a]x ) }} : (\neg \neg C)\arrow C }
 \]
Observe that $ `Ly . `M`a. {`@ [y] ( `Nx.[`a] x ) }$ is a closed term.

It is also straightforward to find untypeable terms. For example, we cannot type a term like $`@ [`lx.M] N $ since type assignment would require a negated type for $`lx.M$, nor $ `m`a.`lx.M $ since that would require $`B$ for $`lx.M$, nor $ `ny.`lx.M $, $ `m`a.MN $, $ `m`a.`nx.M $, etc.
 \end{example}

We can also how that type assignment in $\lmn$ is a conservative extension of that in $\lmu$.

 \begin{theorem}
If $\derlmu `G |- M : A | `D $, then $\derlmn `G,\neg `D |- M : A $.
 \end{theorem}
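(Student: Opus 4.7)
The plan is to proceed by induction on the derivation of $ \derlmu `G |- M : A | `D $. The guiding idea is that the names declared in the co-context $`D$, which in $\lmu$ are implicitly read as negated alternative conclusions, are realised in $\lmn$ as explicit hypotheses $`a`:\neg B$ merged into the main context; this is precisely what the notation $\neg`D$ in the statement expresses. Consequently each $\lmu$ rule will be matched by either a single $\lmn$ rule or a short combination, and the argument reduces to a rule-by-rule verification.

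The base case $(\Ax)$ is immediate, since $`G,x`:A,\neg`D$ is a well-formed $\lmn$-context and $(\Ax)$ applies directly. The cases $(\arrI)$ and $(\arrE)$ follow from the induction hypothesis by applying the eponymous $\lmn$ rule, with $\neg`D$ carried through unchanged. The only interesting case is $(`m)$, which in $\lmu$ fuses what in $\lmn$ is an application of $(\Name)$ followed by one of $(`m)$. For the first variant, from $ \derlmu `G |- M : B | `a`:A,`b`:B,`D $ one concludes $ \derlmu `G |- {`m `a.[`b]M} : A | `b`:B,`D $; the induction hypothesis yields $ \derlmn `G,`a`:\neg A,`b`:\neg B,\neg`D |- M : B $, whence $(\Name)$ produces $ \derlmn `G,`a`:\neg A,`b`:\neg B,\neg`D |- [`b]M : `B $, and finally $(`m)$ discharges $`a`:\neg A$ to deliver $ \derlmn `G,`b`:\neg B,\neg`D |- {`m `a.[`b]M} : A $. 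The second variant, in which the inner naming name coincides with the name being bound, is handled in exactly the same way, applying $(\Name)$ and then $(`m)$ on the same name.

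No genuine obstacle is anticipated: the decomposition of $\lmu$'s $(`m)$ into the $(\Name)$-then-$(`m)$ pattern is the only substantive observation, and once the pointwise translation $`a`:B \ele `D \mapsto `a`:\neg B \ele \neg`D$ is fixed the remainder of the proof is routine bookkeeping on contexts. Barendregt's convention ensures that the fresh names introduced by the $\lmu$ rules do not already appear in $\neg`D$, so the $\lmn$ contexts produced at each step are automatically well-formed.
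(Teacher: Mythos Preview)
Your proposal is correct and is precisely the argument the paper elides with its one-word proof ``Easy.'' The only nontrivial point is the one you identify: the $\lmu$ rule $(`m)$ decomposes in $\lmn$ into an application of $(\Name)$ followed by $(`m)$, which is exactly how you handle it.
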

 \begin{proof}
Easy. \QED
 \end{proof}
 \noindent
Again, a similar result cannot be shown for $\nlm$.

It will be clear that, once allowing Greek characters for variables as well, the rule $(\Name)$ is admissible in $\Turnnlm$, as was the case above for rule $(\Pass)$.
Observe that, if $\derlmn `G,`a`:\neg A |- M : `B $, in order for the derivation for $\derlmn `G |- `m`a.M : A $ to be used as a subderivation, either $A = B\arr C$, or $A = \neg B$, for some $B$ and $C$.

We will now show that types are preserved under reduction.
For this we need a weakening result.

 \begin{lemma} [Weakening and thinning for $\Turnlmn$] \label{lmu thinning lemma} \label{lmu weakening lemma}
The following rules are admissible for $\Turnlmn$:
 \[ \def \Turnlmn {\Turn} \begin{array}{rl@{\dquad}rl}
(\Weak) : &
\Inf	[`G\subseteq `G']
	{\derlmn `G |- M : A
	}{\derlmn `G' |- M : A }
& 
(\Thin) : &
\Inf	[`G' = \Set{x`:B \ele `G \mid x \ele \fv(M)}]
	{\derlmn `G |- M : A
	}{\derlmn `G' |- M : A }
 \end{array} \]
 \end{lemma}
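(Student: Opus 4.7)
The plan is to prove both rules by routine structural induction on the derivation of $\derlmn `G |- M : A$, establishing weakening first so that it is available during the proof of thinning.

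For \emph{weakening}, each case is either immediate or straightforward. The $(\Ax)$ case follows from $x`:A \ele `G \subseteq `G'$. In the binding rules $(\arrI)$, $(`m)$, and $(\negI)$ we invoke Barendregt's convention to keep the bound identifier fresh for $`G'$, apply the IH to $`G'$ extended by the bound statement, and re-apply the same rule. The two-premise rules $(\arrE)$ and $(\negE)$ use the same $`G'$ in both premises. The only slightly delicate case is $(\Name)$, whose conclusion \emph{adds} $`a`:\neg A$ rather than discharging it: from $`G \subseteq `G'$ we get $`G,`a`:\neg A \subseteq `G',`a`:\neg A$, so the IH on the premise applies; if $`a`:\neg A$ already occurs in $`G'$, the compatibility condition on context union merges the two entries.

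For \emph{thinning}, the same induction is driven by the free identifiers of each subterm, reading $`G'$ as containing every statement of $`G$ whose subject occurs free in $M$. In the binding rules, the IH restricts the premise's context to the body's free identifiers, which may still include the bound one; that identifier is then cancelled by the rule, leaving exactly the restriction required in the conclusion. In the two-premise rules $(\arrE)$ and $(\negE)$, the IH yields sub-derivations whose contexts are restricted to the free identifiers of each individual subterm, and both restrictions are contained in that required for the composite term; invoking the already-proved weakening lifts both sub-derivations to the common context that the rule demands. The $(\Name)$ case is analogous, using that $`a$ is free in $[`a]N$ so $`a`:\neg A$ is retained by the restriction.

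The argument is entirely routine and parallels Lemma~\ref{lmn thinning lemma} for $\lmu$; no genuine obstacle is anticipated. The only feature specific to $\lmn$ warranting a check is that $(\Name)$ introduces rather than cancels a name statement, and this is handled by the observations above.
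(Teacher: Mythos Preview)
Your proposal is correct and follows exactly the approach the paper has in mind: the paper's own proof is simply ``Standard.'', and what you have written is precisely the standard structural induction on derivations, with the expected use of Barendregt's convention at binders and of weakening inside the two-premise rules when proving thinning. Your observation about $(\Name)$ introducing rather than discharging a statement is the only $\lmn$-specific wrinkle, and you handle it correctly.
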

 \begin{proof}
Standard.\qed
 \end{proof}
Notice that, by our extension of Barendregt's convention in \Def~\ref{lmu rules}, $`G'$ cannot contain statements for the bound names and variables in $M$.

 \begin{example}
We illustrate the reduction rule $`d$:
 \[ \def \Turnlmn{\Turn}
\Inf	[\negE]
	{\Inf	[`m]
		{\Inf	{\Inf	[\Name]
				{\Inf	[\negI]
					{\InfBox{D}{ \derlmn `G,x`:A |- P : `B }
					}{ \derlmn `G |- `Nx.P : \neg A }
				}{ \derlmn `G,`a`:\neg\neg A |- {`@ [`a] `Nx.P } : `B }
			}{\InfBox<124>{ \derlmn `G,`a`:\neg\neg A |- M : `B } }
		}{ \derlmn `G |- `m`a.M : \neg A }
	 \quad
	 \InfBox { \derlmn `G |- N : A } 
	}{ \derlmn `G |- {`@ [`m`a.M] N } : `B }
\dquad
\Inf	{\Inf	[\negE]
		{\Inf	[\negI]
			{\InfBox{ \derlmn `G,x`:A |- P : `B }
			}{ \derlmn `G |- `Nx.P : \neg A }
		 \quad
		 \InfBox { \derlmn `G |- N : A } 
		}{ \derlmn `G |- {`@ [`Nx.P] N } : `B }
	}{\InfBox<87>{ \derlmn `G |- M\tsubst[N/`a] : `B } }
 \]
It might have been more natural, similar to the approach of \cite{Summers-PhD'08}, to define 
 \[ \begin{array}{rcl}
([`a]M)\tsubst N/`a & \ByDef & `@ [{`Nz.[z]N}] M 
 \end{array} \]
which would have created the subterm $ `@ [{`Nz.[z]N}] P $ in the derivation above; however, notice that $ `@ [{`Nz.[z]N}] P \redlmn `@ [P] N $ and that $\tsubst`Nz.[z]N/`a $ only ever gets applied `to the left'.

 \end{example}

We will now show that type assignment is closed under reduction.
First we show results for the three notions of term substitution.

 \begin{lemma}[Substitution lemma] 
 \label{term substitution lemma} \label{lem:substitution}
 \label{right structural substitution lemma} \label{left structural substitution lemma}
 \begin{enumerate}
 \item
If $\derlmn `G,x`:B |- M : A $ and $\derlmn `G |- L : B $, then \\ $\derlmn `G |- M\tsubst[L/x] : A $.

 \item 
If $\derlmn `G,`a`:\neg (B\arrow C) |- M : A $ and $ \derlmn `G |- L : B $, then $\derlmn `G,`g`: \neg C |- M\tsubst[L.`g/`a] : A $.

 \item 
If $\derlmn `G,`a`:\neg \neg B |- M : A $ and $ \derlmn `G |- L : B $, then $\derlmn `G |- M\tsubst[N/`a] : A $.

 \end{enumerate}
 \end{lemma}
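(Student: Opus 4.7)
The plan is to prove all three parts uniformly, each by induction on the structure of the term $M$ (equivalently, on the derivation $\derlmn `G,\ldots |- M : A $), since the three substitutions are all defined inductively on $M$. In every case except when the topmost construct of $M$ is a command $[`a]P$ interacting with the substitution, we simply push the substitution through the constructor, apply the induction hypothesis to the immediate subterms, and reassemble the derivation. Barendregt's convention ensures that no bound variable or name in $M$ clashes with the free ones in $L$, so no renaming is required. For part (1), the only genuinely novel case is the $(\Ax)$ base case: if $M = x$, the substitution produces $L$, and thinning (Lemma \ref{lmu thinning lemma}) delivers the result; if $M = y \neq x$, the derivation $\derlmn `G,x`:B |- y : A $ can be replayed after deleting the unused $x`:B$.

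For part (2), the key case is $M = [`a]P$ with the derivation ending in $(\Name)$, so $`a`:\neg(B\arr C)$ and $\derlmn `G |- P : B\arr C $. By definition $([`a]P)\tsubst[L.`g/`a] = [`g](P\tsubst[L.`g/`a]\ L)$. I apply the induction hypothesis on (2) to $P$, obtaining $\derlmn `G,`g`:\neg C |- P\tsubst[L.`g/`a] : B\arr C $. Weakening the hypothesis $\derlmn `G |- L : B $ with $`g`:\neg C$, an application of $(\arrE)$ types $P\tsubst[L.`g/`a]\ L$ with $C$, and then $(\Name)$ with $`g`:\neg C$ produces the required judgement for $[`g](P\tsubst[L.`g/`a]\ L)$. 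The subcase $M = [`b]P$ with $`b \neq `a$ just pushes the substitution through; all other constructors (abstractions, applications, $`m`a'$, $`nx$, $[P]Q$) are straightforward invocations of the induction hypothesis.

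For part (3), the decisive case is again $M = [`a]P$, where $`a`:\neg\neg B$ forces $\derlmn `G |- P : \neg B $. Now $([`a]P)\tsubst[L/`a] = [P\tsubst[L/`a]]\ L$. By the induction hypothesis on (3), $\derlmn `G |- P\tsubst[L/`a] : \neg B $. Combined with $\derlmn `G |- L : B $ via $(\negE)$, this yields the desired $\derlmn `G |- [P\tsubst[L/`a]]\ L : `B $. Notice that no fresh name needs to be produced, which is why the conclusion retains the same context as the premise.

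The main technical obstacle is the bookkeeping needed when the substitution descends past binders: when $M = `m`b.M'$ or $M = `nx.M'$, the induction hypothesis must be applied to $M'$ under an extended context, and in part (2) the fresh name $`g$ introduced by the substitution has to be threaded consistently (by our freshness convention $`g$ does not occur in $`G$ or in any of the bound names of $M$). Because the three clauses cross-refer via nested subterms — for instance, a subterm of $M$ to which part (2) is applied may itself contain a $[`a]P$ triggering part (1) of the substitution — a clean presentation proves the three clauses simultaneously by a single induction on $M$, with weakening (Lemma \ref{lmu weakening lemma}) used whenever $L$'s derivation must be reused under a larger context. Beyond this, each case is a routine rule-by-rule check.
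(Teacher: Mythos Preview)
Your proposal is correct and matches the paper's proof: each part is shown by induction on the structure of $M$ (the paper phrases it as induction on the definition of the respective substitution), with the only non-routine case being $M = [`a]P$ in parts~(2) and~(3), handled exactly as you describe via $(\arrE)$ then $(\Name)$, respectively $(\negE)$. One small correction: the three substitutions do \emph{not} cross-refer --- each of $\tsubst[L/x]$, $\tsubst[L.`g/`a]$, and $\tsubst[L/`a]$ recurses only on itself --- so the paper proves the three parts by three independent inductions, and your proposed simultaneous induction is unnecessary (though harmless).
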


 \begin {proof}

 \begin {enumerate} \itemsep 4\point

 \item 
Standard, by induction on the definition of term substitution.

 \item
By induction on the definition of structural substitution.
All cases follow straightforwardly, except for:

 \begin{description}

 \item [{$ ([`a]N) \tsubst[L.`g/`a] \ByDef `@ [`g] N\tsubst[L.`g/`a] $}]
Then we have $ \derlmn `G,`a`:\neg (B\arrow C) |- N : B\arrow C $ by rule $(\Name)$, and $A = `B$.
Then, by induction, we have 
$ \derlmn `G,`g`:\neg C |- N\tsubst[L.`g/`a] : B\arrow C $.
Since we know that $\derlmn `G |- L : B $, we can construct:
 \[ \def\Turnlmn{\Turn}
\Inf	[\Name]
	{\Inf	[\arrE]
		{\InfBox { \derlmn `G,`g`:\neg C |- N\tsubst[L.`g/`a] : B\arrow C }
		 \quad
		 \Inf	[\Weak]
		 	{\InfBox { \derlmn `G |- L : B }
			}{ \derlmn `G,`g`:\neg C |- L : B }
		}{ \derlmn `G,`g`:\neg C |- { `@ (N\tsubst[L.`g/`a]) L } : C }
	}{ \derlmn `G,`g`:\neg C |- { [`g] `@ (N\tsubst[L.`g/`a]) L } : `B }
 \]
 \end{description}

 \item
By induction on the definition of insertion.
All cases follow straightforwardly, except for:

 \begin{description}
 \item[{$ ([`a]M)\tsubst N/`a \ByDef `@ [M] N $}]
Then $ A = `B $, and the derivation is of the shape:
 \[ \def \Turnlmn{\Turn}
\Inf	[\Name]
	{\InfBox{ \derlmn `G,`a`:\neg\neg B |- P : \neg B }
	}{ \derlmn `G,`a`:\neg\neg A |- [`a]P : `B }
 \]
By induction we have $ \derlmn `G |- P\tsubst[N/`a] : \neg B $, and we can construct:
 \[ \def \Turnlmn{\Turn}
\Inf	[\negE]
	{\InfBox{ \derlmn `G |- P\tsubst[N/`a] : \neg B }
	 \quad
	 \InfBox { \derlmn `G |- N : B } 
	}{ \derlmn `G |- [{P\tsubst[N/`a]}] N : `B }
 \]
\arrayqed[-6\point]

 \end {description}
 \end {enumerate}
 \end {proof}

Notice that the structural substitution $\tsubst[N.`g/`a]$ gets performed by building an application with any subterm $P$ that is named $`a$, resulting in $[`g] P N $ of type $B$. 
Moreover, the insertion $\tsubst[N/`a]$ gets performed for typed terms towards a name that has a double negated type, which disappears.

We will now show that type assignment respects reduction:

 \begin{theorem}[Soundness] \label{soundness lmu}
If $ \derlmn `G |- M : A $, and $M \redlmn N$, then $ \derlmn `G |- N : A $.
 \end{theorem}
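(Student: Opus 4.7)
The plan is to prove subject reduction by induction on the structure of the evaluation context $\Cont$ witnessing $M \redlmn N$, so that $M \same \Cont[M_0]$ and $N \same \Cont[N_0]$ for some base redex $M_0 \reduces N_0$. The inductive step is routine: for each clause in the grammar of evaluation contexts, invert the outermost typing rule, apply the induction hypothesis to the sub-derivation typing the redex position, and reassemble using the same rule. All substantive content lives in the base case $\Cont \same \EmptyCont$, where each of the six reduction rules has to be verified directly.

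The six base cases follow a uniform pattern: invert the typing derivation of the left-hand side to pin down the types of the immediate sub-terms, then conclude via one of the three substitution principles in Lemma~\ref{lem:substitution}, or via \Lmm~\ref{lmu thinning lemma}. In detail, the $(`b)$ case inverts $(\arrE)$ over $(\arrI)$ and applies Lemma~\ref{lem:substitution}(1); the $(`n)$ case inverts $(\negE)$ over $(\negI)$ and likewise applies Lemma~\ref{lem:substitution}(1), noting the result type is $`B$. The $(`m)$ case inverts $(\arrE)$ over $(`m)$ to expose $\derlmn `G,`a`:\neg(B\arr C) |- M : `B$ together with $\derlmn `G |- N : B$, then feeds these into Lemma~\ref{lem:substitution}(2) and wraps the result in $(`m)$ to recover type $C$. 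The $(`d)$ case inverts $(\negE)$ over $(`m)$ to expose $\derlmn `G,`a`:\neg\neg B |- M : `B$ and $\derlmn `G |- N : B$, and concludes by Lemma~\ref{lem:substitution}(3). For $(\Erase)$, inverting $(`m)$ and $(\Name)$ yields $\derlmn `G |- M : A$, and the side condition $`a \notele M$ lets thinning remove $`a`:\neg A$ safely. The $(\Rename)$ case reduces to a straightforward renaming-of-names principle, since inverting $(\Name)$ over $(`m)$ forces $`b$ and $`g$ to carry the same negated type $\neg A$, and the derivation of $M$ with $`g`:\neg A$ transports verbatim to one with $`b`:\neg A$ already in $`G$.

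I do not anticipate a genuine obstacle here: the three substitution lemmas have absorbed the real work, and this theorem is essentially their corollary. The only minor care point is in the $(\Erase)$ case, where inverting $(\Name)$ must be done under the observation that the premise context already need not mention $`a$; combined with $`a \notele M$ this legitimates the thinning step under the extended Barendregt convention of \Def~\ref{lmu rules}. A secondary care point, in the $(`m)$ case, is that the freshness of $`g$ in the reduct meshes exactly with the output context $`G,`g`:\neg C$ produced by Lemma~\ref{lem:substitution}(2), so that rule $(`m)$ can be reapplied without any side condition failing.
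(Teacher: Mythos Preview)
Your proposal is correct and matches the paper's proof essentially step for step: the paper also proceeds by induction on $\redlmn$, dispatches the contextual closure cases as routine, and handles each of the six base rules by inverting the typing derivation and appealing to the appropriate clause of \Lmm~\ref{lem:substitution} (or, for $(\Erase)$ and $(\Rename)$, to the sub-derivation directly). Your extra remarks on thinning in the $(\Erase)$ case and on freshness of $`g$ in the $(`m)$ case are slightly more explicit than the paper's treatment but amount to the same argument.
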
 

 \begin{proof}
By induction on the definition of $\redlmn$, where we focus on the basic reduction rules.

 \begin{description} \itemsep 2\point

 \item[$(`b)$] Then $M \same `@ ( `lx.P) Q \redname P\tsubst[Q/x] \same N $.
The derivation for $ \derlmn `G |- {`@ (`l x . P ) Q } : A $ is shaped like
 \[ \def \Turnlmn{\Turn}
\Inf	[\arrE]
	{\Inf	[\arrI]
		{\InfBox{ \derlmn `G,x`:B |- P : A }
		}{ \derlmn `G |- `lx.P : B\arrow A }
	 \quad
	 \InfBox{ \derlmn `G |- Q : B }
	}{ \derlmn `G |- { `@ (`lx.P) Q } : A }
 \]
In particular, we have $\derlmn `G,x`:B |- P : A $ and $ \derlmn `G |- Q : B $.
Then we have $ \derlmn `G |- P\tsubst[Q/x] : A $ by \Lmm~\ref{term substitution lemma}.

 \item[$(`n)$] Then $M \same `@ [ `Nx . P ] Q \redname P\tsubst[Q/x] \same N $.
Then $A = `B$ and the derivation for $ \derlmn `G |- {`@ [`N x . P ] Q } : `B $ is shaped like
 \[ \def \Turnlmn{\Turn}
\Inf	[\negE]
	{\Inf	[\negI]
		{\InfBox{ \derlmn `G,x`:B |- P : `B }
		}{ \derlmn `G |- `nx.P : \neg B }
	 \quad
	 \InfBox{ \derlmn `G |- Q : B }
	}{ \derlmn `G |- { `@ [`nx . P ] Q } : `B }
 \]
In particular, we have $\derlmn `G,x`:B |- P : `B $ and $ \derlmn `G |- Q : B $.
Then we have $ \derlmn `G |- P\tsubst[Q/x] : `B $ by \Lmm~\ref{term substitution lemma}.

 \item[$(`m)$] Then $M \same `@ (`m`a.P) Q \redname `M`g . P\tsubst[Q.`g/`a] \same N $.
The derivation for $ `@ (`m`a.P) Q $ is shaped like
 \[ \def \Turnlmn{\Turn}
\Inf	[\arrE]
	{\Inf	[`m]
		{\InfBox{ \derlmn `G,`a`:\neg(B\arrow A) |- P : `B }
		}{ \derlmn `G |- `m`a.P : B\arrow A }
	 \quad
	 \InfBox{ \derlmn `G |- Q : B }
	}{ \derlmn `G |- {`@ (`m`a.P) Q } : A }
 \]
In particular, we have $\derlmn `G,`a`:\neg(B\arrow A) |- P : `B $ and $ \derlmn `G |- Q : B $.
Then by \Lmm~\ref{term substitution lemma}, we have $ \derlmn `G,`g`:\neg A |- P\tsubst[Q.`g/`a] : `B $, and applying rule $(`m)$ gives the result.

 \item[$(`d)$] Then $M \same `@ [ `M`a . P ] Q \redname P\tsubst[Q/`a] \same N $.
Then $A = `B$ and the derivation for $ \derlmn `G |- {`@ [`M `a . P ] Q } : `B $ is shaped like
 \[ \def \Turnlmn{\Turn}
\Inf	[\negE]
	{\Inf	[\negI]
		{\InfBox{ \derlmn `G,`a`:\neg\neg B |- P : `B }
		}{ \derlmn `G |- `m`a.P : \neg B }
	 \quad
	 \InfBox{ \derlmn `G |- Q : B }
	}{ \derlmn `G |- { `@ [`m`a . P ] Q } : `B }
 \]
In particular, we have $\derlmn `G,`a`:\neg\neg B |- P : `B $ and $ \derlmn `G |- Q : B $.
Then, by \Lmm~\ref{term substitution lemma}, we have $ \derlmn `G |- P\tsubst[Q/`a] : `B $.

 \item[$(\Erase)$] Then $M \same `M`a.[`a]P \redname P \same N $ with $`a \notele M$.
The derivation for $ `M`a.[`a]P $ is shaped like
 \[ \def \Turnlmn{\Turn}
\Inf	[`m]
	{\Inf	[\Name]
		{\InfBox{ \derlmn `G |- P : A }
		}{ \derlmn `G,`a`:\neg A |- [`a]P : `B } 
	}{ \derlmn `G |- `M`a.[`a]P : A }
 \]
We have $ \derlmn `G |- P : A $ through a sub-derivation.

 \item[$(\Rename)$] Then $M \same [`b]`M`g.M \redname M \tsubst[`b/`g] \same N $.
The derivation for $ [`b]`m`g.P $ is shaped like
 \[ \def \Turnlmn{\Turn}
\Inf	[\Name]
	{\Inf	[`m]
		{\InfBox{ \derlmn `G,`b`:\neg B,`g`:\neg B |- P : `B }
		}{ \derlmn `G,`b`:\neg B |- `m`g.P : B }
	}{ \derlmn `G,`b`:\neg B |- [`b]`m`g.P : `B }
 \]
So in particular, replacing all 
$`g$ by $`b$, we obtain $ \derlmn `G,`b`:\neg B |- P\tsubst[`b/`g] : `B $.
\qed

 \end{description}
 \end{proof}

 \section{Confluence} \label{confluence}
\def \Rel{\mathbin{R}}
 
In this section we will show that reduction in $\lmn$ satisfies the Church-Rosser property, \emph{i.e.}~is confluent.
This property is defined as follows:

 \begin{definition}[Diamond and Church-Rosser Properties \cite{Barendregt'84}] \label{CR def}
Let $\Rel$ be binary relation on a set $V$. 
 \begin{enumerate}
 \item
$\Rel$ satisfies the \emph{diamond property} if for all $t, u, v \ele V $, if $ t \Rel u $ and $ t \Rel v $, then there exists $ w \ele V $ such that $ u \Rel w $ and $ v \Rel w $.

 \item
$\Rel$ satisfies the \emph{Church-Rosser property} (is \emph{confluent}) if its reflexive, transitive closure $\Rel^*$ satisfies the diamond property. 

 \end{enumerate}
 \end{definition}
 
 \noindent
This immediately implies that if a relation is confluent, then so is its transitive closure. 

The standard approach to showing confluence is that of Tait and Martin-L\"of (see \cite{Barendregt'84,Pfenning'92}) by defining a notion of \emph{parallel reduction} that is based on the standard reduction, which is a reflexive relation defined (in the case of $`b$-reduction) through the rules:
 \[ 
 \begin{array}{c@{\dquad}c@{\dquad}c@{\dquad}c}
\Inf	{x \Arrow x}
 & 
\Inf	{M \Arrow M' 
	}{`Lx.M \Arrow `Lx.M' } 
 &
\Inf	{M \Arrow M' \quad N \Arrow N' 
	}{`@ M N \Arrow `@ M' N' } 
 &
\Inf	{ M \Arrow M' \quad N \Arrow N' 
	}{`@ (`lx.M) N \Arrow M' \tsubst[N'/x] } 
 \end{array} \] 

By the last rule, `$\Arrow$' encompasses `$\redbeta$'; also, if $N$ reduces to $N'$, then $ `@ (`lx.M) N $ reduces to $ M \tsubst[N'/x] $, so all contractions in the various copies of $N$ inside $ M \tsubst[N/x] $ are contracted simultaneously when contracting the redex $ `@ (`lx.M) N $; we are even allowed to contract a redex in $M$, and contracting all these together is considered a \emph{single step} in `$\Arrow$'.
The proof of confluence for $`b$-reduction then contains of showing that $\Arrow$ satisfies the diamond property, and that $ {\Arrow} = {\rtcredbeta} $.

Using this technique, confluence has been claimed for $\lmu$ in \cite{Parigot'92}, but, as noticed in \cite{Py-PhD'98,Baba-etal'01}, that proof was not complete.
The main reason is that the proof overlooks the fact that, perhaps unexpectedly, contraction of one redex can remove another.

 \begin{example}\label{Parigot wrong}
Take the term $ `@(`m`a.[`a]`m`b.[`a]M) N $; observe that contracting the outermost $`m$-redex $ `@(`m`a.\ldots) N $ destroys the innermost $\Rename$-redex $ `m`a.[`a]`m`b.[`a] M $.
The latter is contractable because the sub-term $ `m`b.[`a] M $ is a $`m$-abstraction:
 \[ \begin{array}{rcl}
`@(`m`a.[`a]`m`b.[`a]M) N & \Arrow (\Rename) & `@(`M`a.[`a]M) N 
 \end{array} \]
This is no longer true after the contraction of the outermost redex:
 \[ \begin{array}{rcl}
`@(`m`a.[`a]`m`b.[`a]M) N & \Arrow(`m) & `m`g.[`g]`@(`m`b.[`g]MN) N 
 \end{array} \]
where $N$ gets (also) placed as an argument to $ `m`b.[`g]MN$, creating the application\linebreak$`@ (`m`b.[`g]MN) N $ which means that the result is no longer a $`m$-abstraction, thus destroying the $\Rename$-redex.
The resulting terms can be joined, but not through a single parallel reduction step, as would be required.
So the diamond property does not hold for the standard notion of parallel reduction.
 \end{example}

This problem was successfully addressed by Py \cite{Py-PhD'98} and later in \cite{Baba-etal'01} using a slightly different approach.
We will follow the solution of the first here, using the modification of the definition of `$\Arrow$' as suggested by Aczel \cite{Aczel'78}.

As in \cite{Baba-etal'01,Bakel-ToCL'18,Bakel-Barbanera-Liguoro-LMCS'18},
we will not consider the extensional erasure reduction rule 
 \[ \begin{array}{rrcl@{\quad}l}
\Erase: & `M`a . [`a] M & \reduc & M & (`a\notele M) \\
 \end{array} \]

Below we will need the property that we can change the order in which the four implicit substitution operations are performed.
Notice that we can consider the substitution a binding operation for the variable or name involved, so for example the variable $x$ in $ M\tsubst[N/x]\tsubst[P/y] $ can be assumed to not occur in $P$.

 \begin{proposition} \label{term subst lemma}
 \begin{enumerate}
 \item 
	\begin{enumerate}
	\item $ M\tsubst[N/x]\tsubst[P/y] = M\tsubst [P/y]\tsubst[N{\tsubst [P/y]}/x] $.
	\item $ M\tsubst[N/x]\tsubst[P.`d/`a] = M\tsubst [P.`d/`a]\tsubst[N{\tsubst [P.`d/`a]}/x] $.
	\item $ M\tsubst[N/x]\tsubst[P/`a] = M\tsubst [P/`a]\tsubst[N{\tsubst [P/`a]}/x] $.
	\item $ M\tsubst[N/x]\tsubst[`d/`a] = M\tsubst [`d/`a]\tsubst[N{\tsubst [`d/`a]}/x] $.
	\end{enumerate}

 \item 
	\begin{enumerate}
	\item $ M\tsubst[N.`g/`b]\tsubst[P/y] = M\tsubst [P/y]\tsubst[{N\tsubst [P/y]}.`g/`b] $.
	\item $ M\tsubst[N.`g/`b]\tsubst[P.`d/`a] = M\tsubst [P.`d/`a]\tsubst[{N{\tsubst [P.`d/`a]} }.`d/`b] $.
	\item $ M\tsubst[N.`g/`b]\tsubst[P/`a] = M\tsubst [P/`a]\tsubst[{N\tsubst [P/`a]}.`g/`b] $.
	\item $ M\tsubst[N.`g/`b]\tsubst[`d/`a] = M\tsubst [`d/`a]\tsubst[{N\tsubst [`d/`a]}.`g/`b] $.
	\end{enumerate}

 \item 
	\begin{enumerate}
	\item $ M\tsubst[N/`b]\tsubst[P/y] = M\tsubst [P/y]\tsubst[N{\tsubst [P/y]}/`b] $.
	\item $ M\tsubst[N/`b]\tsubst[P.`d/`a] = M\tsubst [P.`d/`a]\tsubst[N{\tsubst [P.`d/`a]}/`b] $.
	\item $ M\tsubst[N/`b]\tsubst[P/`a] = M\tsubst [P/`a]\tsubst[N{\tsubst [P/`a]}/`b] $.
	\item $ M\tsubst[N/`b]\tsubst[`d/`a] = M\tsubst [`d/`a]\tsubst[N{\tsubst [`d/`a]}/`b] $.
	\end{enumerate}
 \item 
	\begin{enumerate}
	\item $ M\tsubst[`g/`b]\tsubst[P/y] = M\tsubst [P/y]\tsubst[`g/`b] $.
	\item $ M\tsubst[`g/`b]\tsubst[P.`d/`a] = M\tsubst [P.`d/`a]\tsubst[`g/`b]\tsubst[P.`d/`a] $.

	\item $ M\tsubst[`g/`b]\tsubst[P/`a] = M\tsubst [P/`a]\tsubst[`g/`b]\tsubst[P/`a] $.
	\item $ M\tsubst[`g/`b]\tsubst[`d/`a] = M\tsubst [`d/`a]\tsubst[`g/`b]\tsubst[`d/`a] $.
	\end{enumerate}

 \end{enumerate}
 \end{proposition}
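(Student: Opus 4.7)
Each of the sixteen equations is proved by straightforward induction on the structure of $M$, invoking the definitions of the four substitution operations (term substitution, right-structural substitution, insertion, and renaming) and the induction hypothesis on immediate subterms. The non-trivial cases are exactly the ones where one of the substitutions fires at the current node: $M \same x$ or $M \same y$ for term substitution, $M \same [`a]P$ or $M \same [`b]P$ for structural/insertion/renaming, and the sub-case $M\same`lx.P$, $M\same`nx.P$, $M\same`m`a.P$, $M\same[`a]P$ where Barendregt's convention is needed to avoid capture. All the other cases (applications, abstractions whose binders are disjoint from the substitution indices) reduce, after unfolding the definitions, to direct applications of the induction hypothesis.

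I would handle the four groups in the natural order suggested by the statement. Group~(1) is the classical $`l$-calculus substitution lemma and its three analogues: the only point of interest is case~(1c) for $M\same[`a]Q$, where insertion rewrites $[`a]Q$ into $`@[Q\tsubst[P/`a]] P$, so one must check that subsequently applying $\tsubst[N/x]$ to this application yields the same result as applying $\tsubst[N\tsubst[P/`a]/x]$ to the original after insertion, which is immediate from the definition of term substitution on applications. Group~(2) is similar, the delicate case being $M \same [`b]Q$, where $\tsubst[N.`g/`b]$ turns $[`b]Q$ into $[`g](Q\tsubst[N.`g/`b] N)$, and one appeals to freshness of $`g$ (so that $`g$ does not occur in $P$, $N$, or $M$ in the right-hand side). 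Group~(3) mirrors Group~(2) but for insertion, and likewise reduces to a commutation check at $[`a]Q$ and $[`b]Q$.

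Group~(4) is where the apparent asymmetry of the statement has to be accounted for. Here $\tsubst[`g/`b]$ is pure renaming, so it commutes cleanly with $\tsubst[P/y]$ (case (4a)), but for (4b)--(4d) the right-hand side contains the second substitution \emph{twice}. The reason is that, say in (4b), the left-hand side renames every $[`b]$ to $[`g]$ and then rewrites each surviving $[`a]Q$ into $[`d](Q' \cdot P)$, whereas performing $\tsubst[P.`d/`a]$ first inserts fresh copies of $P$ next to every $[`a]$-tagged subterm, and renaming $`b$ to $`g$ inside these must be followed by a second pass of $\tsubst[P.`d/`a]$ in order to re-process any $[`a]$ occurrences that the renaming might have exposed or, more precisely, to re-insert $P$ next to the newly named subterms in the same positions on both sides of the equation; the induction hypothesis on the subterms gives exactly this. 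The same pattern justifies (4c) and (4d).

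The main obstacle, as usual for substitution lemmas with many forms of binder, is bookkeeping rather than any genuine mathematical difficulty: one must maintain Barendregt's convention across all four substitutions simultaneously, so that the bound variables and names of $M$, together with the fresh names introduced by structural substitution, remain distinct from the free variables and names of $N$, $P$, and from $x$, $y$, $`a$, $`b$, $`g$, $`d$. Once this hygiene is fixed, the proof proceeds mechanically.
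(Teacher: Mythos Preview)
Your proposal is correct and follows exactly the approach the paper takes: the paper's entire proof reads ``Straightforward by induction on the definition of the four substitutions,'' and your plan is simply a detailed unfolding of that one line, identifying the interesting cases and the role of Barendregt's convention. Your diagnosis of why group~(4) requires the second application of the outer substitution on the right-hand side is on target: renaming $[`b]$ to $[`g]$ may create fresh $[`a]$-headed commands (when $`g$ coincides with $`a$), which the trailing $\tsubst[P.`d/`a]$, $\tsubst[P/`a]$, or $\tsubst[`d/`a]$ then processes to match the left-hand side.
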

 \begin{proof}
Straightforward by induction on the definition of the four substitutions.
\qed
 \end{proof} 

We now define a notion of parallel reduction for $\lmn$.

 \begin{definition}[Generalised Parallel Reduction for $\lmn$ (cf.~\cite{Aczel'78,Py-PhD'98})] 
We define \emph{parallel reduction} on terms in $\lmn$ inductively by the rules:

\newcounter{rulecounter} \setcounter{rulecounter}{0} 
\def \ruleitem#1 {\refstepcounter{rulecounter} (\therulecounter)}

\kern-5mm 
 \[ \def\arraystretch{2.25} \def \Redlmn {\Rightarrow}
 \begin{array}{rl}
\ruleitem{variable} & 
\Inf	{x \Redlmn x}
 \\ 
\ruleitem{lambda abstraction} & 
\Inf	{M \Redlmn M' 
	}{`Lx.M \Redlmn `Lx.M' } 
 \\
\ruleitem{mu abstraction} & 
\Inf	{M \Redlmn M' 
	}{`m`a.M \Redlmn `m`a.M' }
 \\
\ruleitem{nu abstraction} &
\Inf	{M \Redlmn M' 
	}{`Nx.M \Redlmn `Nx.M' }
 \\
 \end{array}
 \quad
 \begin{array}{rl}
\ruleitem{application} & 
\Inf	{M \Redlmn M' \quad N \Redlmn N' 
	}{`@ M N \Redlmn `@ M' N' } 
 \\
\ruleitem{negated application} &
\Inf	{M \Redlmn M' \quad N \Redlmn N' 
	}{ [M] N \Redlmn [M'] N' }
 \\
\ruleitem{naming} & 
\Inf	{M \Redlmn M' 
	}{[`a] M \Redlmn [`a] M' } 
 \\
\ruleitem{renaming} & 
\Inf	{M \Redlmn `M`a . M' 
	}{[`b] M \Redlmn M' \tsubst[`b/`a] }
 \end{array}
 \quad
 \begin{array}{rl}
\ruleitem{beta reduction} & 
\Inf	{M \Redlmn `lx.M' \quad N \Redlmn N' 
	}{`@ M N \Redlmn M' \tsubst[N'/x] } 
 \\
\ruleitem{mu reduction} & 
\Inf	[`g\textit{ fresh}]
	{M \Redlmn `M`a . M' \quad N \Redlmn N' 
	}{`@ M N \Redlmn `M`g . M' \tsubst[{N'}.`g/`a] }
 \\
\ruleitem{nu reduction} &
\Inf	{M \Redlmn `Nx.M' \quad N \Redlmn N' 
	}{`@ [M] N \Redlmn M' \tsubst[N'/x] }
 \\
\ruleitem{delta reduction} &
\Inf	{M \Redlmn `M`a . M' \quad N \Redlmn N' 
	}{ [M] N \Redlmn M \tsubst[N'/`a] } 
 \end{array} \] 
We write $ M \Redlmn N $ if the statement $ M \Rightarrow N $ is derivable using these rules.


 \end{definition}
It is easy to check that a term parallel reduces to itself, and under parallel reduction a term is considered to be in normal form if it only reduces to itself. 

Notice, in particular, the change in the rule based on $`b$-reduction, which changes from 
 \[
 \begin{array}{c@{\quad \textrm{ to }\quad }c}
\Inf	{ M \Arrow M' \quad N \Arrow N' 
	}{`@ (`lx.M) N \Arrow M' \tsubst[N'/x] } 
& \def \Redlmn {\,\Arrow\,}
\Inf	{M \Redlmn `lx.M' \quad N \Redlmn N' 
	}{`@ M N \Redlmn M' \tsubst[N'/x] } 
 \end{array} \]
It is this change that solves the problem mentioned.

 \begin{example}
The problem signalled in \Exm~\ref{Parigot wrong} does not occur, since the diverging reduction steps
 \[ \def \Redlmn {\,\Arrow\,}
\Inf	[10]
	{\InfBox{ `m`a.[`a]`m`b.P \Redlmn `m`a.[`a]`m`b.P }
	\dquad
	\InfBox{ N \Redlmn N }
	}{ `@ (`m`a.[`a]`m`b.P) N \Redlmn `m`g.[`g](`m`b.P \tsubst[N.`g/`a])N }
\quad
\Inf	[5]
	{\Inf	[3]
		{\Inf	[7]
			{\InfBox{ `m`b.P \Redlmn `m`b.P }
			}{ [`a]`m`b.P \Redlmn P \tsubst[`a/`b] }
		}{ `m`a.[`a]`m`b.P \Redlmn `m`a.P \tsubst[`a/`b] }
	 \dquad
	 \InfBox{ N \Redlmn N }
	}{ `@ (`m`a.[`a]`m`b.P) N \Redlmn `@ (`m`a.P \tsubst[`a/`b]) N }
 \]
can be joined:
 \[ \def \Redlmn {\,\Arrow\,}
\Inf	[3]
	{\Inf	[7]
		{\Inf	[10]
			{\InfBox{ `m`b.P \tsubst[N.`g/`a] \Redlmn `m`b.P \tsubst[N.`g/`a] }
			 \dquad
			 \InfBox{ N \Redlmn N }
			}{ (`m`b.P \tsubst[N.`g/`a])N \Redlmn `m`d.P \tsubst[N.`g/`a] \tsubst[N.`d/`b] }	
		}{ [`g](`m`b.P \tsubst[N.`g/`a])N \Redlmn P \tsubst[N.`g/`a] \tsubst[N.`d/`b] \tsubst[`g/`d] }	
	}{ `m`g.[`g](`m`b.P \tsubst[N.`g/`a])N \Redlmn `m`g.P \tsubst[N.`g/`a] \tsubst[N.`d/`b] \tsubst[`g/`d] }
 \] \[ \def \Redlmn {\,\Arrow\,}
\Inf	[10]
	{\InfBox{ `m`a.P \tsubst[`a/`b] \Redlmn `m`a.P \tsubst[`a/`b] }
	 \dquad
	 \InfBox{ N \Redlmn N }
	}{ `@ (`m`a.P \tsubst[`a/`b]) N \Redlmn `m`g.P \tsubst[`a/`b] \tsubst[N.`g/`a] }
 \]
(notice that $ `m`g.P \tsubst[`a/`b] \tsubst[N.`g/`a] = `m`g.P \tsubst[N.`g/`a] \tsubst[N.`d/`b] \tsubst[`g/`d] $).
 \end{example}

It is straightforward to show that $\rtcredlmn$ is the transitive closure of $\Redlmn$.

 \begin{lemma} \label{Bar 3.2.7}
$ {\rtcredlmn} = {\rtcRedlmn} $.
 \end{lemma}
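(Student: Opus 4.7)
The plan is the standard two-inclusion argument (cf.\ \cite{Barendregt'84}, Proposition~3.2.7). I will show both ${\redlmn} \subseteq {\Redlmn}$ and ${\Redlmn} \subseteq {\rtcredlmn}$; taking reflexive-transitive closures on each side then yields the stated equality, since $\rtcredlmn$ is by definition already reflexive and transitive, and $\Redlmn$ is reflexive (so its transitive closure coincides with its reflexive-transitive closure).

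First I would establish the auxiliary fact that $M \Redlmn M$ for every $\lmn$-term $M$, by straightforward induction on the structure of $M$, using in each case one of the non-contracting rules (variable, lambda/mu/nu abstraction, application, negated application, naming). This reflexivity is what lets every subterm be ``carried unchanged'' inside a parallel step.

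For the inclusion ${\redlmn} \subseteq {\Redlmn}$, I would proceed by induction on the evaluation context $\Cont$ used to locate the redex, pushing reflexive parallel steps through the surrounding constructors and then discharging the base case ($\Cont = \EmptyCont$) by a single application of the appropriate contracting rule. Concretely, if the redex is a $`b$-, $`m$-, $`n$-, $`d$- or $\Rename$-redex, the corresponding contraction rule of $\Redlmn$ applies directly, taking the subterms to themselves via reflexivity; the congruence rules of $\Redlmn$ then let us embed the step under any context.

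For the inclusion ${\Redlmn} \subseteq {\rtcredlmn}$, I would proceed by induction on the derivation of $M \Redlmn N$. The congruence cases (variable, abstractions, applications, naming) follow immediately from the inductive hypothesis together with the fact that $\redlmn$ is closed under evaluation contexts. The contraction cases need a little more care: for example, in the $`b$-case we have $M \Redlmn `lx.M'$ and $N \Redlmn N'$, and by the inductive hypothesis $M \rtcredlmn `lx.M'$ and $N \rtcredlmn N'$, so $`@ M N \rtcredlmn `@ (`lx.M') N' \redlmn M'\tsubst[N'/x]$ using one $`b$-step; the other contraction cases ($`m$, $`n$, $`d$, $\Rename$) are analogous. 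The main point of attention — which I expect to be the mildest obstacle — is handling the substitutions and the freshness side-conditions (in particular the fresh $`g$ in the $`m$ case and the variable convention in the $\Rename$ case) so that the parallel step genuinely matches a composition of single-step contractions; the substitution commutation identities of Proposition~\ref{term subst lemma} cover any rearrangements that arise.
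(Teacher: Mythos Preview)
Your proposal is correct and follows the same approach as the paper: establish reflexivity of $\Redlmn$, show ${\redlmn} \subseteq {\Redlmn}$ and ${\Redlmn} \subseteq {\rtcredlmn}$, then close up on both sides. The paper's proof is considerably terser (it simply asserts that $\Redlmn$ contracts any number of $\redlmn$-redexes in parallel), whereas you spell out the two inductions explicitly; your appeal to Proposition~\ref{term subst lemma} is harmless but in fact unnecessary here, since no substitution commutation is needed for this lemma.
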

 \begin{proof}
First, since $ M \Redlmn M $, for all $M$, by the presence of rules $(9)$, $(10)$, $(11)$, and $(12)$, we have $ {\eqredlmn} \subseteq {\Redlmn} $, so also $ {\rtcredlmn} \subseteq {\rtcRedlmn} $.
Since in $\Redlmn$ we essentially contract any number of $\redlmn$-redexes in parallel (including zero or just one) we also have that $ {\Redlmn} \subseteq {\rtcredlmn} $.
So in particular, $ \Redlmn$ is a subset of a relation that is transitive, so its transitive closure is that as well, so $ {\rtcRedlmn} \subseteq {\rtcredlmn} $.
So $ {\rtcredlmn} = {\rtcRedlmn} $.
\qed
 \end{proof}

The following property expresses that the four kinds of substitution are respected by $\Redlmn$. 

 \begin{lemma}[Substitution Lemma] \label{running inside substitution} 
If $ P \Redlmn P' $ and $ Q \Redlmn Q' $, then:
 \begin{flatenumerate}
 \item $ P\tsubst[Q/z] \Redlmn P'\tsubst[Q'/z] $,
 \item $ P\tsubst[Q.`g /z] \Redlmn P'\tsubst[{Q'}.`g /z] $,
 \item $ P\tsubst[Q/`a] \Redlmn P'\tsubst[Q'/`a] $, and 
 \item $ P\tsubst[`b/`a] \Redlmn P'\tsubst[`b/`a] $.
 \end{flatenumerate}
 \end{lemma}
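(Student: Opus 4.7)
The plan is to prove the four statements simultaneously by induction on the derivation of $P \Redlmn P'$. The simultaneous treatment is forced by the rule cases that introduce a substitution (rules 9--12 and rule 8), where the induction hypothesis must be applied to a premise of the form ``$M \Redlmn \lambda x.M'$'' (or $\mu\alpha.M'$, or $\nu x.M'$) and an outer substitution $\sigma$ of \emph{any} of the four kinds must be pushed inside the abstraction. Throughout, Barendregt's convention ensures that the bound variable/name at each constructor is distinct from the free variables and names occurring in $Q$, $Q'$, $z$, $\alpha$, $\beta$, $\gamma$, so pushing the outer substitution through a binder is uneventful.

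The base case is rule (1): $P \same x \Redlmn x \same P'$. For each of the four substitutions $\sigma$, if the subject of $\sigma$ is not $x$ we get $x\sigma = x = x\sigma$ and parallel reduction by rule (1); otherwise $\sigma$ is of the form $[Q/x]$ and $x\sigma = Q \Redlmn Q' = x\sigma'$ by hypothesis. The structural congruence cases (rules 2--7) are routine: apply the induction hypothesis to each premise, push $\sigma$ under the outer constructor using the variable convention, and reapply the same rule.

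The substantive cases are (8)--(12). For rule (9), suppose $P \same MN \Redlmn M'\tsubst[N'/x] \same P'$ with $M \Redlmn \lambda x.M'$ and $N \Redlmn N'$. Fix one of the four outer substitutions~$\sigma$. By the induction hypothesis, $M\sigma \Redlmn (\lambda x.M')\sigma$ and $N\sigma \Redlmn N'\sigma$. Since $x$ is, by convention, disjoint from the subject and parameters of $\sigma$, pushing $\sigma$ through the $\lambda$ gives $(\lambda x.M')\sigma = \lambda x.(M'\sigma)$. Rule (9) therefore yields $(MN)\sigma \Redlmn (M'\sigma)\tsubst[N'\sigma/x]$. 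It remains to check $(M'\sigma)\tsubst[N'\sigma/x] = (M'\tsubst[N'/x])\sigma$, which is exactly the relevant clause of Proposition~\ref{term subst lemma}. Rules (10), (11), (12) proceed identically, each time invoking the corresponding clause of Proposition~\ref{term subst lemma} to interchange the inner substitution (from the reduction) with the outer substitution~$\sigma$. For rule (10), the freshness side condition on the new name $\gamma$ (renamed if necessary by $\alpha$-conversion) ensures it does not clash with $\sigma$.

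Rule (8), the renaming rule, is handled analogously: from $M \Redlmn \mu\alpha.M'$ and the induction hypothesis we obtain $M\sigma \Redlmn (\mu\alpha.M')\sigma = \mu\alpha.(M'\sigma)$, whence rule (8) yields $([\beta]M)\sigma = [\beta\sigma](M\sigma) \Redlmn (M'\sigma)\tsubst[\beta\sigma/\alpha]$, and we then appeal to the renaming clauses of Proposition~\ref{term subst lemma} to equate this with $P'\sigma = (M'\tsubst[\beta/\alpha])\sigma$. The main obstacle throughout is keeping track of which clause of Proposition~\ref{term subst lemma} is needed in each of the sixteen (rule, $\sigma$) combinations in cases (9)--(12), and verifying that the freshness conditions are preserved when outer substitutions are pushed through binders introduced by reduction; no deeper insight is needed beyond book-keeping, since Proposition~\ref{term subst lemma} has already catalogued every interchange we need.
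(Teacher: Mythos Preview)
Your proposal is correct and follows essentially the same approach as the paper: induction on the derivation of $P \Redlmn P'$, with the substantive cases (9)--(12) handled via Proposition~\ref{term subst lemma} to commute the inner and outer substitutions. The paper treats part~(1) in detail and dismisses (2)--(4) as ``very similar'' rather than running all four simultaneously; your claim that the simultaneous treatment is \emph{forced} is not quite right (each part only ever needs its own induction hypothesis), but doing them together is harmless, and your explicit handling of rule~(8) is in fact more careful than the paper, which folds it into ``the other cases follow by induction.''
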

 \begin{proof}
 \begin{enumerate}
 \item 
By induction on the definition of $\Redlmn$, where we focus on the first parallel reduction.
 \begin{description} \itemsep 3\point

 \item [$(1)$] 
 $ \begin{array}[t]{lll}
 P\tsubst[Q/z] = z \tsubst[Q/z] = Q \Redlmn Q' = z \tsubst[Q'/z] = P'\tsubst[Q'/z], \textrm{ and} 
 \\
 P\tsubst[Q/z] = y \tsubst[Q/z] = y \Redlmn y = y\tsubst[Q'/z] = P'\tsubst[Q'/z] \textrm{ if } y \not= z .
 \end{array} $

 \item [$(9)$] 
 Then $ MN \Redlmn M'\tsubst[N'/x] $ follows from $M \Redlmn `lx.M'$ and $N \Redlmn N'$. 
By induction, we have $M \tsubst[Q/z] \Redlmn (`lx.M') \tsubst[Q'/z]$ and $N \tsubst[Q/z] \Redlmn N' \tsubst[Q'/z] $.
So we can infer:
 \[
\Inf	[9]
	{M \tsubst[Q/z] \Redlmn (`lx.M') \tsubst[Q'/z]
	 \dquad
	 N \tsubst[Q/z] \Redlmn N' \tsubst[Q'/z] 
	}{ `@ M\tsubst[Q/z] N\tsubst[Q/z] \Redlmn M'\tsubst[Q'/z] \tsubst[{N'\tsubst[Q'/z]}/x] }
 \]
By Lemma~\ref{term subst lemma}, we have $ M'\tsubst[Q'/z] \tsubst[{N'\tsubst[Q'/z]}/x] = M' \tsubst[N'/x] \tsubst[Q'/z] $.

 \item [$(10)$] 
 Then $ MN \Redlmn `m`g.M'\tsubst[{N'}.`g/`a] $ follows from $M \Redlmn `m`a.M'$ and $N \Redlmn N'$. 
By induction, $M \tsubst[Q/z] \Redlmn (`m`a.M') \tsubst[Q'/z]$ and $N \tsubst[Q/z] \Redlmn N' \tsubst[Q'/z] $.
So we can infer:
 \[
\Inf	[10]
	{M \tsubst[Q/z] \Redlmn (`m`a.M') \tsubst[Q'/z]
	 \dquad
	 N \tsubst[Q/z] \Redlmn N' \tsubst[Q'/z] 
	}{ `@ M\tsubst[Q/z] N\tsubst[Q/z] \Redlmn `m`g.M'\tsubst[Q'/z] \tsubst[{N'\tsubst[Q'/z]}.`g/`a] }
 \]
We have $ `m`g.M'\tsubst[Q'/z] \tsubst[{N'\tsubst[Q'/z]}.`g/`a] = `m`g.M' \tsubst[{N'}.`g/`a] \tsubst[Q'/z] $ by Lemma~\ref{term subst lemma}.

 \item [$(11)$] 
 Then $ [M]N \Redlmn M'\tsubst[N'/x] $ follows from $M \Redlmn `nx.M'$ and $N \Redlmn N'$. 
By induction, $M \tsubst[Q/z] \Redlmn (`nx.M') \tsubst[Q'/z]$ and $N \tsubst[Q/z] \Redlmn N' \tsubst[Q'/z] $.
So we can infer:
 \[
\Inf	[9]
	{M \tsubst[Q/z] \Redlmn (`nx.M') \tsubst[Q'/z]
	 \dquad
	 N \tsubst[Q/z] \Redlmn N' \tsubst[Q'/z] 
	}{ `@ [{M\tsubst[Q/z]}] N\tsubst[Q/z] \Redlmn M'\tsubst[Q'/z] \tsubst[{N'\tsubst[Q'/z]}/x] }
 \]
By Lemma~\ref{term subst lemma}, we have $ M'\tsubst[Q'/z] \tsubst[{N'\tsubst[Q'/z]}/x] = M' \tsubst[N'/x] \tsubst[Q'/z] $.

 \item [$(12)$] 
 Then $ [M]N \Redlmn M'\tsubst[N'/`a] $ follows from $M \Redlmn `m`a.M'$ and $N \Redlmn N'$. 
By induction, $M \tsubst[Q/z] \Redlmn (`m`a.M') \tsubst[Q'/z]$ and $N \tsubst[Q/z] \Redlmn N' \tsubst[Q'/z] $.
So we can infer:
 \[
\Inf	[12]
	{M \tsubst[Q/z] \Redlmn (`m`a.M') \tsubst[Q'/z]
	 \dquad
	 N \tsubst[Q/z] \Redlmn N' \tsubst[Q'/z] 
	}{ `@ [{M\tsubst[Q/z]}] N\tsubst[Q/z] \Redlmn M'\tsubst[Q'/z] \tsubst[{N'\tsubst[Q'/z]}/`a] }
 \]
By Lemma~\ref{term subst lemma}, we have $ M'\tsubst[Q'/z] \tsubst[{N'\tsubst[Q'/z]}/`a] = M' \tsubst[N'/`a] \tsubst[Q'/z] $.

 \end{description}
The other cases all follow by induction.

\item ~\kern-3mm, $(\textit{3})$ and $(\textit{4})$ Very similar. \qed

 \end{enumerate}
 \end{proof}

The following property expresses the interaction between the syntactic structure of terms and $\Redlmn$.

 \begin{proposition} \label{the shape of things} 
 \begin{enumerate}
 \item \label {lambda abstraction case} If $ `Lx.M \Redlmn L $, then $ L \same `lx.N $ and $ M \Redlmn N $. 
 \item \label {mu abstraction case} If $ `m`a.M \Redlmn L $, then 
	$ L \same `M`a.N $ and $ M \Redlmn N $. 
 \item \label {nu abstraction case} If $ `Nx.M \Redlmn L $, then $ L \same `Nx.N $ and $ M \Redlmn N $. 
 \item \label{application case} If $ `@ M N \Redlmn L $, then either:
	\begin{enumerate}
	\item $ L \same `@ P Q $ with $ M \Redlmn P $ and $ N \Redlmn Q $, or 
	\item $ M \Redlmn `lx.P $, and $ L = P\tsubst[Q/x] $ with $ N \Redlmn Q $, or 
	\item $ M \Redlmn `m`a.P $, and $ L = `m`g.P\tsubst[Q.`g/`a] $ with $ N \Redlmn Q $. 
 \end{enumerate}

 \item \label{negation application case} If $ `@ [M] N \Redlmn L $, then either:
	\begin{enumerate}
	\item $ L \same `@ [P] Q $ with $ M \Redlmn P $ and $ N \Redlmn Q $, or 
	\item $ M \Redlmn `Nx.P $, and $ L = P\tsubst[Q/x] $ with $ N \Redlmn Q $, or 
	\item $ M \Redlmn `m`a.P $, and $ L = P\tsubst[Q/`a] $ with $ N \Redlmn Q $. 
	\end{enumerate}

 \item \label{naming case} If $ [`a] M \Redlmn L $, then either:
	\begin{enumerate}
	\item $ L \same [`a] P $ with $ M \Redlmn P $, or 
	\item $ L \same P\tsubst[`a/`b] $ with $ M \Redlmn \Mu `b . P $. 
	\end{enumerate}

 \end{enumerate}
 \end{proposition}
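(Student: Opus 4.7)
The plan is to treat each of the six clauses separately by a case analysis on the last rule used in the derivation of the hypothesised parallel reduction $\cdot \Redlmn L$. For each fixed syntactic shape of the left-hand side, I would first identify precisely which of the twelve parallel-reduction rules could have a conclusion matching that shape, and then read off what the rule forces about $L$. No induction on term structure is needed, because the parallel reduction system is defined schematically and each rule has a visible left-hand pattern; what is really required is an inversion lemma on the derivation.

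For clauses (1)--(3), exactly one rule can produce each of the conclusions $`Lx.M \Redlmn L$, $`M`a.M \Redlmn L$, and $`Nx.M \Redlmn L$: namely the congruence rules (2), (3), (4) for the respective binder. Hence $L$ must retain the same outer binder, and its body is a single parallel reduct of $M$. For clause (4), the left-hand side $`@ M N$ is matched by the application congruence rule (5), the beta-reduction rule (9), and the mu-reduction rule (10); these give exactly the three alternatives stated, since rules (9) and (10) already bake in the requirements $M \Redlmn `lx.P$ and $M \Redlmn `m`a.P$ together with $N \Redlmn Q$. Clause (5) is entirely analogous, with rules (6), (11), (12) producing sub-cases (a), (b), (c). Clause (6) combines the naming congruence rule (7), giving sub-case (a), with the renaming rule (8), whose hypothesis requires $M \Redlmn `M`b.P$ and whose conclusion has precisely the shape $P\tsubst[`a/`b]$ demanded in sub-case (b).

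The main obstacle, modest though it is, is bookkeeping: one must confirm that the enumeration of rules applicable to each head constructor is exhaustive (for instance, that no $`b$-, $`n$-, $`m$-, or $`d$-redex rule can secretly fire when the outer constructor of the source is a binder), and must check the substitution used in the renaming sub-case of clause (6) against the definition of rule (8). Once this rule-by-rule inspection is laid out, no nontrivial substitution identity is invoked, so the proof reduces to a completely mechanical verification.
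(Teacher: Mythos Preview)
Your proposal is correct and is exactly the intended argument: the paper's proof consists of the single line ``Straightforward by the definition of $\Redlmn$'', and what you describe---an inversion on the last rule of the parallel-reduction derivation, matching the head constructor of the source against the left-hand patterns of rules (1)--(12)---is precisely that straightforward check, spelled out in more detail than the paper bothers with.
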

 \begin{proof}
Straightforward by the definition of $\Redlmn$.
\qed
 \end{proof} 

We now show that $\Redlmn$ satisfies the diamond property.
We will write `$\Conv P_1 => P_3 <= P_2 $' for `$ P_1 \Redlmn P_3 $ and $ P_2 \Redlmn P_3 $'.

 \begin{theorem} \label{Bar 3.2.6}
If $ P_0 \Redlmn P_1 $ and $ P_0 \Redlmn P_2 $ then there exists a $P_3$ such that $ \Conv P_1 => P_3 <= P_2 $.
 \end{theorem}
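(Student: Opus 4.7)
The plan is to proceed by induction on the derivation of $P_0 \Redlmn P_1$, with a case analysis on the parallel-reduction rule used at its root and, within each case, on the root rule of $P_0 \Redlmn P_2$. For each combination $P_3$ is chosen according to these two rules, and the closing reductions $P_1 \Redlmn P_3$ and $P_2 \Redlmn P_3$ are assembled from the induction hypotheses on the immediate subterms, propagated through any substitutions created by the two contractions using the Substitution Lemma (\Lmm~\ref{running inside substitution}). The key structural tool throughout will be \Prop~\ref{the shape of things}, which lets me turn assertions of the form ``$M \Redlmn L_1$ and $M \Redlmn `lx.L_2'$'' into the assertion that the common reduct of $M$ predicted by the induction hypothesis has the form $`lx.L_3'$, and similarly for $`m`a$- and $`nx$-abstractions.

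The trivial cases are $P_0$ a variable, where we take $P_3 = P_0$, and $P_0$ an abstraction $`lx.M$, $`m`a.M$, or $`nx.M$, where only one rule applies at the root so the induction hypothesis on the body immediately closes the square. The genuinely branching cases are the applicative forms $`@ M N$ and $`@ [M] N$, and the naming $[`a] M$.

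For $P_0 = `@ M N$ the two parallel reductions each use one of rules $(5)$, $(9)$, $(10)$. Same-rule pairs close directly from the induction hypothesis applied to $M$ and to $N$. For a $(5)$--$(9)$ mixture, with $P_1 = `@ M_1 N_1$ (from $M \Redlmn M_1$, $N \Redlmn N_1$) and $P_2 = M_2'\tsubst[N_2/x]$ (from $M \Redlmn `lx.M_2'$, $N \Redlmn N_2$), the induction hypothesis on $M$ combined with \Prop~\ref{the shape of things} supplies a common reduct of shape $`lx.M_3'$; with a common reduct $N_3$ of $N_1$ and $N_2$ this lets me set $P_3 = M_3'\tsubst[N_3/x]$ and close both legs by clause~(1) of the Substitution Lemma. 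The $(5)$--$(10)$ case is analogous, using the $`m`a$-clause of \Prop~\ref{the shape of things} and clause~(2) of the Substitution Lemma, with a fresh name. The $(9)$--$(10)$ combination cannot arise, since no term can parallel-reduce both to a $`l$- and to a $`m$-abstraction. The case $P_0 = `@ [M] N$ is treated symmetrically via rules $(6)$, $(11)$, $(12)$ and the $`nx$- and $`m`a$-clauses of \Prop~\ref{the shape of things} combined with clauses~(1) and~(3) of the Substitution Lemma.

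The main obstacle, and the whole reason for the Aczel-style formulation of $\Redlmn$, is the naming case $P_0 = [`b]M$ when rule $(8)$ appears on either side. Rule $(8)$ rewrites $[`b]M$ to $M'\tsubst[`b/`a]$ whenever $M \Redlmn `m`a.M'$, so when rule $(8)$ is used on both sides I apply the induction hypothesis to $M$, use \Prop~\ref{the shape of things} to force the common reduct to have shape $`m`a.M_3'$, set $P_3 = M_3'\tsubst[`b/`a]$, and close via clause~(4) of the Substitution Lemma. The $(7)$--$(8)$ mixture uses the same manoeuvre, by extracting the $`m`a$-structure of the common reduct of $M$ before closing. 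Verifying that these clauses compose is exactly the point at which the generalised parallel reduction earns its keep: the failure of the diamond for the naive formulation exhibited in \Exm~\ref{Parigot wrong} no longer arises, because rule $(8)$ absorbs the renaming into the single parallel step on $M$ rather than leaving it to be performed afterwards.
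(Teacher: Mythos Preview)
Your proposal is correct and follows essentially the same approach as the paper's proof: induction on the derivation of $P_0 \Redlmn P_1$, case analysis via \Prop~\ref{the shape of things}, and closing each branching case with the appropriate clause of \Lmm~\ref{running inside substitution}. One small point: where you dismiss the $(9)$--$(10)$ and $(11)$--$(12)$ mixtures by asserting that no term parallel-reduces to both a $`l$- and a $`m$-abstraction (respectively $`n$- and $`m$-), the paper instead applies the induction hypothesis to $M$ to obtain a common reduct and then invokes \Prop~\ref{the shape of things} to derive the contradiction---your assertion is true but not entirely immediate, so the paper's route via the IH is the cleaner justification in context.
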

 \begin{proof}
	\def \Redlmn {\mathrel{\semcolour {\Rightarrow}}}
By induction on the definition of $\Redlmn$, where we focus on the first parallel reduction.
We only show the interesting cases.

 \begin{description} \itemsep3\point

 \item [$(1)$] 
Then $ P_0 \same x \Redlmn x \same P_1 $, and $ P_2 = x $; take $ P_3 = x $ as well. 

 \item [$(2), (3), (4)$]
By induction.

 \item [$(5)$] 
Then $ P_0 \same `@ M_0 N_0 \Redlmn `@ M_1 N_1 \same P_1 $ because $ M_0 \Redlmn M_1 $ and $ N_0 \Redlmn N_1 $.
By Proposition~\ref{the shape of things}\sk(\ref{application case}), either: 

	\begin{description}

	\item [$ P_2 = M_2 N_2 $, with $ M_0 \Redlmn M_2 $ and $ N_0 \Redlmn N_2 $]
By induction there exists $M_3$, $N_3$ such that $ \Conv M_1 => M_3 <= M_2 $ and $ \Conv N_1 => N_3 <= N_2 $.
Take $ P_3 = `@ M_3 N_3 $.

	\item[$ P_2 \same M_2\tsubst N_2/x $ with $ M_0 \Redlmn `lx.M_2 $ and $ N_0 \Redlmn N_2 $]
By induction there exists $M_3$, $N_3$ such that $ \Conv{M_1}{M_3}{`lx.M_2}$, and $ \Conv N_1 => N_3 <= N_2 $; by Proposition~\ref{the shape of things}\sk(\ref{lambda abstraction case}), $M_3 = `lx.M_3'$, and $M_2 \Redlmn M_3'$.
By Rule~9, we have $ `@ M_1 N_1 \Redlmn M_3' \tsubst[N_3/x] $, and by Lemma~\ref{running inside substitution}, we have $ M_2\tsubst[N_2/x]\Redlmn M_3'\tsubst N_3/x $.

	\item[$P_2 \same `m`g.M_2\tsubst N_2.`g/`a $ with $ M_0 \Redlmn `m`a.M_2 $ and $ N_0 \Redlmn N_2 $]
By induction there exists \\ $M_3$, $N_3$ such that $ \Conv{M_1}{M_3}{`m`a.M_2}$, and $ \Conv N_1 => N_3 <= N_2 $; by Proposition~\ref{the shape of things}\sk(\ref{mu abstraction case}), $M_3 = `m`a.M_3'$, and $M_2 \Redlmn M_3'$.
By Rule~10, we have $ `@ M_1 N_1 \Redlmn `m`g.M_3' \tsubst[{N_3}.`g/`a] $, and by Lemma~\ref{running inside substitution}, we have $ `m`g.M_2\tsubst{N_2}.`g/`a \Redlmn `m`g.M_3'\tsubst{N_3}.`g/`a $. 

	\end{description}

 \item [$(6)$] 
Then $ P_0 \same `@ [M_0] N_0 \Redlmn `@ [M_1] N_1 \same P_1 $ because $ M_0 \Redlmn M_1 $ and $ N_0 \Redlmn N_1 $.
By Proposition~\ref{the shape of things}\sk(\ref{negation application case}), either: 

	\begin{description}

	\item [{$ P_2 \same [M_2]N_2 $ with $ M_0 \Redlmn M_2 $ and $ N_0 \Redlmn N_2 $}]
By induction there exists $M_3$, $N_3$ such that $ \Conv M_1 => M_3 <= M_2 $ and $ \Conv N_1 => N_3 <= N_2 $.
Take $ P_3 = `@ [M_3] N_3 $.

	\item[$ P_2 \same M_2\tsubst N_2/x $ with $ M_0 \Redlmn `nx.M_2 $ and $ N_0 \Redlmn N_2 $]
By induction there exists $M_3$, $N_3$ such that $ \Conv{M_1}{M_3}{`nx.M_2}$, and $ \Conv N_1 => N_3 <= N_2 $; by Proposition~\ref{the shape of things}\sk(\ref{lambda abstraction case}), $M_3 = `nx.M_3'$, and $M_2 \Redlmn M_3'$.
By Rule~(11), we have $ `@ [M_1] N_1 \Redlmn M_3' \tsubst[N_3/x] $ and by Lemma~\ref{running inside substitution}, we have $ M_2\tsubst[N_2/x]\Redlmn M_3'\tsubst N_3/x $.

	\item[$P_2 = M_2\tsubst N_2/`a $ with $ M_0 \Redlmn `m`a.M_2 $ and $ N_0 \Redlmn N_2 $]
By induction there exists $M_3$, $N_3$ such that $ \Conv{M_1}{M_3}{`m`a.M_2}$, and $ \Conv N_1 => N_3 <= N_2 $; by Proposition~\ref{the shape of things}\sk(\ref{mu abstraction case}), $M_3 = `m`a.M_3'$, and $M_2 \Redlmn M_3'$.
By Rule~(12), we have $ `@ [M_1] N_1 \Redlmn M_3' \tsubst[N_3/`a] $.
By Lemma~\ref{running inside substitution}, we have $ M_2 \tsubst[N_2/`a] \Redlmn M_3'\tsubst{N_3}.`g/`a $. 

	\end{description}

 \item [$(7)$] 
Then $ P_0 \same [`b] M_0 \Redlmn [`b]M_1 \same P_1 $ because $ M_0 \Redlmn M_1 $.
By Proposition~\ref{the shape of things}\sk(\ref{naming case}), either: 

	\begin{description}

	\item [{$ P_2 \same [`b] M_2 $ with $ M_0 \Redlmn M_2 $}] 
By induction, there exists $M_3$ such that $\Conv M_1 => M_3 <= M_2 $; then by Rule $(7)$ also 
$ \Conv [`b]M_1 => [`b]M_3 <= [`b]M_2 $. 

	\item [$ P_2 \same M_2\tsubst`b/`a $ with $ M_0 \Redlmn \Mu `a . M_2 $]
By induction, there exists $M_3$ such that 
\linebreak$ M_1 \SEarrow M_3 \SWarrow $ $ M_2 $; then by Proposition~\ref{the shape of things}\sk(\ref{mu abstraction case}), $ M_3 \same `m`a.M_3' $, $ M_1 \same `m`a.M_1' $. 
By Lemma~\ref{running inside substitution} we have that $ M_2 \tsubst[`b/`a] \Redlmn M_3' \tsubst[`b/`a] $. 
Since $ `m`a.M_1' \Redlmn `m`a.M_3' $, by Rule $(7)$ also 
$ [`b]`m`a.M_1' \Redlmn M_3' \tsubst[`b/`a] $. 

	\end{description}

 \item [$(8)$] 
Then $ P_0 \same [`b] M_0 \Redlmn M_1 \tsubst[`b/`a] \same P_1 $ because $ M_0 \Redlmn `m`a.M_1 $.
By Proposition~\ref{the shape of things}\sk(\ref{naming case}), either: 

	\begin{description}

	\item [{$ P_2 \same [`b] M_2 $ with $ M_0 \Redlmn M_2 $}] 
By induction, there exists $M_3$ such that 
$ `m`a.M_1 \SEarrow M_3 $ $ \SWarrow M_2 $; then by Proposition~\ref{the shape of things}\sk(\ref{mu abstraction case}), $ M_3 \same `m`a.M_3' $ and $ M_2 \same `m`a.M_2' $. 
By Lemma~\ref{running inside substitution} we have that $ M_1 \tsubst[`b/`a] \Redlmn M_3' \tsubst[`b/`a] $. 
Since $ `m`a.M_2' \Redlmn `m`a.M_3' $, by Rule $(7)$ also 
$ [`b]`m`a.M_2' \Redlmn M_3' \tsubst[`b/`a] $. 

	\item [$ P_2 \same M_2\tsubst`b/`a $ with $ M_0 \Redlmn \Mu `a . M_2 $]
By induction, there exists $M_3$ such that 
\linebreak$ `m`a.M_1 \SEarrow M_3 $ $ \SWarrow `m`a.M_2 $; then by Proposition~\ref{the shape of things}\sk(\ref{mu abstraction case}), $ M_3 \same `m`a.M_3' $, and\linebreak$ \Conv M_1 => M_3' <= M_2 $.
Then by Lemma~\ref{running inside substitution}, also $ \Conv{ M_1 \tsubst[`b/`a] }{ M_3' \tsubst[`b/`a] }{ M_2 \tsubst[`b/`a] } $.

	\end{description}

 \item [$(9)$] 
Then $ P_0 \same `@ M_0 N_0 \Redlmn M_1 \tsubst[N_1/x] \same P_1 $ because $ M_0 \Redlmn `lx.M_1 $ and $ N_0 \Redlmn N_1 $.
By Proposition~\ref{the shape of things}\sk(\ref{application case}), either: 

	\begin{description}

	\item [$ P_2 \same M_2 N_2 $ with $ M_0 \Redlmn M_2 $ and $ N_0 \Redlmn N_2 $] 
By induction there exists $M_3$, $N_3$ such that $ \Conv `lx.M_1 => M_3 <= M_2 $, and $ \Conv N_1 => N_3 <= N_2 $; then by Proposition~\ref{the shape of things}\sk(\ref{lambda abstraction case}), $M_2 \same `lx.M_2'$ and $M_3 \same `lx.M_3'$ and $ \Conv M_1 => M_3' <= M_2' $.
Since $ M_2 \Redlmn `lx.M_3' $ and $ N_2 \Redlmn N_3 $, by Rule $(9)$, $ M_2N_2 \Redlmn M_3'\tsubst[N_3/x] $. 
We have $ M_1\tsubst[N_1/x] \Redlmn M_3'\tsubst[N_3/x] $ by Lemma~\ref{running inside substitution}. 

	\item[$ P_2 = M_2\tsubst N_2/x $ with $ M_0 \Redlmn `lx.M_2 $ and $ N_0 \Redlmn N_2 $]
By induction there exists $M_3$, $N_3$ such that $ \Conv{`lx.M_1}{M_3}{`lx.M_2} $, and $ \Conv N_1 => N_3 <= N_2 $; then by Proposition~\ref{the shape of things}\sk(\ref{lambda abstraction case}), $M_3 = `lx.M_3'$, and $\Conv M_1 => M_3' <= M_2 $.
Then $ \Conv M_1\tsubst[N_1/x] => M_3'\tsubst[N_3/x] <= M_2\tsubst[N_2/x] $ follows by Lemma~\ref{running inside substitution}. 

	\item[$ P_2 = `m`g.M_2\tsubst {N_2}.`g/`a $ with $ M_0 \Redlmn `m`a.M_2 $ and $ N_0 \Redlmn N_2 $]
By induction there exists \\ $M_3$ such that $ \Conv `lx.M_1 => M_3 <= `m`a.M_2 $; this is impossible.

	\end{description}

 \item [$(10)$] 
Then $ P_0 \same `@ M_0 N_0 \Redlmn `m`g.M_1 \tsubst[{N_1}.`g/`a] \same P_1 $ because $ M_0 \Redlmn `m`a.M_1 $ and $ N_0 \Redlmn N_1 $.
By Proposition~\ref{the shape of things}\sk(\ref{application case}), either: 

	\begin{description}

	\item [$ P_2 \same M_2 N_2 $ with $ M_0 \Redlmn M_2 $ and $ N_0 \Redlmn N_2 $] 
By induction there exist $M_3$, $N_3$ such that \\ $ \Conv `m`a.M_1 => M_3 <= M_2 $, and $ \Conv N_1 => N_3 <= N_2 $; then by Proposition~\ref{the shape of things}\sk(\ref{mu abstraction case}), $M_2 \same `m`a.M_2'$ and $M_3 \same `m`a.M_3'$ and $ \Conv M_1 => M_3' <= M_2' $.
Since $ M_2 \Redlmn `m`a.M_3' $ and $ N_2 \Redlmn N_3 $, by Rule $(10)$, $ M_2N_2 \Redlmn `m`g.M_3'\tsubst[{N_3}.`g/`a] $, and $ `m`g.M_1\tsubst[{N_1}.`g/`a] \Redlmn `m`g.M_3'\tsubst[{N_3}.`g/`a] $ follows by Lemma~\ref{running inside substitution}. 

	\item[$ P_2 = M_2\tsubst N_2/x $ with $ M_0 \Redlmn `lx.M_2 $ and $ N_0 \Redlmn N_2 $]
By induction $M_3$ exists such that \\ $ \Conv{`m`a.M_1}{M_3}{`lx.M_2} $; this is impossible.

	\item[$ P_2 = `m`g.M_2\tsubst {N_2}.`g/`a $ with $ M_0 \Redlmn `m`a.M_2 $ and $ N_0 \Redlmn N_2 $]
By induction there are $M_3$, $N_3$ such that $ \Conv `m`a.M_1 => M_3 <= `m`a.M_2 $, and $ \Conv N_1 => N_3 <= N_2 $; then $M_3 = `m`a.M_3'$, and 
$ M_1 \SEarrow $ $ M_3' \SWarrow M_2 $ 
by Proposition~\ref{the shape of things}\sk(\ref{lambda abstraction case}).
Then 
$ M_1\tsubst[{N_1}.`g/`a] $ $\SEarrow $ $ M_3'\tsubst[{N_3}.`g/`a] $ $ \SWarrow $ $ M_2\tsubst[{N_3}.`g/`a] $
follows by Lemma~\ref{running inside substitution}. 

	\end{description}

 \item [$(11)$] 
Then $ P_0 \same `@ [M_0] N_0 \Redlmn M_1 \tsubst[N_1/x] \same P_1 $ because $ M_0 \Redlmn `nx.M_1 $ and $ N_0 \Redlmn N_1 $.
By Proposition~\ref{the shape of things}\sk(\ref{negation application case}), either: 

	\begin{description}

	\item [{$ P_2 \same [M_2] N_2 $ with $ M_0 \Redlmn M_2 $ and $ N_0 \Redlmn N_2 $}] 
Then by induction there exists $M_3$, $N_3$ such that $ \Conv `nx.M_1 => M_3 <= M_2 $, and $ \Conv N_1 => N_3 <= N_2 $; then by Proposition~\ref{the shape of things}\sk(\ref{nu abstraction case}), $M_2 \same `nx.M_2'$ and $M_3 \same `nx.M_3'$ and $ \Conv M_1 => M_3' <= M_2' $.
Since $ M_2 \Redlmn `nx.M_3' $ and $ N_2 \Redlmn N_3 $, by Rule $(11)$, $ [M_2]N_2 \Redlmn M_3'\tsubst[N_3/x] $. 
We have $ M_1\tsubst[N_1/x] \Redlmn M_3'\tsubst[N_3/x] $ by Lemma~\ref{running inside substitution}. 

	\item[$ P_2 = M_2\tsubst N_2/x $ with $ M_0 \Redlmn `nx.M_2 $ and $ N_0 \Redlmn N_2 $]
By induction there exists $M_3$, $N_3$ such that $ \Conv{`nx.M_1}{M_3}{`nx.M_2} $, and $ \Conv N_1 => N_3 <= N_2 $; then by Proposition~\ref{the shape of things}\sk(\ref{nu abstraction case}), $M_3 = `nx.M_3'$, and $\Conv M_1 => M_3' <= M_2 $.
Then $ \Conv M_1\tsubst[N_1/x] => M_3'\tsubst[N_3/x] <= M_2\tsubst[N_2/x] $ follows by Lemma~\ref{running inside substitution}. 

	\item[$ P_2 = `m`g.M_2\tsubst {N_2}.`g/`a $ with $ M_0 \Redlmn `m`a.M_2 $ and $ N_0 \Redlmn N_2 $]
By induction there exists \\ $M_3$ such that $ \Conv `nx.M_1 => M_3 <= `m`a.M_2 $; this is impossible.

	\end{description}

 \item [$(12)$] 
Then $ P_0 \same `@ [M_0] N_0 \Redlmn M_1 \tsubst[N_1/`a] \same P_1 $ because $ M_0 \Redlmn `m`a.M_1 $ and $ N_0 \Redlmn N_1 $.
By Proposition~\ref{the shape of things}\sk(\ref{negation application case}), either: 

	\begin{description}

	\item [{$ P_2 \same [M_2] N_2 $ with $ M_0 \Redlmn M_2 $ and $ N_0 \Redlmn N_2 $}] 
Then by induction there exists $M_3$, $N_3$ such that $ \Conv `m`a.M_1 => M_3 <= M_2 $, and $ \Conv N_1 => N_3 <= N_2 $; then by Proposition~\ref{the shape of things}\sk(\ref{nu abstraction case}), $M_2 \same `m`a.M_2'$ and $M_3 \same `m`a.M_3'$ and $ \Conv M_1 => M_3' <= M_2' $.
Since $ M_2 \Redlmn `m`a.M_3' $ and $ N_2 \Redlmn N_3 $, by Rule $(11)$, $ [M_2]N_2 \Redlmn M_3'\tsubst[N_3/`a] $. 
We have $ M_1\tsubst[N_1/`a] \Redlmn M_3'\tsubst[N_3/`a] $ by Lemma~\ref{running inside substitution}. 

	\item[$ P_2 = M_2\tsubst N_2/x $ with $ M_0 \Redlmn `nx.M_2 $ and $ N_0 \Redlmn N_2 $]
By induction there exists $M_3$ \\ such that $ \Conv `m`a.M_1 => M_3 <= `nx.M_2 $; this is impossible.

	\item[$ P_2 = M_2\tsubst N_2/`a $ with $ M_0 \Redlmn `m`a.M_2 $ and $ N_0 \Redlmn N_2 $]
By induction there exists $M_3$, $N_3$ such that $ \Conv `m`a.M_1 => M_3 <= `m`a.M_2 $, and $ \Conv N_1 => N_3 <= N_2 $; then by Proposition~\ref{the shape of things}\sk(\ref{nu abstraction case}), $M_3 = `m`a.M_3'$, and $\Conv M_1 => M_3' <= M_2 $; then $ \Conv M_1\tsubst[N_1/`a] => M_3'\tsubst[N_3/`a] <= M_2\tsubst[N_2/`a] $ follows by Lemma~\ref{running inside substitution}.%
\qed

	\end{description}

 \end{description}
 \end{proof}

We can now state our main result.

 \begin{theorem}[Confluence] 
Reduction in $\redlmn$ is confluent.
 \end{theorem}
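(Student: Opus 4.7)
The plan is to assemble the confluence result from the machinery the paper has already developed, namely Theorem~\ref{Bar 3.2.6} (the diamond property for generalised parallel reduction) and Lemma~\ref{Bar 3.2.7} (the equality between the reflexive, transitive closure of $\redlmn$ and that of $\Redlmn$). With these two ingredients in hand, the confluence of $\redlmn$ is obtained by the classical Tait--Martin-L\"of argument, and essentially no new combinatorics on terms is required.

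The central intermediate step is the so-called strip lemma: if $\Redlmn$ has the diamond property then so does $\rtcRedlmn$. I would prove this by a standard tiling/diagram chase. Assume $P_0 \rtcRedlmn P_1$ and $P_0 \rtcRedlmn P_2$. Expand the first into a finite chain $P_0 \Redlmn Q_1 \Redlmn Q_2 \Redlmn \cdots \Redlmn Q_n = P_1$. First, by iterated use of Theorem~\ref{Bar 3.2.6}, starting from $P_0 \Redlmn P_2$ and $P_0 \Redlmn Q_1$, close a single diamond to obtain $R_1$ with $P_2 \Redlmn R_1$ and $Q_1 \Redlmn R_1$; then apply the same argument to $Q_1 \Redlmn R_1$ and $Q_1 \Redlmn Q_2$ to obtain $R_2$, and so on. After $n$ closures one obtains $R_n$ with $P_1 \Redlmn R_n$ and $P_2 \rtcRedlmn R_n$. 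A second induction, now on the length of the reduction $P_0 \rtcRedlmn P_2$, applied to this strip result, produces a common reduct for arbitrary $\rtcRedlmn$-expansions on both sides, which is exactly the diamond property for $\rtcRedlmn$.

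Having established the diamond property for $\rtcRedlmn$, I would invoke Lemma~\ref{Bar 3.2.7} to rewrite $\rtcRedlmn$ as $\rtcredlmn$; the diamond property therefore holds for $\rtcredlmn$ as well. By Definition~\ref{CR def}, this is precisely the Church--Rosser property for $\redlmn$, which is what was to be shown.

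I do not anticipate a significant obstacle here, since the delicate work (handling the interaction between structural substitution, insertion, and $\Redlmn$) has already been absorbed into Theorem~\ref{Bar 3.2.6}. The only mild subtlety is bookkeeping in the double induction of the strip argument, in particular making sure that at each closure the correct direction of the diamond is used so that the outer induction variable (the length of the $P_0 \rtcRedlmn P_1$ reduction) strictly decreases; this is routine and independent of the specifics of $\lmn$.
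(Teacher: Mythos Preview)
Your proposal is correct and follows essentially the same approach as the paper: invoke Theorem~\ref{Bar 3.2.6} for the diamond property of $\Redlmn$, Lemma~\ref{Bar 3.2.7} for $\rtcredlmn = \rtcRedlmn$, and conclude confluence via Definition~\ref{CR def}. The only difference is that the paper leaves the strip-lemma/tiling argument (diamond for $\Redlmn$ implies diamond for $\rtcRedlmn$) implicit as a standard fact, whereas you spell it out in detail; this is additional exposition rather than a different route.
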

 \begin{proof}
By Theorem~\ref{Bar 3.2.6}, we have that $\Redlmn$ satisfies the diamond property, and by Lemma~\ref{Bar 3.2.7} that $ \rtcredlmn $ is the transitive closure of $\Redlmn$. 
Then by Definition~\ref{CR def}, $\redlmn$ is confluent.
\qed
 \end{proof}

\section{Representing \texorpdfstring{$\vdash_{\textsc{ni}}$}{⊢NI} in \texorpdfstring{$\vdash_{\cal L}$}{⊢L}} 

\label{representation}
In this section we will show that all statements provable in $\TurnNI$ have a witness in $\Turnlmn$.
We achieve this result by first defining a mapping for terms from $\nlm$ to $\lmn$; this will deal with a necessary transformation of derivations when establishing a relation between typeability in $\nlm$ and $\lmn$.
What we use here is the transformation from $\Turnlmn$ to $\Turnnlm$:
 \[ \def\Turnlmn{\Turn} \def\Turnnlm{\Turn}
\Inf	[`m]
	{\Inf	{\Inf	[\negE]
			{\InfBox{ \dernlm `G |- M : \neg\neg C }
			 \quad
			 \Inf	[\Ax]{ \dernlm y`:\neg C |- y : \neg C }
			}{\dernlm `G |- [M]y : `B } 
		}{\InfBox<84>{ \dernlm `G |- \Cont[{`@ [M] y }] : `B } }
	}{\dernlm `G |- `My.\Cont[{[M]y}] : C } 
\dquad \textrm{\normalsize into} \dquad
\Inf	[`m]
	{\Inf	{\Inf	[\negE]
			{\InfBox{ \derlmn `G |- M : \neg\neg C }
			 \quad
			 \Inf	[\negI]
				{\Inf	[\Name]
					{\Inf	[\Ax]{ \derlmn y`:C |- y : C }
					}{ \derlmn y`:C,`a`:\neg C |- { [`a] y } : `B }
				}{ \derlmn `a`:\neg C |- { `Ny.[`a]y} : \neg C }
			}{ \derlmn `G |- {`@ [M] (`Ny.[`a]y) } : `B } 
		}{\InfBox<128>{ \derlmn `G |- {\Cont[{`@ [M] (`Ny.[`a]y) }]} : `B } }
	}{\derlmn `G |- `M`a.{\Cont[{`@ [M] (`Ny.[`a]y) }]} : C }
 \]
Remark that in the first, there is no subterm that has type $C$, whereas in the second, there is.
So we can deal with $(\PbC)$ towards an assumption that is not on the left.

 \begin{definition}
We define a mapping $\Sem{`.}_V : \nlm\arrow \cal L $ inductively over terms.
 \[ \begin{array}{rcl@{}l}
\Sem{x}_V &=& `N x.[`a] x & (`a\For x \ele V) \\
\Sem{x}_V &=& x & (x \notele V) \\
\Sem{`M x . M}_V &=& `M `a . \Sem{M}_{V,`a\For x} 
 \end{array} \dquad \begin{array}{rcl}
\Sem{`Lx.M}_V &=& `Lx.\Sem{M}_V \\
\Sem{`@ M N }_V &=& `@ {\Sem{M}_V} \Sem{N}_V \\
\Sem{`N x.M}_V &=& `N x.\Sem{M}_V \\
\Sem{`@ [M] N }_V &=& `@ [\Sem{M}_V] \Sem{N}_V \\
 \end{array} \]
and define
 \[ \begin{array}{rcl@{\quad}l}
\UL{x} &=& `{} \\
\UL{`lx.M} = \UL{`nx.M} &=& \UL{M} \\
\UL{MN} = \UL{[M]N} &=& \UL{M} \Union \UL{N} \\
\UL{`mx.M} &=& \UL{M} \Union \Set { `a\For x } & (`a \textit{ fresh}) \\
 \end{array} 
 \dquad
 \begin{array}{rcl@{\quad}l}
\Sem{`G,x`:A}_V &=& \Sem{`G}_V,`a`:A & (`a\For x \ele V) \\
\Sem{`G,x`:A}_V &=& \Sem{`G}_V,x`:A & (`a\For x \notele V) \\
\Sem{`{}}_V &=& `{} 
 \end{array} \]
 \end{definition}
Remark that, if $`a\For x \ele V$ then $x$ was bound under $`m$, so in a derivation will have a negated type.
Notice that \emph{all} occurrences of term variables that occur in $V$ are replaced by $\Sem{`.}_V$, even if they appear on the left in $(\negE)$: that this is not problematic, can be illustrated by the following:
 \[ \begin{array}[t]{lclclcl}
\Sem{`my. `@ [y] M }_{`e} &=&
`M `a . {`@ [\Sem{y}_{`a/y}] \Sem{M}_{`a/y} } &=& 
`M `a . {`@ [{`N y.[`a] y }] \Sem{M}_{`a/y} } &\redlmn& 
`M `a . {`@ [`a] \Sem{M}_{`a/y} }
 \end{array} \]
so the substitutions on the left-hand side do not affect the result, but just create a slightly more complicated proof than would be necessary.

We can now show a representation result, which essentially shows that, although the inference rules of $\Turnnlm$ and $\Turnlmn$ differ significantly in their applicability of the rule $(`m)$, which represents the proof rule $(\PbC)$, they can witness the same results in $\TurnNI$.
This result does not establish a rule-to-rule mapping of the correspondence between the systems, but states that logical judgements that are provable in one system are also provable in the other.
We know that every provable judgement in $\Turnnlm$ corresponds directly to a provable statement in $\TurnNI$, and vice versa, and with the correspondence we show here, we get that this also holds between $\Turnlmn$ and $\TurnNI$.

We first establish a relation between typeability in $\nlm$ and $\lmn$.

 \begin{theorem}
 \begin{enumerate}
 \item
If $ \derlmn `G |- M : A $, then $ \dernlm `G |- M : A $.
 \item
If $ \dernlm `G |- M : A $, and $V = \UL{M}$, then $ \derlmn \Sem{`G}_V |- \Sem{M}_V : A $.
 \end{enumerate}
 \end{theorem}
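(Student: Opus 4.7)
The plan is to prove the two parts by induction on the respective derivations.  For part~(1), proceed by structural induction on the derivation of $\derlmn `G |- M : A$.  Each typing rule of $\Turnlmn$ --- namely $(\Ax)$, $(\arrI)$, $(\arrE)$, $(\negI)$, $(\negE)$, and $(`m)$ --- has a direct counterpart in $\Turnnlm$, and applying the inductive hypothesis to the premises yields the matching $\nlm$-derivation.  The only case requiring a short detour is $(\Name)$: a step concluding $\derlmn `G,`a`:\neg A |- [`a]N : `B$ from $\derlmn `G |- N : A$ is re-derived in $\nlm$ by combining the axiom $\dernlm `G,`a`:\neg A |- `a : \neg A$ (available because $\nlm$ conflates names and variables) with the weakened IH $\dernlm `G,`a`:\neg A |- N : A$ via the rule $(\negE)$.

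For part~(2), I would first strengthen the statement to allow $V \supseteq \UL{M}$ together with the side condition that $V$ be compatible with $`G$ (every pair $`a{/}x$ in $V$ corresponds to some $x`:\neg B$ in $`G$), so that as we descend through an enclosing $`m$-binder the inherited $V$ still satisfies the hypothesis at the leaves.  Then proceed by induction on $\dernlm `G |- M : A$.  The cases $(\arrI)$, $(\arrE)$, $(\negI)$, $(\negE)$ and $(`n)$ are routine: invoke the IH on each premise under the same $V$ and assemble the conclusion with the corresponding rule of $\Turnlmn$, appealing to Barendregt's convention to keep the newly bound variable outside the support of $V$.  The crucial case is $(`m)$ concluding $\dernlm `G |- `mx.M : A$ from $\dernlm `G,x`:\neg A |- M : `B$: since $\UL{`mx.M} = \UL{M} \union \Set{`a{/}x}$ for the chosen fresh $`a$, the pair $`a{/}x$ is already in $V$, so applying the IH to the premise with $V$ yields $\derlmn \Sem{`G}_V,`a`:\neg A |- \Sem{M}_V : `B$, and $\lmn$'s rule $(`m)$ then delivers $\derlmn \Sem{`G}_V |- `m`a.\Sem{M}_V : A$, which is exactly $\Sem{`mx.M}_V$ by definition.

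The main obstacle is the axiom case when $`a{/}x \ele V$: the variable $x$ has been $`m$-bound higher up, so by compatibility its type is necessarily negated, say $A = \neg B$; the translation replaces $x$ by $`Ny.[`a]y$, and we must exhibit a $\lmn$-derivation of $\derlmn \Sem{`G}_V,`a`:\neg B |- `Ny.[`a]y : \neg B$.  This is assembled in three steps --- $(\Ax)$, then $(\Name)$, then $(\negI)$ --- mirroring the encoding used in the earlier double-negation-elimination example; it is precisely this construction that shows that the discipline of $\lmn$, in which only names may be the subject of a naming, costs no expressive power relative to $\nlm$.  The residual work is bookkeeping: checking that $\Sem{`G}_V$ consistently renames the $V$-captured $x$-entries to $`a$-entries so that types line up with the replacement term $`Ny.[`a]y$ wherever it is inserted, and that the freshness of the introduced $`a$-names is preserved throughout by the variable convention.
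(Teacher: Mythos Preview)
Your proposal is correct and follows essentially the same route as the paper.  For part~(1) the paper simply observes that, once names are treated as variables, rule $(\Name)$ is admissible in $\Turnnlm$ and hence $\Turnlmn$ is a sub-system of $\Turnnlm$; your explicit reconstruction of the $(\Name)$ step via $(\Ax)$ plus $(\negE)$ is exactly the content of that admissibility.  For part~(2) the paper also proceeds by induction on the $\nlm$-derivation and handles the axiom and $(`m)$ cases just as you describe, including the three-step $(\Ax)$--$(\Name)$--$(\negI)$ derivation of $\derlmn \Sem{`G'}_V,`a{:}\neg B |- `Nx.[`a]x : \neg B$.  Your explicit strengthening to $V \supseteq \UL{M}$ with a compatibility condition is a genuine improvement in rigour: the paper silently applies the induction hypothesis to the premise of $(`m)$ with the extended set $V,`a{/}x$ rather than with $\UL{N}$, which only makes sense under precisely the generalisation you spell out.
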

 \begin{proof}
 \begin{enumerate}

 \item
Since, once allowing Greek characters for variables as well, rule $(\Name)$ can be omitted and $\Turnlmn$ is a sub-inference system of $\Turnnlm$.

 \item
By induction on the definition of $\Turnnlm$.

 \begin{description}

 \item[$(\Ax)$]
Then $`G = `G',x`:A$; we have two cases:

 \begin{description}
 
 \item[$`a\For x \ele V$]
Then $A = \neg B$, $`G = `G',x`:\neg B$, so $\Sem{`G}_V = \Sem{`G'}_V,`a`:\neg B $ and $ \Sem{x}_V = `Nx.[`a]x $.
We can derive:
 \[ \def\Turnnlm{\Turn}
\Inf	[\negI]
	{\Inf	[\Name]
		{\Inf[\Ax]{ \dernlm \Sem{`G'}_V,x`:B |- x : B }
		}{ \dernlm \Sem{`G'}_V,`a`:\neg B,x`:B |- [`a]x : `B }
	}{ \dernlm \Sem{`G'}_V,`a`:\neg B |- `Nx.[`a]x : \neg B }
 \]

 \item[$x \notele V$]
Then $`G = `G',x`:A$, so $ 
x`:A \ele \Sem{`G}_V $ and $ \Sem{x}_V = x $; the result follows by rule $(\Ax)$.

 \end{description}


 \item[$(`m)$]
Then $ M = `mx.N $, $ \Sem{M}_V = `M`a.\Sem{N}_{V,`a/x} $ and the derivation for $ \dernlm `G |- M : A $ is shaped like:
 \[ \def \Turnnlm{\Turn}
\Inf	[`m]
	{\InfBox{ \dernlm `G,x`:\neg A |- N : `B }
	}{ \dernlm `G |- `Mx.N : A }
 \]
By induction we have $ \derlmn \Sem{`G}_V,`a`:\neg A |- \Sem{N}_{V,`a/ x} : `B $; the result follows by rule $(`m)$.


\item[$(\arrI)$, $(\arrE)$, $(\negI)$, and $(\negE)$]
Straightforward by induction.
\qed

 \end{description}
 \end{enumerate}
 \end{proof}
Moreover, we now have have that every provable judgement in $\TurnNI$ can be inhabited in $\Turnlmn$.

 \begin{corollary} \label{from NI to lmn}
If $ \derNI `G |- A $, if and only if there exist $`G'$ and $M \ele \lmn$ such that $\OL{`G'} = `G$ and $ \derlmn `G' |- M : A $.
 \end{corollary}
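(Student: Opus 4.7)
My plan is to handle the two directions separately, leaning on the preceding theorem and the tight correspondence between the inference rules of $\TurnNI$ and $\Turnnlm$. For the reverse direction I would start from $\derlmn `G' |- M : A $, apply part~(1) of the preceding theorem to obtain $\dernlm `G' |- M : A $ (treating the Greek names in $`G'$ as variables, as permitted there), and then erase all term information from the resulting derivation: since the inference rules of $\TurnNI$ are precisely those of $\Turnnlm$ with the syntax stripped, and $\OL{`G'}$ is by definition the corresponding context of pure formulas, this erasure delivers $\derNI \OL{`G'} |- A $, as desired.

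For the forward direction I would proceed by a routine induction on the derivation of $\derNI `G |- A $, inhabiting each rule instance with the matching rule of $\Turnnlm$: introducing a fresh term variable for each assumption formula in $`G$ (to be used at the leaves that are $(\Ax)$ instances), and tagging each subderivation with the appropriate term former for $(\arrI)$, $(\arrE)$, $(\negI)$, $(\negE)$, and $(\PbC)$ (the latter as $(`m)$ in $\Turnnlm$, after rewriting the $(\PbC)$-$(\negE)$ idiom of $\TurnNI$ as a single $(`m)$ step in $\Turnnlm$). This produces a term $M \ele \nlm$ together with a context $`G''$ of variable-type pairs whose underlying types are exactly those of $`G$ such that $\dernlm `G'' |- M : A $. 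Setting $V = \UL{M}$ and applying part~(2) of the preceding theorem gives $\derlmn \Sem{`G''}_V |- \Sem{M}_V : A $; taking $`G' = \Sem{`G''}_V $ and checking that $\Sem{\cdot}_V$ preserves the underlying type in each declaration (it only ever relabels a term variable as a name, without altering its type), we obtain $\OL{`G'} = `G $ as required.

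The main point to verify is that $\Sem{`G''}_V$ is indeed a well-formed $\lmn$ context, since \Def~\ref{tas lmn} requires every name to carry a negated type. This is automatic: $`a\For x \ele V $ precisely when $x$ is $`m$-bound in $M$, and the rule $(`m)$ of $\Turnnlm$ forces such an $x$ to have type $\neg B$ in $`G''$ for some $B$; so after relabelling, $`a$ carries the negated type $\neg B$ and no further bookkeeping is needed.
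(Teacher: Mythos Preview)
Your approach is the intended one and matches what the paper leaves implicit: use the direct rule-by-rule correspondence between $\TurnNI$ and $\Turnnlm$ in both directions, and bridge $\Turnnlm$ and $\Turnlmn$ via the two parts of the preceding theorem.

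One small correction to your final paragraph: your justification for why $\Sem{`G''}_V$ is a well-formed $\lmn$ context is misdirected. You argue that if $`a\For x \ele V$ then $x$ carries a negated type in $`G''$; but $`a\For x \ele V = \UL{M}$ means $x$ is $`m$-\emph{bound} in $M$, so by Barendregt's convention $x$ is not free in $M$ and hence does not occur in $`G''$ at all. The relabelling case of $\Sem{\,\cdot\,}_V$ therefore never fires on $`G''$, and $\Sem{`G''}_V$ is simply $`G''$ with its variable declarations intact. Well-formedness and $\OL{\Sem{`G''}_V} = \OL{`G''} = `G$ then follow immediately, for this simpler reason rather than the one you give. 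Also, the parenthetical about ``rewriting the $(\PbC)$-$(\negE)$ idiom'' is unnecessary: $(\PbC)$ in $\TurnNI$ corresponds directly to $(`m)$ in $\Turnnlm$ with no rewriting.
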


We will illustrate the expressiveness of $\Turnlmn$.

 \begin{example}
We can witness $ (\neg B\arrow \neg A)\arrow A\arrow B $ in $\Turnlmn$ (let $`G = x`:\neg B\arrow \neg A, y`:A, `a`:\neg B $):
 \[ \def\Turnlmn{\Turn}
\Inf	[\arrI]
	{\Inf	[\arrI]
		{\Inf	[`m]
			{\Inf	[\negE]
				{\Inf	[\arrE]
					{\Inf	[\Ax]{\derlmn `G |- x : \neg B\arrow \neg A }
					 \Inf	[\negI]
						{\Inf	[\Name]
							{\Inf	[\Ax]{ \derlmn `G,z`:B |- z : B }
							}{ \derlmn `G,z`:B |- { [`a] z } : `B }
						}{ \derlmn `G |- `Nz.[`a]z : \neg B }
				 	}{\derlmn `G |- {`@ x (`Nz.[`a]z) } : \neg A }
				 \Inf	[\Ax]{\derlmn `G |- y : A }
				}{\derlmn `G |- {`@ [{`@ x (`Nz.[`a]z) }] y } : `B }
			}{\derlmn x`:\neg B\arrow \neg A,y`:A |- {`M`a . {`@ [{`@ x (`Nz.[`a]z) }] y }} : B }
		}{\derlmn x`:\neg B\arrow \neg A |- {`Ly.`M`a . {`@ [{`@ x (`Nz.[`a]z) }] y }} : A\arrow B }
	}{\derlmn {} |- {`Lxy.`M`a . {`@ [{`@ x (`Nz.[`a]z) }] y }} : (\neg B\arrow \neg A)\arrow A\arrow B }
 \]

We can show Mendelson's Axiom 3 in $\TurnNI$.
Let $`G = \neg B\arr \neg A, \neg B\arr A $
 \[
\Inf	[\arrI]
	{\Inf	[\arrI]
		{\Inf	[\PbC]
			{\Inf	[\negE]
				{\Inf	[\arrE]
					{\Inf	[\Ax]{ \derlog `G,\neg B |- \neg B\arr \neg A }
					 \Inf	[\Ax]{ \derlog `G,\neg B |- \neg B }	
					}{ \derlog `G,\neg B |- \neg A }
				 \Inf	[\arrE]
					{\Inf	[\Ax]{ \derlog `G,\neg B |- \neg B\arr A }
					 \Inf	[\Ax]{ \derlog `G,\neg B |- \neg B }	
					}{ \derlog `G,\neg B |- A }
				}{ \derlog `G,\neg B |- `B }
			}{ \derlog `G |- B }
 		}{ \derlog \neg B\arr \neg A |- ( \neg B\arr A )\arr B }
	}{ \derlog {} |- (\neg B\arr \neg A )\arr ( \neg B\arr A )\arr B }
 \]
This proof gets represented in $\nlm$ by the term $`Lxy.`Mz.{`@ [`@ x z ] (`@ y z ) }$.

Interpreting this into $\lmn$ gives (where $`G = x`:\neg B\arr \neg A , y`:\neg B\arr A , `a`:\neg B $):
 \[ \def \Turnlmn{\Turn}
\Inf	[\arrI]
	{\Inf	[\arrI]
		{\Inf	[`m]
			{\Inf	[\negE]
				{\Inf	[\arrE]
					{\Inf	[\Ax]{ \derlmn `G |- x : \neg B\arr \neg A }
					 \Inf	[\negI]
						{\Inf	[\Name]
							{\Inf	[\Ax]{ \derlmn `G,z`:B |- z : B }
					 		}{ \derlmn `G,z`:B |- [`a]z : `B }
					 	}{ \derlmn `G |- `Nz.[`a]z : \neg B }	
					}{ \derlmn `G |- {`@ x (`Nz.[`a]z) } : \neg A }
				 \Inf	[\arrE]
					{\Inf	[\Ax]{ \derlmn `G |- y : \neg B\arr A }
					 \Inf	[\negI]
						{\Inf	[\Name]
							{\Inf	[\Ax]{ \derlmn `G,z`:B |- z : B }
					 		}{ \derlmn `G,z`:B |- [`a]z : `B }
					 	}{ \derlmn `G |- `Nz.[`a]z : \neg B }	
					}{ \derlmn `G |- {`@ y (`Nz.[`a]z) } : A }
				}{ \derlmn `G |- {`@ [{`@ x (`Nz.[`a]z) }] ({`@ y (`Nz.[`a]z) }) } : `B }
			}{ \derlmn `G{\setminus} `a |- `M`a.{`@ [{`@ x (`Nz.[`a]z) }] ({`@ y (`Nz.[`a]z) }) } : B }
 		}{ \derlmn x`:\neg B\arr \neg A |- `Ly.`M`a.{`@ [{`@ x (`Nz.[`a]z) }] ({`@ y (`Nz.[`a]z) }) } : ( \neg B\arr A )\arr B }
	}{ \derlmn {} |- `Lxy.`M`a.{`@ [{`@ x (`Nz.[`a]z) }] ({`@ y (`Nz.[`a]z) }) } : (\neg B\arr \neg A )\arr ( \neg B\arr A )\arr B }
 \]

 \end{example}

\def \dlceil{{\lceil}\kern-3\point{\lceil}\kern-1\point}
\def \drfloor{\kern-1\point{\rfloor}\kern-3\point{\rfloor}}
 \def \Redparam(#1){\textit{Red} \, (#1)}
 \def \Red{\ifnextchar(
 	{\Redparam}{\textit{Red}}}
 \def \Inter {\cap}
 \def \Comp#1{\llceil #1 \rrfloor}
 \def \Compp<#1;#2;#3>{\textit{Comp}`<#1`;{#2 }`;#3`>}%
 \def \Compr(#1){\textit{Comp}`(#1`)}%
 \def \Comp(#1){\textit{Comp}\,(#1)}%
 \def \Comp{\ifnextchar(
 	{\Compr}{\Compp}}
 \def \derComp(#1 |- #2 : #3 ){ \derlmn #1 |- { #2 } : #3 \And \Comp(#2) }
 \def \derSN(#1 |- #2 : #3 ){
\SN(#2) }%
 \def \Vtsubst[#1/#2]{\tqsk\{\nhsk\Vect{#1\For #2}\nhsk\}}
 \def \mref#1{\,\textit{\ref{#1}}}

 \section{Strong normalisation} \label{strong normalisation}
\def \SNsymb{\mathcal{S\kern-1\point N}}
\def \SNpar(#1){\SNsymb\psk({#1})}
\def \SN{\ifnextchar(
	{\SNpar}{\SNsymb}}

In this section we shall prove that every term typeable in $\Turnlmn$ is strongly normalisable; this will be done using the technique of \emph{reducibility candidates}, as first defined by Girard \cite{Girard'71}, based on work by Tait \cite{Tait'67}.
We will follow Parigot's application \cite{Parigot'97} of the reducibility method, but adapted to negation and applications like $ [M] N $.

 \begin{definition}
 \begin{enumerate}
 \item
 We use $\SN(M)$ to express that $M$ is strongly normalisable (all reduction paths starting from $M$ are of finite length), and $\SN = \Set{M \ele \lmn \mid \SN(M) }$.
 \item
As in \cite{Parigot'97}, we write $V^{`F}$ for the set of all finite sequences of elements of $V$, with $`e$ representing the empty sequence, and use the notation $\Vect{v}$ for elements of $V^{`F}$.
 \end{enumerate}
 \end{definition}

 \begin{proposition} \label {SN props}
The following properties hold of $\SN$:

 \begin{enumerate} 
 
 \item \label {SN head application}
$\SN(xM_1\dots M_n)$ (with $n\geq 0$) and $\SN(M')$ if and only if $\SN(xM_1\dots M_nM')$.

 \item \label {SN L abstraction}
If $\SN(M)$, then $\SN(`lx.M)$.

 \item \label {SN N abstraction}
If $\SN(M)$, then $\SN(`nx.M)$.

 \item \label {SN M abstraction}
If $\SN(M)$ then $\SN(`m`a.M)$.

 \item \label{SN Mx}
If $\SN(Mx)$ then $\SN(M)$.

 \item \label {SN lambda redex}
If $\SN(M\tsubst[N/x]\Vect{P})$ and $\SN(N)$, then 

 \item \label {SN nu redex}
If $\SN(M\tsubst[N/x]\Vect{P})$ and $\SN(N)$, then $\SN({`@ [`nx.M] N \Vect{P}})$.

 \item \label {SN mu redex}
If $\SN(M\tsubst[N.`g/`a]\Vect{P})$ and $\SN(N)$, then $\SN({`@ (`m`a.M) N \Vect{P}})$.

 \item \label {SN delta redex}
If $\SN(M\tsubst[N/`a]\Vect{P})$ and $\SN(N)$, then $\SN({`@ [`m`a.M] N \Vect{P}})$.

 \item \label {SN renaming redex}
If $\SN(M[`b/`a])$, then $\SN({`@ [`b] (`m`a.M) })$.

 \end{enumerate} 
 \end{proposition}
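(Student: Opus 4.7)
My plan is to treat the ten statements in three natural groups, as they have quite different character. Items~(1)--(5) are structural closure properties for $\SN$, relating it to building or removing certain syntactic constructors. Items~(6)--(9) are the head-redex preservation statements for the four principal reduction rules ($`b$, $`n$, $`m$, $`d$), asserting $\SN$ of the redex-headed term whenever its contractum and the argument $N$ are $\SN$. Item~(10) is the analogous statement for the $\Rename$-rule.

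For items~(1)--(5) the strategy is direct analysis of possible one-step reductions. The key observation is that none of $`lx.M$, $`nx.M$, or $`m`a.M$ is itself a redex head, so every reduction occurs strictly inside the subterm; this immediately delivers (2), (3), (4). For (1), the variable-headed term $xM_1\ldots M_nM'$ has no root redex either, so all reductions happen in one of the arguments, yielding both directions. For (5), any infinite reduction of $M$ lifts to one of $Mx$, and the variable $x$ can create no new head redex, so $\SN(Mx)$ forces $\SN(M)$.

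For items~(6)--(9) I plan the standard reducibility-style argument: a nested induction on a well-founded lexicographic measure built from the reduction-length function $\nu$ applied to $M$, $N$, and the components of $\Vect{P}$. Consider item~(8) as representative. Assuming $\SN(M\tsubst[N.`g/`a]\Vect{P})$ and $\SN(N)$, I must show $\SN({`@ (`m`a.M) N \Vect{P}})$ by enumerating its one-step reducts: $(i)$ the head $`m$-contraction produces $`m`g.M\tsubst[N.`g/`a]\Vect{P}$, which is $\SN$ by the hypothesis combined with~(4); $(ii)$ any inner reduction in $M$, $N$, or some $P_i$ strictly decreases the lexicographic measure, so the induction hypothesis applies. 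Items~(6), (7), (9) follow the same template with $\tsubst[N/x]$ or $\tsubst[N/`a]$ replacing the structural substitution.

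The principal obstacle will be controlling inner reductions of $N$ inside $`@ (`m`a.M) N \Vect{P}$, since the structural substitution $\tsubst[N.`g/`a]$ can duplicate $N$ at each subterm of $M$ named $`a$. It is therefore not immediate that an inner reduction of $N$ strictly decreases $\nu(M\tsubst[N.`g/`a]\Vect{P})$, which is precisely why $\SN(N)$ is kept as an independent hypothesis, so that the induction can be lexicographic with $\nu(N)$ placed at the outermost level, mirroring Parigot's treatment for $\lmu$. Item~(10) is then routine: $`@ [`b] (`m`a.M)$ admits only the head $\Rename$-contraction to $M\tsubst[`b/`a]$ (covered by hypothesis) and inner reductions of $M$, which lift through the substitution to reductions of $M\tsubst[`b/`a]$ and are so bounded by $\nu(M\tsubst[`b/`a])$.
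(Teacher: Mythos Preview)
The paper states this proposition without proof, so there is nothing to compare against; your sketch is the only argument on offer. For items~(1)--(5), (6), (7), (9), and (10) your plan is standard and goes through exactly as you describe: in each of (6), (7), (9) the head contraction produces precisely the term appearing in the hypothesis (e.g.\ $(`lx.M)N\Vect{P} \redlmn M\tsubst[N/x]\Vect{P}$), so the lexicographic induction on $\nu(N)$ and $\nu(M\tsubst[N/x]\Vect{P})$ closes immediately.

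The gap is in item~(8), which you unfortunately chose as ``representative'' when it is in fact the one exceptional case. The head $`m$-contraction of $(`m`a.M)N\Vect{P}$ does \emph{not} produce $`m`g.(M\tsubst[N.`g/`a]\Vect{P})$; rule~$(`m)$ gives $(`m`a.M)N \redlmn `m`g.M\tsubst[N.`g/`a]$, so with the trailing arguments you obtain $(`m`g.M\tsubst[N.`g/`a])\,\Vect{P}$, where the $`m`g$ binds only the substituted body and $\Vect{P}$ sits \emph{outside}. When $\Vect{P}$ is nonempty this is a fresh $`m$-redex, and neither the hypothesis $\SN(M\tsubst[N.`g/`a]\Vect{P})$ nor item~(4) speaks to it. Your appeal to ``the hypothesis combined with~(4)'' therefore fails. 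To repair this you need either an additional outer induction on the length of $\Vect{P}$ (showing in effect that $\SN(M'\Vect{P})$ implies $\SN((`m`g.M')\Vect{P})$), or---closer to Parigot's original treatment---a reformulation of item~(8) with an \emph{iterated} structural substitution $M\tsubst[N_1.`g_1/`a]\tsubst[N_2.`g_2/`g_1]\cdots$ matching the whole vector of arguments, which is in any case what the Replacement Lemma actually invokes.
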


\def \Arrow {\mathrel{\mbox{\large $\rightarrowtail$}}}
 \def \rightrightharpoons{\setbox11=\hbox{$\rightharpoonup$}%
{\rightharpoondown}\kern-\wd11\raise1.5\point\box11}
 \def \Arrow {\mathrel{\rightrightharpoons}}

The idea is to assign to each type $A$ a set of strongly normalisable terms $\Red(A)$, and to show that every term typeable with $A$ is an element of $\Red(A)$, and thereby strongly normalisable.
Central to the predicate is the notion of functional construction, which states that a set is in $\Red(A\arr B)$, if it is contained of terms that return terms in $\Red(B)$ when applied to terms in $\Red(A)$. 

The latter is expressed through the following:

 \begin{defiC}[\cite{Parigot'97}]
 \begin{enumerate}
 \item 
The \emph{functional construction} $\Arrow : \pow \lmn \Prod \pow \lmn \arrow \pow \lmn $ is defined through:
 \[ \begin{array}{rcl}
A \Arrow B &\ByDef& \Set{ M \ele \lmn \mid \Forall N\ele A [ `@ M N \ele B ] }
 \end{array} \]

 \item
$\Arrow$ is generalised to $\Arrow^{`F}$ through:
 \[ \begin{array}{rcl}
A \Arrow^{`F} B &\ByDef& \Set{ M \ele \lmn \mid \Forall \Vect{N}\ele A^{`F} [ `@ M \Vect{N} \ele B ] }
 \end{array} \]
 \end{enumerate}
 \end{defiC}

Using functional construction, reducibility candidates can easily be defined.

 \begin{defiC}[\cite{Parigot'97}] \label{Red definition}
The set $\Red \subseteq \pow \lmn $ of \emph{reducibility candidates} is inductively defined through: 
 \begin{enumerate}
 \item $\SN \ele \Red$;
 \item If $ A \ele \Red $ and $ B \ele \Red $, then $ A \Arrow B \ele \Red $.
 \end{enumerate}
We write $\Red(A)$ for $A \ele \Red$.
 \end{defiC}

The next result states that all terms that are reducible in $A$ are also strongly normalisable, and that all variables are reducible in any type.

 \begin{lemC}[\cite{Parigot'97}] \label{Red implies SN} \label{vars are Red}
If $\Red(A)$, then $ A \subseteq \SN $ and $A$ contains the $`l$-variables.
 \end{lemC}
 \begin{proof}
We prove 
 \begin{flatenumerate}
 \item
$ A \subseteq \SN $ and 
 \item 
for all $\Vect{N} \ele \SN^{`F}$, $x \Vect{N} \ele A $
 \end{flatenumerate}
by induction on the definition of \Red.
 \begin{enumerate}

 \item 
 \begin{description}
 \item[$ A = \SN$] Immediate.
 \item[$ A = B \Arrow C $] 
Take $M \ele A$, then 
$ \Forall N\ele A [ `@ M N \ele B ] $, so $ \Forall N\ele A [ \SN(`@ M N )] $ by induction~(1). 
Take the $`l$-variable $x$, then by induction~(2), $x \ele B$ and therefore $M x \ele C $, so by induction $ \SN(M x)$, and therefore by Proposition~\ref{SN props}\sk(\ref{SN Mx}) $\SN(M)$. 
 \end{description}

 \item 
 \begin{description}
 \item[$ A = \SN$] By Proposition~\ref{SN props}\sk(\ref{SN head application}).
 \item[$ A = B \Arrow C $] 
Let $ N' \ele B $, then $\SN(N')$ by induction~(1).
Take $\Vect{\SN(N_i)}$, then by Proposition~\ref{SN props}\sk(\ref{SN head application}) $ \SN(x \Vect{N} N') $, and $ x \Vect{N}N' \ele C $ by induction~(2).
So $ x \Vect{N} \ele A $.
\qed 

 \end{description}
 \end{enumerate}
 \end{proof}

 \begin{lemC}[\cite{Parigot'97}]
If $\Red(A)$, then there exists $B \subseteq \SN^{`F}$ such that $ A = B \Arrow^{`F} \SN $.
 \end{lemC}
 \begin{proof}
By induction on the definition of \Red.

 \begin{description}
 \item [$ A = \SN $] Notice that $ \SN \ByDef \Set {`e} \Arrow^{`F} \SN $.

 \item [$ A = C \Arrow D $] 
By induction we have $ D = E \Arrow^{`F} \SN $ for some $E$, and therefore $ A = F \Arrow^{`F} \SN $ where  
 $ \begin{array}{rcl}
F &=& \Set {M \Vect{N} \mid M \ele C, \Vect{N} \ele E }.
 \end{array} $
\qed

 \end{description}
 \end{proof}

 \begin{definition} \label{Red to SN}
For every $A \ele \Red$, $A^{`B}$ is defined as the greatest $ B \subseteq \SN^{`F} $ such that $ A = B \Arrow^{`F} \SN $.
 \end{definition}

 \noindent
Notice that, since $A \subseteq \SN$, if $M \ele A$, then $M \ele \SN$ and $M `e \ele \SN$, so $`e \ele A^{`B}$.

Parigot remarks that membership of $`e$ is essential. 
He says ``\emph{It allows to go from an arbitrary reducibility candidate $A$ to $\SN$ and back, without knowing anything about $A$. This property is used for the case of the rule where one switches from one formula to another. Contrary to the case of a $`l$ where one knows that the type is an arrow, in the one of a $`m$ one has an arbitrary type, but it can be considered as some kind of arrow $ A^{`B} \Arrow^{`F} \SN $ whose number of arguments is unknown (possibly zero).}'' \cite{Parigot'97} (notation adapted).

It is perhaps worthwhile to point out that, so far, there is no relation between typeability and reducibility, in that $\Red(A)$ does not just contain terms that are typeable with $A$; for example, the term $xx$ is in $\Red(A)$, for any $A$, since $ xx \ele \SN$, but is not typeable in $\Turnlmn$.
However, assume that term $M$ is typeable with $A$, then we have a derivation that shows $ {\cal D} : \derlmn `G |- M : A $ and for every free variable $x$ in $M$ there will a type $B$ such that $x`:B \ele `G$, and $\cal D$ contains occurrences of rule $(\Ax)$ showing $ \derlmn `G' |- x : B $.
We will be capable of proving that then replacing $x$ in $M$ by elements of $\Red(B)$, for every free variable of $M$, creates an element of $\Red(A)$ and this will be sufficient for our purposes.

In \cite{Parigot'97}, Parigot shows termination for typeable terms in $\lmu$ enriched with quantification rules; in order to deal with the binding of type variables, he defines a notion of type interpretation that maps type variables onto reducible sets, extended naturally to quantification and (using functional construction) to arrow types.
Here we do not need to deal with quantification, but find it convenient to continue on his path.

 \begin{definition} \label{interpretation}
An \emph{interpretation} $`x$ is a function from type variables to \Red.
Interpretations are extended to arbitrary formulas by:
 \[ \begin{array}{rcl}
`x (A \arrow B) &=& `x (A) \Arrow `x (B) \\
`x (\neg A) &=& `x (A) \Arrow `x (`B) \\
`x (`B) &=& \SN 
 \end{array} \]
 \end{definition}
 
We shall now prove our strong normalisation result by showing that every term typeable with $A$ is reducible in that type.
For this, we need to prove a stronger property:
we will now show that if we replace term-variables by reducible terms in a typeable term, then we obtain a reducible term.

 \begin {lemma} [Replacement Lemma] \label {Replacement property}
 \begin{itemize}\itemsep 2\point 
 \item Let $`G = \Set{ x_1{:}B_1 ,\ldots, x_n{:}B_n, `a_1`:\neg C_1, \ldots, `a_m`:\neg C_m }$.
 \item Let for $\iotn$, $N_i \ele `x(B_i)$, and 
for all $\jotm$, $\Vect{L}_j \ele `x(C_j)^{`B}$ if $C_j = D_j \arr E_j$, and
$L_j \ele `x(D_j)$ if $C_j = \neg D_j$.
 \item Let $ \tsubst[Q_j?/`a] $ stand for $ \tsubst[{\Vect{Q}_j}.`g_j/`a_j] $ if $C_j = (D_j\arr E_j)$, or $ \tsubst[Q_j/`a] $ if $C_j = \neg D_j$.
 \end{itemize}
\vspace*{2mm}

If $\derlmn `G |- M : A $, then $ M \Vtsubst[N_i/x_i] \Vtsubst[L_i?/`a_i] \ele `x(A) $. 
 \end {lemma}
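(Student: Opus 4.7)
The plan is to proceed by induction on the derivation of $\derlmn `G |- M : A $, writing $\sigma$ for the composite substitution $ \Vtsubst[N_i/x_i] \Vtsubst[L_i?/`a_i] $ and $M\sigma$ for its application to $M$; by Proposition~\ref{term subst lemma} the three kinds of substitution can be reordered, which supplies the bookkeeping identities needed below.

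The central tool throughout is head expansion of reducibility candidates: by Definition~\ref{Red to SN}, any $`x(A) \ele \Red$ can be written as $`x(A)^{`B} \Arrow^{`F} \SN$, so showing $M\sigma \ele `x(A)$ reduces to showing $M\sigma \Vect{P} \ele \SN$ for every $\Vect{P} \ele `x(A)^{`B}$. The closure properties collected in Proposition~\ref{SN props} then allow a reduction at the head to be pushed through the trailing arguments. With this in place, the cases $(\Ax)$, $(\arrE)$ and $(\negE)$ are immediate: the first uses the hypothesis on the $N_i$; the other two combine the induction hypothesis with the defining property of $\Arrow$ (recalling that $`x(\neg A) = `x(A) \Arrow \SN$). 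The cases $(\arrI)$ and $(\negI)$ go through head expansion: to put $(`lx.M')\sigma$ into $`x(A) \Arrow `x(B)$, fix $N \ele `x(A)$ and $\Vect{P} \ele `x(B)^{`B}$; the induction hypothesis applied to $M'$, extending $`G$ by $x`:A$ and $\sigma$ by $N/x$, yields $M'\sigma\tsubst[N/x]\Vect{P} \ele \SN$, so Proposition~\ref{SN props}(\ref{SN lambda redex}) gives $(`lx.M'\sigma) N \Vect{P} \ele \SN$; the case $(\negI)$ is analogous using item~(\ref{SN nu redex}).

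The technically delicate cases are $(\Name)$ and $(`m)$, where the structural substitution $\{L?/`a\}$ must be unfolded and matched to the shape of the negated assumption $\neg C$. For $(\Name)$ applied to $[`a_j]N$, the substitution replaces the command either by an application-style command built from $N\sigma$ and the components of $\Vect{L}_j$, or by $[N\sigma]L_j$, depending on whether $C_j$ is an arrow or a negation; in both cases the induction hypothesis places $N\sigma$ in the appropriate candidate, and membership in $\SN$ then follows from $\Vect{L}_j \ele `x(C_j)^{`B}$ (respectively $L_j \ele `x(D_j)$). For $(`m)$, to show $(`m`a.M')\sigma \ele `x(A)$ one tests against $\Vect{P} \ele `x(A)^{`B}$; the head redex $(`m`a.M')\sigma\Vect{P}$ then contracts to a term of the shape $M'\sigma'$, where $\sigma'$ extends $\sigma$ by the new co-context entry for $`a$ read off from $\Vect{P}$, and closure under $\SN$ follows from the induction hypothesis via Proposition~\ref{SN props}(\ref{SN mu redex}), (\ref{SN delta redex}) and~(\ref{SN renaming redex}).

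The main obstacle I anticipate is precisely this matching step: the induction hypothesis must be invoked after enlarging $\sigma$ with a new entry of the form $\Vect{P}.`g/`a$ (or $P/`a$), and one has to verify that the resulting substituted term is literally the body produced by contracting $(`m`a.M')\sigma\Vect{P}$. This requires repeatedly rearranging compositions of term, structural and insertion substitutions, which is exactly what Proposition~\ref{term subst lemma} supplies; once these algebraic identities are in place, the remaining work is routine.
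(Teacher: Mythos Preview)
Your proposal is correct and follows essentially the same route as the paper's proof: induction on the typing derivation, with the immediate cases $(\Ax)$, $(\arrE)$, $(\negE)$ handled by the hypotheses and functional construction, the binder cases $(\arrI)$, $(\negI)$ by head expansion via Proposition~\ref{SN props}\,(\ref{SN lambda redex},\ref{SN nu redex}), and the two delicate cases $(\Name)$ and $(`m)$ by case-splitting on whether the relevant type is an arrow or a negation, invoking Proposition~\ref{SN props}\,(\ref{SN mu redex}) or (\ref{SN delta redex}) accordingly. Your explicit appeal to Proposition~\ref{term subst lemma} to justify commuting the new head substitution past $\sigma$ is a point the paper leaves implicit, and your citation of item~(\ref{SN renaming redex}) in the $(`m)$ case is not actually needed there; otherwise the arguments coincide.
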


 \begin{proof}
By induction on the structure of derivations. We will use $`S$ for $ \Vtsubst[N_i/x_i] \Vtsubst[L_i?/`a_i] $.

 \def \Turnlmn{\Turn}

 \begin{description} \itemsep 5\point

 \item [$(\Ax)$]
Then $M \same x_i$, for some $\jotn$, $ B_i = A$, and $M`S \same x_i`S \same N_i$.
From the second assumption we have that $N_i \ele `x(A)$.

 \item[$(\arrI)$]
Then $M = `lx.N$, $ A = F\arr G$ and $\derlmn `G,x`:F |- N : G $.
Let $ P \ele `x(F) $,  then by Lemma~\ref{Red implies SN} $\SN(P)$, and by induction $ N\tsubst[P/x]`S \ele `x(G)$.
Let $ \Vect{Q} \ele `x(G)^{`B}$, then $ N\tsubst[P/x]`S\Vect{Q} \ele \SN$ by Definition~\ref{Red to SN}.
Then by Proposition~\ref{SN props}\sk(\ref{SN lambda redex}) also $ (`lx.N)P `S\Vect{Q} \ele \SN$.
Therefore, by Definition~\ref{Red to SN}, $ (`lx.N)P `S \ele `x(G) $ and since $`S$ does not affect $P$, also $ `@ (`lx.N)`S P \ele `x(G) $; then by Definition~\ref{Red definition} and \ref{interpretation}, $ (`lx.N) `S \ele `x(F\arr G) $.

 \item[$(\arrE)$]
Then $M = PQ$ and there exists $F$ such that $\derlmn `G |- P : F\arr A $ and $\derlmn `G |- Q : F $.
By induction, $ P `S \ele `x(F\arr A)$ and $ Q `S \ele `x(F)$; by Definition~\ref{Red definition} and \ref{interpretation} we have $ `@ P`S\, Q`S \ele `x(A)$, and $ `@ P`S\, Q`S \same (`@ P Q )`S $.

 \item [$(\negI)$] 
Then $M\same `ny.P$, $ A = \neg F $, and $\derlmn `G,y`:F |- P : `B $.
Assume $Q \ele  `x(F)$, then by induction, $ P`S\tsubst[Q/y] \ele `x(`B)$, so by Definition~\ref{interpretation}, $\SN(P`S\tsubst[Q/y])$.
Then by Proposition~\ref{SN props}\sk(\ref{SN nu redex}), we have $ \SN(`@ [`Ny.P `S] Q)$, so by Definition~\ref{interpretation}, $ `@ [`Ny.P `S] Q \ele `x(`B)$, so by Definition~\ref{interpretation} $`Ny.P`S \ele `x(\neg F)$, and $ `Ny.P`S \same (`Ny.P)`S $.

 \item [$(\negE)$] 
Then $A = `B$, $M\same `@ [P] Q $, and there exists $F$ such that $\derlmn `G |- P : \neg F $ and $\derlmn `G |- Q : F $.
Then, by induction, we have $P`S \ele `x(\neg F)$ and $Q`S \ele `x(F)$.
Then by Definition~\ref{Red definition} and \ref {interpretation}, $`@ [{P`S}] Q`S \ele `x(`B)$, and $ `@ [{P`S}] Q`S \same (`@ [P] Q )`S $.

 \item[$(\Name)$] 
Then $M = [`a_j]N$ with $\jotm$, $A = `B$, and $\derlmn `G |- N : C_j $ with $`a`:\neg C_j \ele `G$.
By induction, $ N`S \ele `x(C_j) $.
Now either: 
 
 \begin{description}
 \item[$C_j = D_j\arr E_j$]
Notice that $ \tsubst[{\Vect{L}_j}.`g_j/`a_j] \ele `S $ and therefore $ ([`a_j]N)`S = [`g_j]N`S \Vect{L}_j $.
We have $ \Vect{L}_j \ele `x(C_j)^{`B} $ by assumption, and therefore by Definition~\ref{Red to SN}, $N`S \Vect{L}_j \ele \SN$, so also $[`g_j](N`S)\Vect{L}_j \ele \SN$; then by Definition~\ref{interpretation}, $ [`g_j](N`S)\Vect{L}_j \ele \Red(`B) $.

 \item[$C_j = \neg F$]
Now $V_j \ele `x(F)$ by assumption, and therefore by Definition~\ref{interpretation}, $ `@ [N`S] V_j \ele \Red(`B)$, and since $ \{Q/`a\} \ele `S$, also $ `@ [N`S] V_j = ([`a]N) `S $.

 \end{description}

 \item[$(`m)$]
Then $M = `m`b.N$ and $\derlmn `G,`b`:\neg A |- N : `B $.
Now either:
 
 \begin{description}
 \item[$A = D\arr E$] 
Let $ \Vect{Q} \ele `x(A)^{`B}$, then by induction $ N \tsubst[{\Vect{Q}}.`g/`a]`S \ele `x(`B) $ and by Definition~\ref{Red definition} and \ref {interpretation}, $ N \tsubst[{\Vect{Q}}.`g/`a]`S \ele \SN $.
Then by Proposition~\ref{SN props}\sk(\ref{SN mu redex}), $ (`m`a.N)`S\Vect{Q} \ele \SN $, so $ (`m`a.N)`S \ele `x(A) $.

 \item[$A = \neg D$] 
Assume $ Q \ele `x(D) $, then by induction $ P`S\tsubst[Q/`a] \ele `x(`B) $ and by Definition~\ref{Red definition} and \ref {interpretation}, $ P`S\tsubst[Q/`a] \ele \SN $.
Then by Proposition~\ref{SN props}, we have $ `@ [`M`a.P`S] Q \ele \SN $, so by Definition~\ref {interpretation} $ `@ [`M`a.P`S] Q \ele `x(`B) $, so by Definition~\ref{Red definition} $ (`M`a.P)`S \ele `x(\neg D) $.
\qed

 \end{description}
 \end{description}
 \end{proof}
 
We can now prove the main result.

 \begin{theorem}[Strong Normalisation]
 \label{sn result}
Any term typeable in `~$\Turnlmn$' is strongly normalisable.
 \end{theorem}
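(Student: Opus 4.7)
The plan is to derive strong normalisation directly from the Replacement Lemma (\Lmm~\ref{Replacement property}) by instantiating its hypotheses with data that behave like the identity as far as reductions are concerned. Fix any interpretation $\xi$ — for instance, set $\xi(`v) = \SN$ on every type variable and extend by Definition~\ref{interpretation}. Given $\derlmn `G |- M : A$ with $`G$ containing statements $x_i `: B_i$ ($\iotn$) and $`a_j `: \neg C_j$ ($\jotm$), we must choose terms $N_i$, terms or sequences $L_j$/$\Vect{L}_j$, and fresh names $`g_j$ satisfying the hypotheses of the Replacement Lemma.

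For each $x_i$ take $N_i := x_i$; by Lemma~\ref{vars are Red} the term variable $x_i$ belongs to every reducibility candidate, hence to $\xi(B_i)$. For each $`a_j$ with $C_j = D_j \arr E_j$ take $\Vect{L}_j := `e$ and reuse the name $`a_j$ in place of the fresh $`g_j$: the empty sequence lies vacuously in $\xi(C_j)^{`B}$, and the induced substitution $\{`e.`a_j/`a_j\}$ acts as the identity on $M$. For each $`a_j$ with $C_j = \neg D_j$ take $L_j := z_j$ for a fresh term variable, which lies in $\xi(D_j)$ again by Lemma~\ref{vars are Red}. Writing $`S$ for the resulting combined substitution, the Replacement Lemma yields $M`S \ele \xi(A)$, and Lemma~\ref{Red implies SN} then gives $\SN(M`S)$.

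The remaining step is to transfer strong normalisation from $M`S$ back to $M$. The first two substitutions are identities on $M$; only the third, $\{z_j/`a_j\}$, is genuinely non-trivial, replacing every command $[`a_j]N$ by the negated application $[N`S]z_j$. The sole obstacle is a commutation lemma: whenever $M \redlmn M'$, also $M`S \rtcredlmn M'`S$. This is a routine induction on the six reduction rules of Definition~\ref{lmn reduction}, using the standard commutations between term substitution, structural substitution, and insertion analogous to Proposition~\ref{term subst lemma}; freshness of $z_j$ rules out variable capture, and while the insertion can turn a Rename redex $[`a_j]`m`g.P$ in $M$ into a $`d$ redex $[`m`g.P`S]z_j$ in $M`S$, the two reducts still match up to $`S$. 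Once the commutation is in hand, any infinite $\redlmn$-sequence from $M$ lifts to an infinite sequence from $M`S$, contradicting $\SN(M`S)$; hence $\SN(M)$ as required.
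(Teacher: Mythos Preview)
Your argument is correct and very close in spirit to the paper's, but you take a detour the paper avoids.  Both proofs invoke the Replacement Lemma with the identity substitution $N_i = x_i$ for term variables, and both use the empty sequence $`e$ for names $`a_j$ whose $C_j$ is an arrow type, observing that the structural substitution $\tsubst[`e.`a_j/`a_j]$ is the identity.  The divergence is in the case $C_j = \neg D_j$: you honour the Replacement Lemma's hypothesis literally by supplying a fresh term variable $z_j \ele `x(D_j)$, which forces a genuine insertion $\tsubst[z_j/`a_j]$ and then the commutation lemma you sketch to pull strong normalisation back from $M`S$ to $M$.  The paper instead uses $`e$ uniformly for \emph{all} names, writing $M\{\Vect{x_i/x_i}\}\{\Vect{`e?/`a_j}\}$, so that the combined substitution is (up to a renaming of free names) the identity and $\SN(M)$ follows immediately from Lemma~\ref{Red implies SN} with no further work.

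What this buys: the paper's route is shorter and needs no simulation argument, but it silently relies on reading the sequence clause of the Replacement Lemma as applying to arbitrary $C_j$ rather than just arrow types.  Your route stays faithful to the lemma exactly as stated, at the cost of the extra commutation step; that step is indeed routine (the relevant interchange laws are all instances of Proposition~\ref{term subst lemma}, and the one non-trivial case you flag, where a $\Rename$-redex $[`a_j]`M`g.P$ becomes a $`d$-redex under insertion, does match up as you say).  Either way the result goes through.
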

 \begin{proof}
Let $`G = x_1{:}B_1 ,\ldots, x_n{:}B_n, `a_1`:\neg C_1, \ldots, `a_m`:\neg C_m $ such that $\derlmn `G |- M : A $.
Then by Lemma \ref{vars are Red}, for all $\iotn$, $x_i \ele `x(B_i)$ and $`e \ele `x(C_j)^{`B}$.
Then, by Lemma~\ref {Replacement property},\linebreak$M\,\{\Vect{x_i/x_i}\}\{\Vect{`e?/`a_j }\} \ele \Red(A) $; strong normalisation for $M$ then follows from Lemma~\ref {Red implies SN}.\qed
 \end{proof}

 \def \ptL {\textit{pt}\kern-1\point_{\Lsubscr}\hspace{.75mm}}
 \def \LPair <#1;#2>{`<#1`;#2`>}
 \newqsymbol{`S}{\typesubst}

\setbox21=\hbox{\large $\cal L$}
 \section{Principal typing for \,\texorpdfstring{\copy21}{L}} \label{pp}
In this section, we will show that we can define a notion of principal typing for $\Turnlmn$.
This is achieved in the standard way: we define notions of type substitutions and unification, that are used for the definition of the algorithm $\ptL$ that calculates the principal typing for each term typeable in $\Turnlmn$. 

Substitution is shown to be sound, \emph{i.e.}~maps inferable judgements to inferable judgements, and the algorithm is shown to be complete in that all inferable judgements for a term can be constructed from its principal typing.
\begin{samepage}
 \begin {definition} [Substitution and unification] \label{Substitution and unification}
\label {Robinson unification definition}
 \begin {enumerate}

 \item
 \begin{enumerate}

 \item
The {\em substitution} $\subs$, where $`v$ is a type variable and $C$ a type, is inductively defined\footnote{All algorithmic definitions in this section are presented in `functional style', where calls are matched against the alternatives `top-down', the first match is taken, and the result is undefined in case there is no match.} by:
 \[ \begin{array}{lcl}
\subs `B &=& `B \\
\subs `v &=& C \\
\subs `v' &=& `v' \hfill (`v' \not= `v) \\
\subs A\arrow B &=& (\subs A) \rightarrow (\subs B) \\
\subs \neg A &=& \neg (\subs A) 
 \end{array} \]

 \item
If $`S_1$, $`S_2 $ are substitutions, then so is $`S_1 \after `S_2 $, where $`S_1 \after `S_2 {A} = `S_1 ({`S_2 {A} }) $.

 \item
$`S{`G} = \Set{ x`:`S{B}\mid x`:B \ele `G } \Union \Set{ `a`:`S{B}\mid `a`:B \ele `G } $.

 \item
$`S{ \Pair<`G,A> } $ = $ \Pair<`S{`G },`S{A}> $.

 \item
If there exists a substitution $`S$ such that $`S{A} = B$, then $B$ is a {\em (substitution) instance} of $A$.

 \item
$\idS$ is the identity substitution that replaces all type variables by themselves.
 
 \end{enumerate}

 \item 
Unification of types is defined by:
 \[ \begin{array}{lllcl}
\unifyw &`v & `v	& = & (`v \mapsto `v) \\
\unifyw &`v & B	& = & (`v \mapsto B) \quad (`v \textit{ does not occur in } B) \\
\unifyw & A & `v	& = & \unify {`v} {A} \\
\unifyw & (A\arrow B) & (C\arrow D)	& = & `S_2 \after `S_1	\\
\multicolumn{4}{r}{\WHERE} & \begin{array}[t]{rcl}
		`S_1	& = & \unify {A} {C} \\
		`S_2	& = & \unify (`S_1 {B}) (`S_1{D})
	 \end {array} \\
\unifyw & (\neg A) & (\neg C) & = & \unify {A} {C} 
 \end{array} \]

 \item
The operation $\unifyContexts$ generalises $\unify$ to contexts:
 \[ \begin{array}{lllcl}
\unifyContexts & (`G_1,x`:A) & (`G_2,x`:B) &=& `S_2 \after `S_1, \\
\multicolumn{4}{r}{\WHERE} & \begin{array}[t]{rcl}
		`S_1 &=& \unify {A} {B} \\
		`S_2 &=& \unifyContexts\ (`S_1 {`G_1})\ (`S_1 {`G_2})
 \end{array} \\
\unifyContexts & (`G_1,x`:A) & `G_2 &= &
		\unifyContexts\ `G_1\ `G_2 \quad (x \notele `G_2) \\
\unifyContexts & (`G_1,`a`:A) & (`G_2,`a`:B) &=& `S_2 \after `S_1, \\
\multicolumn{4}{r}{\WHERE} & \begin{array}[t]{rcl}
		`S_1 &=& \unify {A} {B} \\
		`S_2 &=& \unifyContexts\ (`S_1 {`G_1})\ (`S_1 {`G_2})
 \end{array} \\
\unifyContexts & (`G_1,`a`:A) & `G_2 &= &
		\unifyContexts\ `G_1\ `G_2 \quad (`a \notele `G_2) \\
\unifyContexts & \emptyset & `G_2 &=& \idS
 \end{array} \]

 \end {enumerate}
 \end {definition}
 \end{samepage}
This definition specifies $\unify$ as a partial function; if the side condition `$`v$ \textit{does not occur in} $B$' fails, no result is returned.
So, for example, `$ \unify {`v} {`v \arrow `v} $' or `$ \unify (A\arr B) {\,\neg (C\arr D)} $' does not return a substitution.

If successful, unification returns the most general unifier, as stated by:

 \begin {propC} [\cite {Robinson'65}] \label {Robinson unification properties}
For all $A$, $B$: if $`S_1 $ is a substitution such that $`S_1 {A} = `S_1 {B} $ (so then $`S_1$ is a \emph{unifier} of $A$ and $B$), then there exist substitutions $`S_2 $ and $`S_3 $ such that
$ \begin{array}{rcl}
`S_2 &=& \unify{A\ B} 
 \end{array} $ and 
$ \begin{array}{rcl@{\,}}
`S_1 &=& `S_3 \after `S_2
 \end{array} $.
 \end {propC}

 \begin {lemma}[Soundness of substitution] \label {derfailN Substitution sound}
If $\derlmn `G |- M : A $, then $\derlmn `S{`G } |- M : `S{A} $.
 \end {lemma}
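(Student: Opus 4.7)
The plan is to prove this by straightforward induction on the derivation of $\derlmn `G |- M : A $, appealing to the fact that substitution commutes with the two type constructors.

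First I would record the four compatibility properties that make the induction go through: by the definition of substitution, $`S(A\arrow B) = `S(A) \arrow `S(B)$ and $`S(\neg A) = \neg\,`S(A)$; moreover $`S(`B) = `B$ (so the placeholder is stable), and, from the definition of $`S(`G)$, we have $`S(`G, x`:B) = `S(`G), x`:`S(B)$ and $`S(`G, `a`:\neg B) = `S(`G), `a`:\neg `S(B)$. Together these ensure that, after applying $`S$ everywhere, the shape of each statement in a derivation remains valid input for the same inference rule.

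Then I would proceed by case analysis on the last rule applied. For $(\Ax)$, the statement $x`:A \ele `G$ yields $x`:`S(A) \ele `S(`G)$, so the rule still applies. For $(\arrI)$, the premise $\derlmn `G, x`:B |- N : C $ gives, by induction, $\derlmn `S(`G), x`:`S(B) |- N : `S(C) $, and reapplying $(\arrI)$ produces $\derlmn `S(`G) |- `lx.N : `S(B)\arrow `S(C) $, which is $\derlmn `S(`G) |- `lx.N : `S(B\arrow C) $. The cases $(\arrE)$, $(\negI)$, $(\negE)$ are analogous, using the compatibility of $`S$ with $\arrow$, $\neg$, and with $`B$ on the right of the turnstile. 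For $(`m)$, the premise is $\derlmn `G, `a`:\neg A |- M : `B $; by induction $\derlmn `S(`G), `a`:\neg `S(A) |- M : `B $, and $(`m)$ yields $\derlmn `S(`G) |- `m`a.M : `S(A) $. The case $(\Name)$ is similar, using that $`S$ preserves the negated shape of the type associated with $`a$ in the context.

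There is no real obstacle here: the statement is a routine compatibility property, and the induction reduces in every case to the observation that $`S$ acts homomorphically on types and on the two kinds of context entries, so any rule instance in the original derivation becomes a rule instance of the same shape after applying $`S$ throughout. I would therefore present the proof compactly, doing only one representative arrow case and one negation/$`m$ case in detail, and relegating the others to `\emph{similarly}'.
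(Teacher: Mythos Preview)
Your proposal is correct and follows exactly the approach the paper takes: a straightforward induction on the structure of derivations, using that $`S$ commutes with $\arrow$, $\neg$, and $`B$ and acts pointwise on contexts. The paper states this in a single line; your expanded case analysis is just the unpacking of that sentence.
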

 \begin{proof} 
By straightforward induction on the structure of derivations.\QED
 \end{proof}

We now define a notion of principal typing for terms of $\cal L$.

 \begin{definition}
The principal typing algorithm for $\Turnlmn$ is given in Figure~\ref{The algorithm ptL}.
\begin{figure} \small ~ \begin{minipage}{.49\textwidth}
 $ \ptL {x} ~=~ \LPair<x`:`v; `v> $ \\
\WHERE $ \begin{array}[t]{rcl} 
		`v & \textit{is} & \FreshVariable {} 
	 \end{array} $ \vspace*{1mm}

$ \ptL { `lx . M } ~=~ \LPair<`P;P> $ \\ 
\WHERE $ \begin{array}[t]{rcl}
		\LPair<`P'; P'> &=& \ptL {M} \\
		`P `; P &=& 
			\begin{cases}{@{}l@{~}l}
			`P' \Except x `; A\arrow P' & (x`:A \ele `P') \\
			 `P' `; `v\arrow P' & (x \notele `P') 
			\end{cases} \\
		`v &\textit{is}& \FreshVariable 
	 \end{array} $ \\ 

 $ \ptL { M N } ~=~ `S_2 \after `S_1 {\LPair<`P_1 \Union `P_2; `v>} $ \\
\WHERE $ \begin{array}[t]{rcl@{\quad}l}
		\LPair<`P_1; P_1> &=& \ptL {M}	\\
		\LPair<`P_2; P_2> &=& \ptL {N} \\
		`S_1 &=& \unify {P_1} {P_2\arrow `v} \\
		`S_2 &=& \unifyContexts\ (`S_1 {`P_1})\ (`S_1 {`P_2}) \\
		`v &\textit{is}& \FreshVariable
	 \end{array} $ \\ 

$ \ptL { `nx . M } ~=~ \LPair<`P;P> $ \\ 
\WHERE $ \begin{array}[t]{rcl}
		\LPair<`P'; `B> &=& \ptL {M} \\
		`P `; P &=& 
			\begin{cases}{@{}l@{~}l}
			`P' \Except x `; \neg A & (x`:A \ele `P') \\
			 `P' `; \neg `v & (x \notele `P') 
			\end{cases} \\
		`v &\textit{is}& \FreshVariable 
	 \end{array} $ 
\end{minipage}
\begin{minipage}{.49\textwidth}
 $ \ptL { [M] N } ~=~ `S_2 \after `S_1 {\LPair<`P_1 \Union `P_2; `B>} $ \\
\WHERE $ \begin{array}[t]{rcl@{\quad}l}
		\LPair<`P_1; P_1> &=& \ptL {M}	\\
		\LPair<`P_2; P_2> &=& \ptL {N} \\
		`S_1 &=& \unify {P_1} {\neg P_2} \\
		`S_2 &=& \unifyContexts\ (`S_1 {`P_1})\ (`S_1 {`P_2}) 
	 \end{array} $ \\ [3mm]

$ \ptL { `m`a . M } ~=~ \LPair<`P;P> $ \\ 
\WHERE $ \begin{array}[t]{rcl}
		\LPair<`P'; `B> &=& \ptL {M} \\
		`P `; P &=& 
			\begin{cases}{@{}l@{~}l}
			`P' \Except `a `; A & (`a`:\neg A \ele `P') \\
			 `P' `; `v & (`a \notele `P') 
			\end{cases} \\
		`v &\textit{is}& \FreshVariable 
	 \end{array} $ \\ [3mm]

 $ \ptL { [`a] N } ~=~ \LPair<`P ; `B> $ \\
\WHERE $ \begin{array}[t]{rcl@{\quad}l}
		\LPair<`P'; P'> &=& \ptL {N} \\
		`P &=& 
			\begin{cases}{@{}l@{~}l}
			`S {`P'} & (`a`:\neg A \ele `P') \\
			`P',`a`:\neg P' & (`a \notele `P') \\
			\end{cases} \\
		`S  &=& \unify {A} {P'} \\
	 \end{array} $ 

\end{minipage}
\caption{The algorithm $\ptL$} \label{The algorithm ptL}
 \end{figure}
 \end{definition}

We can show that the algorithm creates valid judgements:

 \begin{lemma}[Soundness of $\ptL$] \label{Soundness of ptFN}
If $\ptL {M} = \LPair<`P; P> $, then $\derlmn `P |- M : P $.
 \end {lemma}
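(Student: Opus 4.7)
The plan is to argue by structural induction on the term $M$, which corresponds exactly to induction on the recursive calls of $\ptL$. In each case I read off from Figure~\ref{The algorithm ptL} the outputs of the recursive calls, invoke the induction hypothesis to obtain typing judgements for the subterms, and then use Lemma~\ref{derfailN Substitution sound} (soundness of substitution) together with Proposition~\ref{Robinson unification properties} to justify propagating the substitutions $`S_1$ and $`S_2$ through these judgements before applying the appropriate typing rule from Definition~\ref{tas lmn}.

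The base case $M \equiv x$ is immediate from $(\Ax)$, since $\ptL{x} = \LPair<x`:`v;`v>$. For $M \equiv `lx.N$ and $M \equiv `nx.M'$, the induction hypothesis gives $\derlmn `P' |- N : P' $ (respectively $\derlmn `P' |- M' : `B $), and the algorithm then either removes $x`:A$ from $`P'$, or, if $x\notele `P'$, introduces a fresh variable; applying $(\arrI)$ (respectively $(\negI)$) gives the desired judgement, with a weakening step in the second sub-case. The cases $M \equiv `m`a.M'$ and $M \equiv [`a]N$ are analogous, using $(`m)$ and $(\Name)$, respectively, with the caveat that in the $[`a]N$ case when $`a`:\neg A \ele `P'$ we must first apply the substitution $`S = \unify A {P'}$ to the IH, which is legitimate by Lemma~\ref{derfailN Substitution sound}, so that the types of $`a$ and of $N$ become compatible for rule $(\Name)$.

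The two interesting cases are the applications $M \equiv M_1 M_2$ and $M \equiv [M_1] M_2$, since these combine two independently obtained typings and require both $\unify$ and $\unifyContexts$. I will carry out $MN$ in detail; the other is analogous. By IH we have $\derlmn `P_1 |- M : P_1 $ and $\derlmn `P_2 |- N : P_2 $, with $`P_1$ and $`P_2$ sharing no fresh variables. Applying Lemma~\ref{derfailN Substitution sound} first with $`S_1$, chosen so that $`S_1{P_1} = `S_1{(P_2\arr `v)} = `S_1{P_2}\arr `S_1{`v}$, and then with $`S_2$, chosen so that $`S_2 \after `S_1$ unifies $`P_1$ and $`P_2$ pointwise on their common subjects, I obtain $\derlmn `S_2(`S_1{`P_1}) |- M : `S_2(`S_1{P_2})\arr `S_2(`S_1{`v}) $ and $\derlmn `S_2(`S_1{`P_2}) |- N : `S_2(`S_1{P_2}) $. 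Since by construction $`S_2\after`S_1$ equalises $`P_1$ and $`P_2$ on all shared subjects, the two contexts agree on the compatible union $`S_2\after`S_1({`P_1 \union `P_2})$, and a single application of $(\arrE)$ (after harmless weakenings justified by Lemma~\ref{lmu weakening lemma}) yields $\derlmn `S_2 \after `S_1({`P_1 \union `P_2}) |- MN : `S_2 \after `S_1{`v} $, which is exactly $\derlmn `P |- M : P $ for the pair returned by $\ptL$.

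The step that requires the most care is the context-unification argument in the application cases: one has to verify that the finite iteration in $\unifyContexts$ really does produce a substitution which, composed with $`S_1$, is a unifier of $`P_1$ and $`P_2$ restricted to their shared subjects, and that applying it on both sides leaves the contexts literally equal so that the two premises of $(\arrE)$ or $(\negE)$ share the same left-hand side. This is a routine induction on the size of the context but is the only place where the side-conditions of Proposition~\ref{Robinson unification properties} are used in an essential way; the remaining cases then follow the same pattern with minor adjustments for the fresh-variable bookkeeping.
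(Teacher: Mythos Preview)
Your proposal is correct and follows the same approach as the paper, which simply states ``By induction on the structure of terms, using Lemma~\ref{derfailN Substitution sound}''; you have merely unpacked the details of that induction.

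One small correction: you invoke Proposition~\ref{Robinson unification properties} (Robinson's most-general-unifier property), but this is not needed for soundness---it is only needed for the completeness theorem that follows. For soundness you only need the weaker fact that whenever $\unify$ or $\unifyContexts$ returns a substitution, that substitution is \emph{a} unifier of its arguments; this is immediate from the recursive definitions in Definition~\ref{Substitution and unification} and does not require the ``most general'' part. Everything else in your outline (the use of weakening to align contexts, the handling of the fresh-variable sub-cases, the application of $`S_2 \after `S_1$ via Lemma~\ref{derfailN Substitution sound}) is accurate.
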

 \begin{proof}
By induction on the structure of terms, using \Lmm\,\ref{derfailN Substitution sound}.
 \end{proof}

We will now show the main result for $\ptL$, which states that it calculates the most general typeing with respect to type substitution for all terms typeable in $\Turnlmn$.

 \begin{theorem} [Completeness of substitution.]
If $\derlmn `G |- M : A $, then there exists context $`P$, type $P$, and substitution $`S$ such that: $\ptL {M} = \LPair<`P;P> $, $`S{`P}\subseteq `G $, and $`S{P} = A $.
 \end{theorem}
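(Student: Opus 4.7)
The proof will proceed by structural induction on the term $M$, invoking the soundness of substitution (Lemma~\ref{derfailN Substitution sound}) and Robinson's unification property to combine the witnesses coming from the inductive hypotheses. Throughout, I will assume that whenever $\ptL$ is applied to different subterms, the fresh type variables used in the two recursive calls are kept disjoint, so that the substitutions returned by the inductive hypothesis can be combined into a single substitution acting on the union of their domains.

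In the base case $M = x$, the only typing derivation forces $x{:}A \ele `G$, and $\ptL {x} = \LPair<x{:}`v;`v>$ with $`v$ fresh; the substitution $`S = (`v \mapsto A)$ witnesses the claim. For $`lx.M$, the derivation must end in $(\arrI)$, giving $A = B \arr C$ and $\derlmn `G, x{:}B |- M : C$. The induction hypothesis applied to $M$ yields $\ptL{M} = \LPair<`P';P'>$ and $`S$ with $`S{`P'} \subseteq `G, x{:}B$ and $`S{P'} = C$; a case split on whether $x \ele `P'$ (and, in the negative case, extending $`S$ on the fresh variable $`v$ by $`v \mapsto B$) yields the result. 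The abstractions $`nx.M$ and $`m`a.M$ are analogous, using $(\negI)$ and $(`m)$ respectively, and noting that the minor premise in $(`m)$ requires $`a$ to receive a \emph{negated} type, which matches the algorithm's form.

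The case $[`a]N$ is slightly more delicate: from the derivation via $(\Name)$ we have $`a{:}\neg A' \ele `G$ and $\derlmn `G |- N : A'$, and by IH $\ptL{N} = \LPair<`P';P'>$ with $`S{`P'} \subseteq `G$ and $`S{P'} = A'$. If $`a \ele `P'$, then $`a{:}\neg D \ele `P'$ for some $D$ with $`S{D} = A' = `S{P'}$, so $`S$ is a unifier of $D$ and $P'$; Robinson's theorem then guarantees that $\unify{D}{P'}$ succeeds and returns a most general unifier $`T$ through which $`S$ factors, providing the required witness. If $`a \notele `P'$, the algorithm simply adds $`a{:}\neg P'$ to the context, and $`S$ (extended by $`a \mapsto $ nothing) still works.

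The main obstacle, and the most involved case, is the application $MN$ (and symmetrically $[M]N$). From $\derlmn `G |- MN : A$ we obtain, via $(\arrE)$, a type $B$ with $\derlmn `G |- M : B\arr A$ and $\derlmn `G |- N : B$. By IH we obtain $\ptL{M} = \LPair<`P_1;P_1>$ and $\ptL{N} = \LPair<`P_2;P_2>$ together with substitutions $`S_1, `S_2$ such that $`S_i{`P_i} \subseteq `G$ and $`S_1{P_1} = B\arr A$, $`S_2{P_2} = B$. Using disjointness of fresh type variables, I can define a single substitution $`S_0$ that agrees with $`S_1$ on the variables of $\Pair<`P_1;P_1>$, with $`S_2$ on those of $\Pair<`P_2;P_2>$, and sends the fresh $`v$ used by the algorithm to $A$. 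Then $`S_0$ unifies $P_1$ with $P_2 \arr `v$, so by Robinson's theorem $\unify{P_1}{P_2 \arr `v}$ succeeds and returns $`T_1$ with $`S_0 = `S_0' \after `T_1$ for some $`S_0'$. Because $`S_0$ also sends $`T_1{`P_1}$ and $`T_1{`P_2}$ into $`G$ in a compatible way (they disagree only on shared term or name variables, and $`G$ provides a single type for each such subject), an analogous argument applied level-by-level in the definition of $\unifyContexts$ (iterating Robinson) shows that $\unifyContexts\ (`T_1{`P_1})\ (`T_1{`P_2})$ returns $`T_2$ with $`S_0' = `S \after `T_2$ for some $`S$. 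Then $\ptL{MN} = `T_2 \after `T_1 \Pair<`P_1 \union `P_2; `v>$, and the substitution $`S$ verifies $`S{(`T_2 \after `T_1)(`P_1 \union `P_2)} \subseteq `G$ and $`S{(`T_2 \after `T_1){`v}} = A$. The case $[M]N$ proceeds identically, replacing the unification of $P_1$ with $P_2 \arr `v$ by the unification of $P_1$ with $\neg P_2$, and the output type $`v$ by $`B$. The delicate book-keeping in this combination step — in particular the extension of the IH substitutions to a common substitution on disjoint fresh variables and the iterated application of Robinson's theorem both to types and to contexts — is the main technical content of the proof.
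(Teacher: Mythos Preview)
Your proposal is correct and follows essentially the same approach as the paper: structural induction on the term, with the application and negation-elimination cases handled by merging the inductively obtained substitutions into a single unifier on disjoint fresh variables and then invoking Robinson's property twice (once for the type unification, once for the context unification). One minor remark: the soundness-of-substitution lemma you cite is not actually needed in this argument---it is used for the soundness of $\ptL$, not its completeness---so you can drop that reference.
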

 \begin{proof}
By induction on the structure of terms in $\cal L$.

 \begin{description} \itemsep 4\point

 \item [$M \same x$]
Then, by rule $(\Ax)$, $x`:A \ele `G $, and $\ptL {x} = \LPair<\Set{x`:`v};`v> $.
Take $`S = (`v \mapsto A) $.

 \item [$M \same `Lx . N $]
Then, by rule $(\arrI)$, there are $C, D$ such that $A = C\arrow D$, and $\derlmn `G,x`:C |- N : D $.
Then, by induction, there are $`P',P'$ and $`S'$ such that $\ptL {N} = \LPair<`P';P'> $, $`S'{`P'}\subseteq `G,x`:C $, and $`S{P'} = D$.
Then either:

 \begin{description}

 \item [$x \ele \fv(N)$]
Then $x`:C' \ele `P'$, and $\ptL {`Lx . N } = \LPair<`P' \Except x ; C'\arr P'> $.
Since $`S'{`P'}\subseteq `G,x`:C$, in particular $`S'{C'} = C$, $`S'{(`P' \Except x)}\subseteq `G $, and $`S'{(C'\arr P')} = C\arrow D$.
Take $`P = `P'\Except x$, $P = C'\arr P'$, and $`S = `S'$.

 \item [$x \notele \fv(N)$]
Then $\ptL {`Lx . N } = \LPair<`P'; `v\arr P'> $, $x$ does not occur in $`P'$, and let $`v$ not occur in $\LPair<`P';P'> $.
Since $`S'{`P'}\subseteq `G,x`:C$, in particular $`S'{`P'}\subseteq `G $.
Take $`S = `S' \after (`v \mapsto C)$, then, since $`v$ does not occur in $`P'$, also $`S{`P'}\subseteq `G $.
Notice that $`S (`v\arr P') = C\arrow D$; take $`P = `P'$ and $P = `v\arr P'$.

 \end{description}

 \item [$M \same Q R $] 
Then, by rule $(\arrE)$, there exists a $B$ such that $\derlmn `G |- Q : {B\arr A} $ and $\derlmn `G |- R : B $.
By induction, there are $`S_1, `S_2$, $\LPair<`P_1 ; P_1> = \ptL {Q} $ and $\LPair<`P_2 ; P_2> = \ptL {R} $ (no type variables shared) such that $`S_1 {`P_1} \subseteq `G $, $`S_2 {`P_2} \subseteq `G $, $`S_1 {P_1} = B\arr A$ and $`S_2 {P_2} = B$.
Notice that $`S_1, `S_2$ do not interfere.
Let $`v$ be a fresh type variable and
 \[ \begin{array}[t]{rcl}	
`S_u &=& \unify{P_1\ (P_2\arr `v)} \\
`S_{C} &=& \unifyContexts\ (`S_u {`P_1})\ (`S_u {`P_2}) \\
\ptL { Q R } &=& `S_{C} \after `S_u {\LPair<`P_1\Union `P_2; `v'_1\Union `D'_2>} 
\end{array} \]

We need to argue that $\ptL { QR } $ is successful: since this can only fail on calls to unification (of $P_1$ and $P_2\arr `v$, or in the unification of the contexts), we need to argue that these are successful.
Take $`S_3 = `S_2 \after `S_1 \after (`v \mapsto A) $, then 
 \[ \begin{array}{rcl}
`S_3 {P_1} &=& B\arr A, \textrm{ and} \\ 
`S_3 (P_2\arr `v) &=& B\arr A. 
 \end{array} \]
so $P_1$ and $P_2\arr `v$ have a common instance $B\arr A$, and by \Prop\,\ref{Robinson unification properties}, $`S_u$ exists.

Notice that we have
 \[ \begin{array}{rcl}
`S_3 {`P_1} &\subseteq& `G, \textrm{ and} \\ 
`S_3 {`P_2} &\subseteq& `G
 \end{array} \]
since $`P_1$ and $`P_2$ share no type-variables.
Since $`G$ is a context, each term variable has only one type, and therefore $`S_3$ is a unifier for $`P_1$ and $`P_2$, so we know that an $`S_4$ exists which extends the substitution that unifies the contexts, even after being changed with $`S_u$, so such that
 \[ \begin{array}{rcl}
`S_4 (`S_u {`P_1}) &\subseteq& `G, \textrm{ and} \\ 
`S_4 (`S_u {`P_2}) &\subseteq& `G. 
 \end{array} \]
So $`S_4$ also unifies $`S_u {`P_1}$ and $`S_u {`P_2}$, so by \Prop\,\ref{Robinson unification properties} there exists a substitution $`S_5$ such that $`S_4 = `S_5 \after `S_{`G} \after `S_u $.
Take $`S = `S_5$.

 \item [$M \same `Nx . N $]
Then, by rule $(\negI)$, there exists $C$ such that $A = \neg C$, and $\derlmn `G,x`:C |- N : `B $.
Then, by induction, there are $`P'$ and $`S'$ such that $\ptL {N} = \LPair<`P';`B> $, and $`S'{`P'}\subseteq `G,x`:C $.
Then either:

 \begin{description}

 \item [$x \ele \fv(N)$]
Then $x`:C' \ele `P'$, and $\ptL {`Nx . N } = \LPair<`P' \Except x ; \neg C'> $.
Since $`S'{`P'}\subseteq `G,x`:C$, in particular $`S'{C'} = C$, $`S'{(`P' \Except x)}\subseteq `G $, and $`S'{(\neg C')} = \neg C$.
Take $`P = `P'\Except x$, $P = \neg C'$, and $`S = `S'$.

 \item [$x \notele \fv(N)$]
Then $\ptL {`Lx . N } = \LPair<`P'; \neg `v> $, $x$ does not occur in $`P'$ where $`v$ does not occur in $\LPair<`P';P'> $.
Since $`S'{`P'}\subseteq `G,x`:C$, in particular $`S'{`P'}\subseteq `G $.
Take $`S = `S' \after (`v \mapsto C)$, then, since $`v$ does not occur in $`P'$, also $`S{`P'}\subseteq `G $.
Notice that $`S (\neg `v) = \neg C$; take $`P = `P'$ and $P = \neg `v$.

 \end{description}

 \item [{$M \same [Q] R $}] 
Then $A=`B$ and by rule $(\negE)$ there exists a $B$ such that $\derlmn `G |- Q : \neg B $ and $\derlmn `G |- R : B $.
By induction, there are $`S_1, `S_2$, $\LPair<`P_1 ; P_1> = \ptL {Q} $ and $\LPair<`P_2 ; P_2> = \ptL {R} $ (no type variables shared) such that $`S_1 {`P_1} \subseteq `G $, $`S_2 {`P_2} \subseteq `G $, $`S_1 {P_1} = \neg B$ and $`S_2 {P_2} = B$.
Notice that $`S_1, `S_2$ do not interfere.
Let $`v$ be a fresh type variable and
 \[ \begin{array}[t]{rcl}	
`S_u &=& \unify{P_1} {\neg P_2} \\
`S_{C} &=& \unifyContexts\ (`S_u {`P_1})\ (`S_u {`P_2}) \\
\ptL { Q R } &=& `S_{C} \after `S_u {\LPair<`P_1\Union `P_2; `B>} 
\end{array} \]

As for the case $M = QR$, take $`S_3 = `S_2 \after `S_1 \after (`v \mapsto A) $, then $ `S_3 {P_1} = \neg B, $ and $`S_3 {P_2} = B$, so $P_1$ and $\neg P_2$ have a common instance $\neg B$ and $`S_u$ exists.
Since also $ `S_3 {`P_1} \subseteq `G$, and $`S_3 {`P_2} \subseteq `G$, as above an $`S_4$ exists such that
$`S_4 (`S_u {`P_1}) \subseteq `G$, and $`S_4 (`S_u {`P_2}) \subseteq `G$ and by Proposition~\ref{Robinson unification properties} there exists a substitution $`S_5$ such that $`S_4 = `S_5 \after `S_{`G} \after `S_u $.
Take $`S = `S_5$.

 \item [$M \same `M`a . N $]
Then, by rule $(`m)$, $\derlmn `G,`a`:\neg A |- N : `B $.
Then, by induction, there are $`P'$ and $`S'$ such that $\ptL {N} = \LPair<`P';`B> $, and $`S'{`P'}\subseteq `G,`a`:\neg A $.
Then either:

 \begin{description}

 \item [$`a \ele \fv(N)$]
Then $`a`:\neg C \ele `P'$, and $\ptL {`m`a . N } = \LPair<`P' \Except `a ; C> $.
Since $`S'{`P'}\subseteq `G,`a`:\neg A$, in particular $`S'{C} = A$ and $`S'{(`P' \Except `a)}\subseteq `G $.
Take $`P = `P'\Except `a$, $P = C'$, and $`S = `S'$.

 \item [$`a \notele \fv(N)$]
Then $\ptL {`M`a . N } = \LPair<`P'; `v> $, $`a$ does not occur in $`P'$ where $`v$ does not occur in $\LPair<`P';P'> $.
Since $`S'{`P'}\subseteq `G,`a`:\neg A$, in particular $`S'{`P'}\subseteq `G $.
Take $`S = `S' \after (`v \mapsto A)$, then, since $`v$ does not occur in $`P'$, also $`S{`P'}\subseteq `G $.
Notice that $`S (`v) = A$; take $`P = `P'$ and $P = `v$.

 \end{description}

 \item [{$M \same [`a] N $}] 
Then $A=`B$ and by rule $(\Name)$ there exists a $B$ such that  $`a`:\neg B \ele `G $ and $\derlmn `G |- N : B $.
By induction, there exists $`S_1$, $\LPair<`P' ; P'> = \ptL {N} $ such that $`S_1 {`P'} \subseteq `G $,  and $`S_1 {P'} = B$.
Then either:

 \begin{description}

 \item [$`a \ele \fv(N)$]
Let $`a`:\neg C \ele `P'$; take $`S_2 = \unify{C} {P'} $, then $ \ptL [`a]N = \LPair<`S_2 {`P'}; `B> $, and $`a`:\neg `S_2 {C} \ele `S_2 {`P'} $.
Since $`a`:\neg C \ele `P'$ and $`S_1 {`P'} \subseteq `G $, we have that $`S_1 {\neg C} = \neg B $ and $`S_1 {P'} = B$, so $`S_2$ is successful and there exists $`S_3$ such that $`S_1 = `S_3 \after `S_2$, so $`S_1 {`P'} = `S_3 (`S_2 {`P'}) \subseteq `G $. 
Take $`S = `S_3$.

 \item [$`a \notele \fv(N)$]
Then $ \ptL [`a]N = \LPair<`P',`a`:\neg P'; `B> $; take $`S = `S_1$.
\QED

 \end{description}
 \end{description}
 \end{proof}
This last result shows the practicality of our notion of type assignment.

 \section*{Conclusion and Future Work}
We have presented $\lmn$ as an extension of Parigot's $\lmu$-calculus by adding negation as a type constructor with the aim of representing proofs in $\TurnNI$, Classical Logic with implication, negation, and Proof by Contradiction.
We gained a more expressive calculus, that no longer represents $\neg A$ through $A\arr `B$, but, more importantly, negation elimination is no longer represented by application, and negation introduction not by $`l$-abstraction, but through new syntactic constructs that represent negation introduction and elimination directly, thus getting a more faithful representation of proofs in $\TurnNI$.

We defined a notion of reduction that extends $\lmu$'s logical and structural reduction rules with two new reduction rules, one dealing with a $(\negI){-}(\negE)$-pair, the other when Proof by Contradiction gets applied against an assumption that has a double negated type.
We showed that type assignment is sound, in that assignable types are preserved under reduction.
Using, as suggested by Py, Aczel's generalisation of Tait and Martin-L\"of's notion of parallel reduction, we showed that reduction is confluent.

By its nature, not all proofs in $\TurnNI$ can be represented in $\lmn$, but we have shown a completeness result in that all propositions that can be shown in $\TurnNI$ have a witness in $\lmn$.
Following Parigot, using Girard's approach of reducibility candidates, we have shown that all typeable terms are strongly normalisable, and that type assignment for $\lmn$ enjoys the principal typing property.

In all, $\lmn$ satisfies all the properties that can be demanded for a calculus claiming to represent proofs and proof contraction in $\TurnNI$.
By representing negation directly, it also severely enlarges the size of the set of {\nef} terms (a syntactic notion, not dependent on assigned types) to those \emph{really} not representing negation.
All these are important issues for the creation of theorem provers for Classical Logic.

Our motivation for our work was two-fold: enlarge the set of \nef-terms, and the fact that (implicative) $\lmu$ does not fully represent negation.
It is now fair to ask: ``Is $\lmn$ the finished product?''
In other words, can all logical connectors be expressed in $\lmn$?
Although it is well known that conjunction $A \And B$ can be expressed through $ \neg (A\arrow \neg B) $ and disjunction $A \Or B$ through $ \neg A\arr B $, this is not enough, since it does not answer the question of representability.
In particular, it seems that rule $(\OrE)$ cannot be represented in $\TurnNI$, since that would require a combination of a limited version of $\LEM$ and $(\OrE)$ through the rule
 \[ 
\Inf	{ \derlog `G,A |- C \quad \derlog `G,\neg A |- C 
	}{ \derlog `G |- C }
 \]
We will leave this issue for future work, as well as the study of {\CBV} reduction for $\lmn$.

 \subsection*{Acknowledgements}
I am very grateful to David Davies for asking me the question: ``How does $\lmu$ deal with negation?''

 \bibliographystyle{alphaurl}
 \bibliography{LMCS}

\begin{thebibliography}{MMMM20}

\bibitem[Acz78]{Aczel'78}
P.~Aczel.
\newblock {A general Church-Rosser theorem}.
\newblock Technical report, University Of Manchester, 1978.

\bibitem[AH03]{Ariola-Herbelin'03}
Z.M. Ariola and H.~Herbelin.
\newblock {Minimal Classical Logic and Control Operators}.
\newblock In J.C.M. Baeten, J.K. Lenstra, J.~Parrow, and G.J. Woeginger,
  editors, {\em Proceedings of Automata, Languages and Programming, 30th
  International Colloquium, ICALP 2003, Eindhoven, The Netherlands, June 30 -
  July 4, 2003}, volume 2719 of {\em Lecture Notes in Computer Science}, pages
  871--885. Springer Verlag, 2003.
\newblock \href {https://doi.org/10.1007/3-540-45061-0\_68}
  {\path{doi:10.1007/3-540-45061-0\_68}}.

\bibitem[AHS07]{AHS'07}
Z.M. Ariola, H.~Herbelin, and A.~Sabry.
\newblock {A Proof-Theoretic Foundation of Abortive Continuations}.
\newblock In {\em Proceedings of Higher-Order and Symbolic Computation, 2007},
  pages 403--429, 2007.

\bibitem[Bar84]{Barendregt'84}
H.~Barendregt.
\newblock {\em {The Lambda Calculus: its Syntax and Semantics}}.
\newblock North-Holland, Amsterdam, revised edition, 1984.
\newblock \href {https://doi.org/10.2307/2274112} {\path{doi:10.2307/2274112}}.

\bibitem[BHF01]{Baba-etal'01}
K.~Baba, S.~Hirokawa, and K.~Fujita.
\newblock {Parallel Reduction in Type Free Lambda-mu-Calculus}.
\newblock In {\em Computing: The Australasian Theory Symposium (CATS 2001)},
  volume~42 of {\em Electronic Notes in Theoretical Computer Science}, pages
  52--66, 2001.

\bibitem[Bra13]{Brady'13}
E.~Brady.
\newblock Idris, a general-purpose dependently typed programming language:
  {Design} and implementation.
\newblock {\em Journal of Functional Programming}, 23(5):552--593, 2013.
\newblock \href {https://doi.org/10.1017/S095679681300018X}
  {\path{doi:10.1017/S095679681300018X}}.

\bibitem[Bro07]{Brouwer'07}
L.E.J. Brouwer.
\newblock {\em {Over de Grondslagen der Wiskunde}}.
\newblock {PhD thesis}, Faculteit der Wiskunde en Informatica, Vrije
  Universiteit Amsterdam, Amsterdam, The Netherlands, 1907.

\bibitem[Bro08]{Brouwer'08}
L.E.J. Brouwer.
\newblock {De onbetrouwbaarheid der logische principes}.
\newblock {\em Tijdschrift voor Wijsbegeerte}, 2:152--158, 1908.

\bibitem[Bro75]{Brouwer'75}
L.E.J. Brouwer.
\newblock {Unreliability of the Logical Principles}.
\newblock In A.~Heyting, editor, {\em {Collected Works 1. Philosophy and
  Foundations of Mathematics}}. North-Holland, Amsterdam, 1975.

\bibitem[CF58]{Curry-Feys'58}
H.B. Curry and R.~Feys.
\newblock {\em {Combinatory Logic}}, volume~1.
\newblock North-Holland, Amsterdam, 1958.

\bibitem[{Coq}21]{CoqManual}
The {Coq Development Team}.
\newblock {\em {The Coq Reference Manual, Release 8.13.0}}, 2021.
\newblock URL:
  \url{https://github.com/coq/coq/releases/download/V8.13.0/coq-8.13.0-reference-manual.pdf}.

\bibitem[Cur34]{Curry'34}
H.B. Curry.
\newblock {Functionality in Combinatory Logic}.
\newblock In {\em Proc. Nat. Acad. Sci. U.S.A}, volume~20, pages 584--590,
  1934.

\bibitem[dG94]{deGroote'94}
Ph. de~Groote.
\newblock {On the Relation between the $\lambda\mu$-Calculus and the Syntactic
  Theory of Sequential Control}.
\newblock In {\em Proceedings of 5th International Conference on Logic for
  Programming, Artificial Intelligence, and Reasoning (LPAR'94)}, volume 822 of
  {\em Lecture Notes in Computer Science}, pages 31--43. Springer Verlag, 1994.
\newblock \href {https://doi.org/10.1007/3-540-58216-9\_27}
  {\path{doi:10.1007/3-540-58216-9\_27}}.

\bibitem[DvBW21]{Davies-etal'21}
D.~Davies, S.~van Bakel, and N.~Wu.
\newblock {Candid: A Dependently Typed Programming Language with Control
  Operators for Classical Logic}.
\newblock Manuscript, 2021.

\bibitem[FH92]{Felleisen-Hieb'92}
M.~Felleisen and R~Hieb.
\newblock {The revised report on the syntactic theories of sequential control
  and state}.
\newblock {\em Theoretical Computer Science}, 103(2), 1992.

\bibitem[Gen35]{Gentzen'35}
G.~Gentzen.
\newblock {Untersuchungen {\"u}ber das Logische Schliessen}.
\newblock {\em Mathematische Zeitschrift}, 39(2):176--210 and 405--431, 1935.

\bibitem[Geu09]{Geuvers'09}
H.~Geuvers.
\newblock {Proof assistants: History, ideas and future}.
\newblock {\em S\~{a}dhan\~{a}}, 34:3--25, 2009.

\bibitem[Gir71]{Girard'71}
J.-Y. Girard.
\newblock {Une extension de l'interpr\'etation de G\"odel \`a l'analyse, et son
  application \`a l'\'elimination des coupures dans l'analyse et la th\'eorie
  des types.}
\newblock In J.~Fenstad, editor, {\em 2nd Scandinavian Logic Symposium}, pages
  63--92. North Holland, 1971.

\bibitem[Gri90]{Griffin'90}
T.~Griffin.
\newblock {A formulae-as-types notion of control}.
\newblock In {\em Proceedings of the 17th Annual {ACM} Symposium on Principles
  of Programming Languages, {Orlando (Fla., USA)}}, pages 47--58, 1990.

\bibitem[Her05]{Herbelin'05}
H.~Herbelin.
\newblock {On the Degeneracy of Sigma-Types in Presence of Computational
  Classical Logic}.
\newblock In P.~Urzyczyn, editor, {\em Typed Lambda Calculi and Applications,
  7th International Conference, {TLCA} 2005, Nara, Japan, April 21-23, 2005,
  Proceedings}, volume 3461 of {\em Lecture Notes in Computer Science}, pages
  209--220. Springer, 2005.
\newblock \href {https://doi.org/10.1007/11417170\_16}
  {\path{doi:10.1007/11417170\_16}}.

\bibitem[Her12]{Herbelin'12}
H.~Herbelin.
\newblock {A Constructive Proof of Dependent Choice, Compatible with Classical
  Logic}.
\newblock In {\em Proceedings of the 27th Annual {IEEE} Symposium on Logic in
  Computer Science, {LICS} 2012, Dubrovnik, Croatia, June 25-28, 2012}, pages
  365--374. {IEEE} Computer Society, 2012.
\newblock \href {https://doi.org/10.1109/LICS.2012.47}
  {\path{doi:10.1109/LICS.2012.47}}.

\bibitem[How80]{Howard'80}
W.A. Howard.
\newblock {The Formula-as-Types Notion of Construction}.
\newblock In J.P. Seldin and J.R. Hindley, editors, {\em To H. B. Curry, Essays
  in Combinatory Logic, Lambda-Calculus and Formalism}, pages 479--490.
  Academic press, New York, 1980, 1980.

\bibitem[Mar84]{Martin-Lof'84}
P.~Martin{-}L{\"{o}}f.
\newblock {\em Intuitionistic type theory}, volume~1 of {\em Studies in proof
  theory}.
\newblock Bibliopolis, 1984.

\bibitem[Men64]{Mendelson'64}
E.~Mendelson.
\newblock {\em {Introduction to Mathematical Logic}}.
\newblock Chapman and Hall, London, and J. Wiley, New York, 1964.

\bibitem[MMMM20]{Miquey'20}
{\'E}.~Miquey, X.~Montillet, and G.~Munch-Maccagnoni.
\newblock {Dependent Type Theory in Polarised Sequent Calculus}.
\newblock Draft, 2020.

\bibitem[Nor07]{Norell'07}
U.~Norell.
\newblock {\em Towards a practical programming language based on dependent type
  theory}.
\newblock PhD thesis, Department of Computer Science and Engineering, Chalmers
  University of Technology, SE-412 96 G\"{o}teborg, Sweden, September 2007.

\bibitem[OS97]{Ong-Stewart'97}
C.-H.L. Ong and C.A. Stewart.
\newblock {A Curry-Howard foundation for functional computation with control}.
\newblock In {\em Proceedings of the 24th Annual {ACM} Symposium on Principles
  Of Programming Languages}, pages 215--227, 1997.

\bibitem[Par92]{Parigot'92}
M.~Parigot.
\newblock {An algorithmic interpretation of classical natural deduction}.
\newblock In {\em Proceedings of 3rd International Conference on Logic for
  Programming, Artificial Intelligence, and Reasoning (LPAR'92)}, volume 624 of
  {\em Lecture Notes in Computer Science}, pages 190--201. Springer Verlag,
  1992.
\newblock \href {https://doi.org/10.1007/BFb0013061}
  {\path{doi:10.1007/BFb0013061}}.

\bibitem[Par93a]{Parigot-Brno'93}
M.~Parigot.
\newblock {Classical Proofs as Programs}.
\newblock In {\em Kurt G{\"o}del Colloquium}, pages 263--276, 1993.
\newblock Presented at TYPES Workshop, at B\v astad, June 1992.

\bibitem[Par93b]{Parigot'93}
M.~Parigot.
\newblock {Strong Normalization for Second Order Classical Natural Deduction}.
\newblock In {\em Proceedings of Eighth Annual IEEE Symposium on Logic in
  Computer Science, Montreal, Canada}, pages 39--46, 1993.

\bibitem[Par97]{Parigot'97}
M.~Parigot.
\newblock {Proofs of Strong Normalisation for Second Order Classical Natural
  Deduction}.
\newblock {\em Journal of Symbolic Logic}, 62(4):1461--1479, December 1997.
\newblock \href {https://doi.org/10.2307/2275652} {\path{doi:10.2307/2275652}}.

\bibitem[Pfe92]{Pfenning'92}
F.~Pfenning.
\newblock {A Proof of the Church-Rosser Theorem and its Representation in a
  Logical Framework}.
\newblock Technical Report {CMU}-{CS}-92-186, Carnegie Mellon University,
  Pittsburgh, 1992.

\bibitem[Pra65]{Prawitz'65}
D.~Prawitz.
\newblock {\em {Natural Deduction, A Proof-Theoretical Study}}.
\newblock Almqvist \& Wiksell, 1965.

\bibitem[Py98]{Py-PhD'98}
W.~Py.
\newblock {\em {Confluence en $\lambda\mu$-calcul}}.
\newblock Th\`ese de doctorat, Universit\'e de Savoie, 1998.

\bibitem[Rob65]{Robinson'65}
J.A. Robinson.
\newblock {A Machine-Oriented Logic Based on Resolution Principle}.
\newblock {\em Journal of the ACM}, 12(1):23--41, 1965.

\bibitem[Sau08]{Saurin'08}
A.~Saurin.
\newblock {On the Relations between the Syntactic Theories of
  $\lambda\mu$-Calculi}.
\newblock In M.~Kaminski and S.~Martini, editors, {\em Computer Science Logic,
  22nd International Workshop (CSL'08), Bertinoro, Italy}, volume 5213 of {\em
  Lecture Notes in Computer Science}, pages 154--168. Springer Verlag,
  September 16-19 2008.

\bibitem[SR98]{Streicher-Reus'98}
Th. Streicher and B.~Reus.
\newblock {Classical logic: Continuation Semantics and Abstract Machines}.
\newblock {\em Journal of Functional Programming}, 11(6):543--572, 1998.
\newblock \href {https://doi.org/10.1007/BFb0026995}
  {\path{doi:10.1007/BFb0026995}}.

\bibitem[Sum08]{Summers-PhD'08}
A.J. Summers.
\newblock {\em {Curry-Howard Term Calculi for Gentzen-Style Classical Logic}}.
\newblock {PhD thesis}, Imperial College London, 2008.

\bibitem[Sza69]{Gentzen'69}
M.E. Szabo, editor.
\newblock {\em {The Collected Papers of Gerhard Gentzen}}.
\newblock Studies in Logic and the Foundations of Mathematics. North-Holland,
  1969.

\bibitem[Tai67]{Tait'67}
W.~Tait.
\newblock {Intensional Interpretations of Functionals of Finite Type I}.
\newblock {\em Journal of Symbolic Logic}, 32(2):198--212, 1967.
\newblock \href {https://doi.org/10.2307/2271658} {\path{doi:10.2307/2271658}}.

\bibitem[TB85]{Trybulec-B'85}
A.~Trybulec and H.A. Blair.
\newblock {Computer Assisted Reasoning with MIZAR}.
\newblock In A.K. Joshi, editor, {\em Proceedings of the 9th International
  Joint Conference on Artificial Intelligence. Los Angeles, CA, USA, August
  1985}, pages 26--28. Morgan Kaufmann, 1985.

\bibitem[vB18]{Bakel-ToCL'18}
S.~van Bakel.
\newblock {Characterisation of Normalisation Properties for $\lambda\mu$ using
  Strict Negated Intersection Types}.
\newblock {\em ACM Transactions on Computational Logic}, 19, 2018.
\newblock \href {https://doi.org/10.1145/3149823} {\path{doi:10.1145/3149823}}.

\bibitem[vB19]{Bakel-PPDP'19}
S.~van Bakel.
\newblock {Exception Handling and Classical Logic}.
\newblock In E.~Komendantskaya, editor, {\em Proceedings of the 21st
  International Symposium on Principles and Practice of Programming Languages,
  {PPDP} 2019, Porto, Portugal, October 7-9, 2019}, pages 21:1--21:14. {ACM},
  2019.
\newblock \href {https://doi.org/10.1145/3354166.3354186}
  {\path{doi:10.1145/3354166.3354186}}.

\bibitem[vBBd18]{Bakel-Barbanera-Liguoro-LMCS'18}
S.~van Bakel, F.~Barbanera, and U.~de'Liguoro.
\newblock {Intersection Types for the $\lambda\mu$-calculus}.
\newblock {\em Logical Methods in Computer Science}, 141(1), 2018.
\newblock \href {https://doi.org/10.23638/LMCS-14(1:2)2018}
  {\path{doi:10.23638/LMCS-14(1:2)2018}}.

\bibitem[vP08]{Plato'08}
J.~von Plato.
\newblock {Gentzen's Proof of Normalization for Natural Deduction}.
\newblock {\em Bull. Symb. Log.}, 14(2):240--257, 2008.
\newblock \href {https://doi.org/10.2178/bsl/1208442829}
  {\path{doi:10.2178/bsl/1208442829}}.

\bibitem[Wad15]{Wadler'15}
P.~Wadler.
\newblock Propositions as types.
\newblock {\em Communications of the {ACM}}, 58(12):75--84, 2015.
\newblock \href {https://doi.org/10.1145/2699407} {\path{doi:10.1145/2699407}}.

\end{thebibliography}

 \end{document}